\xpatchcmd{\proof}{\itshape}{\bfseries\proofnameformat}{}{}
\newcommand{\proofnameformat}{}
\newcommand{\ra}{{\rightarrow}}
\newcommand{\R}{{\mathbb R}}
\newcommand{\C}{{\mathcal C}} 
\newcommand{\E}{{\mathbb E}}
\newcommand{\A}{\mathcal{A}}
\newcommand{\bi}{\begin{itemize}}
\newcommand{\ei}{\end{itemize}}
\DeclareMathOperator*{\argmax}{arg\,max}
\newtheorem{theorem}{Theorem}
\newtheorem{lemma}{Lemma}
\newtheorem{proposition}{Proposition}
\begin{document}

\title{Mechanism Design with News Utility\footnote{I am thankful to Drew Fudenberg, Matthew Rabin and Tomasz Strzalecki for continuous support in this project as well as to Fabian Herweg for introducing me to this topic. I thank Daniel Clark and Kevin He for numerous comments as well as Arjada Bardhi, Krishna Dasaratha, Jerry Green, Annie Liang and Eric Maskin for comments during different stages of this project. Any errors are mine.}}
\date{\vspace{-5ex}}

\author{Jetlir Duraj\footnote{duraj@g.harvard.edu}}
\maketitle

\begin{abstract}
News utility is the idea that 
the utility of an agent also depends on changes in her beliefs over consumption and money. We introduce news utility into otherwise classical static Bayesian mechanism design models. We show that a key role is played by the timeline of the mechanism, i.e. whether there are delays between the announcement stage, the participation stage, the play stage and the realization stage of a mechanism. Depending on the timing, agents with news utility can experience two additional news utility effects: a surprise effect derived from comparing to pre-mechanism beliefs, as well as a realization effect derived from comparing post-play beliefs with the actual outcome of the mechanism. 

We look at two distinct mechanism design settings reflecting the two main strands of the classical literature. In the first model, a monopolist screens an agent according to the magnitude of her loss aversion. In the second model, we consider a general multi-agent Bayesian mechanism design setting where the uncertainty of each player stems from not knowing the intrinsic types of the other agents. We give applications to auctions and public good provision which illustrate how news utility changes classical results.

For both models we characterize the optimal design of the timeline. A timeline featuring no delay between participation and play but a delay in realization is never optimal in either model. In the screening model the optimal timeline is one without delays. In auction settings, under fairly natural assumptions the optimal timeline may have delays between all three stages of the mechanism.





\end{abstract}

\section{Introduction}\label{sec:intro}

Most situations in practice to which the theory of classical static Bayesian mechanism design is applicable can be thought of as consisting of three distinct stages: first, the mechanism is announced to the agents and the agents decide whether to participate; second, the agents decide what to play in the mechanism and finally, the mechanism outcome consisting of a consumption allocation and money transfers to the designer is realized. 

Classical models of Bayesian mechanism design generally assume that the agents possess quasi-linear utility and are Expected Utility maximizers. Absent discounting issues the analysis is the same in the classical model for the cases where the above mentioned stages may happen with delay from each other.\footnote{Besides strategic use of delays by a designer, exogenously given delays due to technological constraints between stages where uncertainty persists in the agents' minds are a recurrent feature of life: goods need to be produced, information must travel, etc.} This is because the agents in the classical model are time-consistent as well as insensitive to the timing of the realization of uncertainty. This paper characterizes mechanism design for agents who violate the last two assumptions in a specific way: they experience \emph{news utility} and \emph{loss aversion}. These two features make the agents sensitive to whether uncertainty is resolved with delay and lead to time-inconsistent behavior.
We look at the case of news-utility agents who are sophisticated with regard to this time-inconsistency. These assumptions lead to considerable differences to classical analysis: besides changes in several key intuitions from classical settings, the issue of the optimal design of the timeline of the mechanism becomes salient.

More precisely, in this paper we assume agents possess quasi-linear \emph{intrinsic} utility and add the innovation that their utilities depend on changes in their beliefs over consumption and money (henceforth called \emph{news utility}). News utility is assumed separable in the good and money dimension and the comparison of new to old beliefs uses a classical gain-loss function featuring loss aversion (see e.g. \cite{kt}) -- relative to good news, utility losses from `bad news' are compounded due to loss aversion. The agents in this paper are forward-looking with respect to both intrinsic and news utility as well as sophisticated about their future behavior. We assume news utility is produced only from objective sources, i.e. there is no self-production of news utility and that each agent takes into account future news utility in expectation in any decision instance. Just as for the classical part of the utility, the belief over future consumption and money used to weigh news utility is induced by an agent's play and other random factors in the environment. 

We consider the consequences of such preferences in two different Bayesian mechanism design settings reflecting two of the main strands of the classical literature: monopolistic screening of a single agent as well as multi-agent mechanism design such as auction or public goods settings. In the screening model we assume the uncertainty facing the monopolist concerns a behavioral parameter of the agent: her loss aversion level in the good and money dimension. We also assume that the agent learns her intrinsic type only upon consumption. This simple model already matches many situations in the real world where the intrinsic value of a consumption good is discovered only upon consumption. In the multi-agent model we assume that all `behavioral' features of the preferences are common knowledge, the agents know their intrinsic type when presented with the mechanism, and an agent's uncertainty only comes from not knowing the intrinsic type of the other agents. Therefore, the informational side of the multi-agent model is a straightforward extension of the classical multi-agent Bayesian mechanism design model with quasilinear utility, whose prime examples in the literature are auction settings or provision of public goods.

In stark contrast to the classical setting, with the new preferences it matters whether there are delays between stages of a mechanism so that we distinguish three main timelines for the analysis.\footnote{Additional timelines are equivalent to the ones presented here under two  simplifying assumptions: no-discounting of utility and the the designer cannot randomize. See subsection \ref{sec:prefs} and Proposition \ref{thm:oneagenthelp} for more details. Figure \ref{fig:screening} from subsection \ref{sec:screeningsetup} and Figure \ref{fig:tl} from subsection \ref{sec:multiagentsetup} respectively depict in detail the timelines for the two different mechanism design models.}

In timeline A, the mechanism is implemented without delay: the announcement of its existence, the decision to participate and what to play happen almost concurrently so that each agent only experiences one bout of news utility (dubbed \emph{surprise effect}) coming from comparing the pre-mechanism beliefs in the consumption and money dimensions pinned down by her outside option with the new (degenerate) beliefs induced by the realization of the mechanism outcome. 

In timeline B the participation and play stages of the mechanism happen without delay but the mechanism outcome is realized with delay. Besides the surprise effect the agent experiences now a second bout of news utility coming from comparing the actual realization of the mechanism outcome with the distribution induced by her play decision and the environment (dubbed \emph{realization effect}). She takes this into account in expectation at the play stage. This results in lower interim utility because of loss aversion: delaying the outcome after a play decision hurts the agent because bad news hurt more in expectation than good news elate. 

Finally, in timeline C there is a delay between the participation stage and the play stage, besides the delay between the play stage and the moment the mechanism outcome is realized. In this case the agent's time inconsistency becomes observable as different selves with different objectives decide on participation and play. The play-self doesn't take into account the surprise effect of the participation self whereas the latter takes into account the future behavior of the play-self. This wedge between participation and play can be fruitfully used by the designer in certain situations as she may be able to exploit the play-self better once she is locked-in after deciding to take part in the mechanism. This comes at a cost though: in optimal mechanisms the participation-self, being sophisticated and anticipating her future decisions, may need to be subsidized in comparison to the other timelines. 

Say that a direct mechanism is \emph{incentive compatible} if revealing own private information is an equilibrium for the respective timeline of the mechanism. Similar to the classical setting, monotonicity conditions are key in characterizing incentive compatibility and the resulting expected transfers. But due to news utility, monotonicity applies to modified interim \emph{perceived valuations} instead of the interim intrinsic valuations from the classical setting. As a consequence, incentive compatible allocation rules only determine \emph{perceived expected transfers} to the designer, up to type-independent constants. 

Say that an incentive compatible mechanism is \emph{individually rational} if it gives an equilibrium participation utility higher than the outside option to every agent. Whereas the same self decides about incentive compatibility and individual rationality whenever there is no delay between participation and play decisions (timelines A and B), different selves decide on them when there is a delay between the two decision stages (timeline C).

For allocation rules whose incentive compatibility is unproblematic in the classical model, loss aversion can lead to failure of incentive compatibility whenever it is high enough and the timeline features delays. We illustrate this in the case of the ex-post efficiency rule for public good provision with symmetric agents and private information concerning the intrinsic type.\footnote{Ex-post efficiency means the designer would like to maximize the welfare of the agents under complete information.} We also show how incentive compatibility may be restored in that setting under certain conditions on the distribution of the intrinsic valuation of the public good, whenever the number of agents is high enough. Intuitively, with a large number of agents the law of large numbers kicks in and the ex-post efficiency rule implies small news utility costs for delays as the probability of provision becomes either very high or very low.

For the screening model we show that a timeline without any delays (timeline A) is optimal whenever the private information of the agent, her loss aversion, is symmetric across the two dimensions, consumption and money. Furthermore, a timeline without delays between participation and play decision is always weakly better than one with delays between the two decision stages (timeline C). This ranking relies on the assumed symmetry for the loss aversion across dimensions, which implies a relatively small negative surprise effect in the money dimension at the moment of the participation decision, as well as on the fact that the subsidy for the individual rationality constraint in timeline C is non-negligible. 

For the multiple agent model where private information concerns intrinsic valuations we study the optimality of timelines in the case of auctions with symmetric agents. We show that, analogous to the screening model, timeline B is never optimal as it is dominated by a timeline A. We show by example that timeline C may be optimal whenever the lowest intrinsic type is strictly higher than the highest utility of the outside option of an agent. Intuitively, this is precisely the case when the agent has maximal participation incentives. This lowers the subsidy needed to overcome the incentive wedge between the participation-self and the reporting-self. When compared to A, timeline C has an improved incentive compatibility property due to two effects: 1) the missing negative surprise in the money dimension, and 2) the designer can condition payments on the uncertainty facing the agents after their declared type in such a way as to give better incentives for truthtelling. The second effect is costly for each agent due to loss aversion, but may result in higher revenue overall due to improved incentive compatibility. Overall, when the loss aversion in the money is high enough, so that the first effect is strong, timeline C may dominate a timeline A. In the same example and for the case that loss aversion in the money dimension is low enough we show that a timeline A may remain optimal.




The rest of the paper is organized as follows. In the next subsection we introduce the model for a single agent which is the building block for both of the mechanism design models we consider. Subsection \ref{sec:relatedlit} comments on related literature. Section \ref{sec:screening} contains the one-agent screening model for loss aversion followed by Section \ref{sec:multiagent} which considers the multi-agent model where agents face uncertainty about the intrinsic types of their opponents. Proofs for the main results of the paper are relegated in the appendix. The online appendix comments on the revelation principle in our setting and relaxes the assumption of a degenerate outside option in the money dimension. It also contains several other applications of the multi-agent model to other classical Bayesian mechanism design settings.  

\subsection{Preferences}\label{sec:prefs}




This subsection explains in detail the preferences and the decision procedure of the agents. 
Agents experience intrinsic utility from actual consumption and payments as well as news utility from changes in beliefs. 

\paragraph{Intrinsic utility.}

The agent derives intrinsic utility from consumption of a profile $a$ of consumption goods coming from a set of physical allocations $\mathcal{A}$ which is a closed, connected subset of an Euclidean space $\R^d,d\ge 1$ as well as from a monetary transfer $t$. Moreover, the utility of the agent from the pair of consumption goods and monetary payment $(a,t)$ depends also on a parameter which we call her \emph{intrinsic type} $\theta$ and which comes from a closed interval of $\R$ denoted $\Theta = [\underline{\theta}, \bar \theta]$. We assume that intrinsic utility for each agent with type $\theta$ is quasilinear and of the form
\begin{equation}\label{eq:intrinsic}
V(a,\theta,t) = v(a)\theta-t.
\end{equation}
$v:\mathcal{A}\ra\R_+$ is assumed differentiable. Let in the following $\Delta(\A\times \Theta\times\R)$ denote the set of Borel probability distributions over $\A\times \Theta\times\R$.

We assume the agent conforms to Expected Utility in the intrinsic part of her utility. That is, if the uncertainty over $u=(a,\theta, t)$ is captured by a distribution $G\in \Delta(\A\times \Theta\times\R)$ then the agent experiences \emph{expected intrinsic utility} of the form 
\[
\E_{u\sim G}[V(u)].
\]
Hereafter, $u\sim G$ means the random variable $u$ is distributed according to the distribution $G$.
\paragraph{News utility.}
Besides intrinsic utility, the agent experiences news utility whenever her beliefs about the realization of $u$ objectively change from $H$ to some $G\in \Delta(\A\times \Theta\times\R)$. News utility is experienced in two dimensions: in the consumption dimension (upper index $g$ in the following, $g$ stands for \emph{good}), and in the monetary dimension (upper index $m$ in the following). For any $G\in \Delta(\A\times \Theta\times\R)$ denote by $G^g$ the distribution of $v(a)\theta$ induced by $G$. This depends only on the marginal distribution of $G$ on $\A\times \Theta$. Moreover, let $G^m$ be the marginal distribution of $G$ on the monetary payments. Whenever the agent's belief changes from $G$ to $H$ she experiences news utility in the dimension $j=g,m$ given by 
\begin{equation*}
\mathcal{N}^j(G^j|H^j) = \mu^j\int_0^1 \xi^j(c_{G^j}(p)-c_{H^j}(p))dp.
\end{equation*}
Here for any real-valued distribution $F$, $c_F(p)$ is the $p$-percentile of $F$, $p\in (0,1)$.\footnote{For any $p\in (0,1)$ the \emph{$p$-percentile $c_F(p)$ of $F$} is determined by the conditions $F(c_F(p))\geq p$ and $F(c)<p$ for any $c<c_F(p)$.}
Moreover, 
\begin{equation}
\label{eq:valkt}
\xi^j(y) = 
\begin{cases} y &\mbox{if } y \geq 0 \\
\lambda^j y &\mbox{if } y < 0 
\end{cases}
\end{equation}
is a value function of Kahneman-Tversky type (\cite{kt}).\footnote{This percentile-per-percentile comparison first appeared in \cite{kr3}. See \cite{pagel} for alternative specifications of the news utility which can incorporate correlation between dimensions.} Here $\lambda^j>1$ is the loss aversion parameter of the agent in the dimension $j$. A negative change causes the agent to experience a disutility greater in magnitude than the elation caused by a positive change of the same size.

$\mu^j>0$ is the agent's relative weight on the news utility in dimension $j$. Overall, news utility of the change from $H$ to $G$ is given through the sum of the news utilities of the two dimensions. 

\[
\mathcal{N}(G|H) = \mathcal{N}^g(G^g|H^g) + \mathcal{N}^m(G^m|H^m).
\]

The timelines we have in mind are the following. First, the agent is offered a menu of lotteries $\C$ which is a subset of $\Delta(\A\times \Theta\times\R)$. We assume this is a non-empty compact set of $\Delta(\A\times \Theta\times\R)$ where we have equipped the latter with topology of weak convergence of probability measures (this fits both mechanism design models below). Then she picks a lottery from $\C$ and finally the outcome of the lottery is realized. There may exist delays between the moment the menu $\C$ is offered to the agent and the moment she chooses from $\C$ as well as the moment the uncertainty from the lottery she picked from $\C$ is realized. These delays may be due to technological constraints or they may be introduced through the outside party, call it designer, which designs the menu $\C$.\footnote{Some typical examples of technological constraints comprise settings where communication takes time or where delivery of payments/goods or production of a good whose consumption value is uncertain takes time.} \footnote{In the mechanism design settings we consider not every menu $\C$ out of $\Delta(\A\times \Theta\times\R)$ is feasible. In particular, we don't allow the designer to randomize so that the subjective randomness the agents face is only due to the environment.}
We assume that there is no discounting of time whenever a delay is present.\footnote{It is not hard to introduce discounting to the model but discounting doesn't yield any additional deep insight besides making the model much more cumbersome.} 

The timelines the agent may face depending on the timing of delays are depicted in Figure \ref{fig:oneagent}.\footnote{Additional timelines featuring delay between the announcement of the existence of $\C$ and the moment the agent is required to decide on whether to accept $\C$ or not are equivalent to existing ones under the no-discounting assumption by the same argument as we show for timelines C and D below (see Proposition \ref{thm:oneagenthelp}). Those timelines would be relevant in a model of \emph{`deciding when to decide'} which is outside the scope of this paper.} 

\begin{figure}[H]
  \centering
   \includegraphics[width=0.58\textwidth]{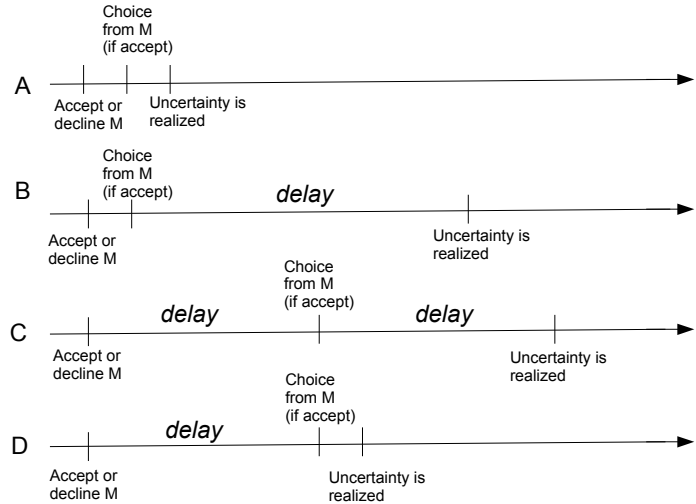}
    \caption{Different timelines.}\label{fig:oneagent}
\end{figure}


In the following we also assume that the agent has an exogenously given belief over $\A\times \Theta\times\R$ given by $F_0$. This is what she expects to happen if she is not notified of the option of choosing from $\C$. In mechanism design settings below $F_0$ is determined by the outside option of the agent.

To calculate the overall utility of an agent we impose the following assumptions on the agent's behavior.

\paragraph{\emph{Assumption 1:}} The agent is sophisticated, Bayesian and forward-looking. That is, she takes into account the optimal behavior of future selves, uses Bayes rule to update beliefs whenever possible and is indifferent to the welfare of past selves.
\vspace{2mm}\\
This means that when deciding whether to accept the menu $\C$ the agent takes into account the actual choice she will make from $\C$. Moreover, at each decision moment after a delay no past intrinsic or news utility is taken into account. We note here that Assumption 1 doesn't necessarily imply a temporal coordination of selves in the sense that the self picking from $\C$ needs to break ties in favor of the self who decides to accept or reject $\C$.\footnote{An idea of coordination among selves to give a past self a higher utility underlies the PPE concept in \cite{kr1}.} 

\paragraph{\emph{Assumption 2:}} News utility is produced only from \emph{objective} sources, that is, there is no self-production of news utility. In particular, the agent conforms with Expected Utility at each moment in time with respect to any subjective randomization device.\footnote{The related PPE concept used in \cite{kr1} (see also \cite{kr3}) to model agents who experience expectation-based loss aversion allows for the possibility of self-production of news utility and therefore adds an additional constraint to the maximization problem of the agent. This may lead to existence and characterization problems (see \cite{koeszegi}). These are excluded by our assumptions.} 
\vspace{2mm}\\
Under this assumption it is unproblematic to assume that the agent doesn't possess a randomization device when picking from $\C$.

\paragraph{\emph{Assumption 3:}} At any moment in time, if no delay is present there is at most one new news utility term. It comes from comparing beliefs before an objective source of news with those after the objective source of news.
\vspace{2mm}\\
Assumptions 2 and 3 imply that there are at most two instances of production of news utility in the above timelines: 

1) when the menu $\C$ is presented to the agent if she decides to accept it \emph{(surprise effect)}, as well as

2) when the uncertainty of the lottery she picked from the menu is realized \emph{(realization effect)}. 

Assume the agent ultimately chooses $F\in \C$. Then the surprise effect in timelines B,C,D corresponds to experiencing $\mathcal{N}(F|F_0)$.  The realization effect for timelines B,C and D corresponds to experiencing $\mathcal{N}(u|F)$ whenever $u\in \A\times\Theta\times\R$ is the realization of $F$. 

In timeline A the two news utility effects coincide and there is only one news utility term comprised of $\mathcal{N}(u|F_0)$ whenever $u\in \A\times\Theta\times\R$ is realized. This is because timeline A stands for the case where the decisions of whether to accept $\C$, what lottery to pick out of $\C$ and the realization of the resulting uncertainty all happen without delay and almost concurrently.


\paragraph{\emph{Assumption 4:}} Each agent in a decision moment takes into account future news utility terms in expectation by weighting them with the belief induced by her actual decision be it a decision \emph{on or off}-equilibrium path.\footnote{Just as in the case of Assumption 2 this is in stark difference to the PPE solution concept in \cite{kr1} and \cite{kr3}.}
\vspace{2mm}\\
This implies for an agent in timelines $B,C,D$ who accepted the menu $\C$, that the expectation of her news utility from the realization effect whenever she picks $F\in \C$ is given by $\E_{u\sim F}[\mathcal{N}(u|F)]$. We call such a term an \emph{expected news utility term}. 

\paragraph{Decision procedure of a single agent.}

In timeline A the expected news utility from accepting the menu $\C$ and choosing $F\in \C$ enters the overall decision utility of the agent as $\E_{u\sim F}[\mathcal{N}(u|F_0)]$.

For the case that the agent doesn't accept menu $\C$ she experiences utility $$O(F_0) = \E_{u\sim F_0}[V(u)]+\E_{u\sim F_0}[\mathcal{N}(u|F_0)],$$
regardless of whether the realization of $F_0$ happens with a delay or not.\footnote{$O$ stands for \emph{outside option}.} \footnote{She experiences only one news utility term given by $\mathcal{N}(u|F_0)$ whenever $u$ is realized and takes expectation of it by weighting with $F_0$. This is because $F_0$ is not a surprise, i.e. it is expected by the agent at the beginning of time.} For the case that $F_0$ is degenerate, say puts probability one on $u_0$, we have $O(F_0) = V(u_0)$.

Whenever the agent accepts the menu, she experiences in timeline A the decision utility 

\[
\max_{F\in \C}\E_{u\sim F}[V(u)]+\E_{u\sim F}[\mathcal{N}(u|F_0)].
\]
She accepts $\C$ if and only if this utility is higher than $O(F_0)$.

In timeline B she accepts $\C$ if and only if her decision utility from $\C$ given by 
\begin{equation}\label{eq:oneagent1}
\max_{F\in \C}\E_{u\sim F}[V(u)]+\E_{u\sim F}[\mathcal{N}(u|F_0)] + \E_{u\sim F}[\mathcal{N}(u|F)]
\end{equation}
is higher than $O(F_0)$. She then picks an $F$ from $\C$ where the maximum is attained.

In timelines C and D different selves of the agent with distinct perspectives decide on whether $\C$ should be accepted and then on the lottery picked out of $\C$. For the case that the agent has accepted $\C$ her decision utility from choosing out of $\C$ is

\[
\max_{F\in \C}\E_{u\sim F}[V(u)]+\E_{u\sim F}[\mathcal{N}(u|F)].
\]

Being sophisticated, she then accepts $\C$ if and only if her decision utility $\E_{u\sim F}[V(u)]+\E_{u\sim F}[\mathcal{N}(u|F_0)] + \E_{u\sim F}[\mathcal{N}(u|F)]$, evaluated at the $F$ she expects to choose out of $\C$, is higher than $O(F_0)$. If there are multiple $F\in \C$ which are optimal we assume she breaks ties deterministically and that her self at the moment of deciding whether to accept $\C$ anticipates the tie-breaking correctly. This is in line with Assumption 1 above.

Formally, we assume the following condition about possible tie-breaking.

\paragraph{Deterministic tie-breaking.} Whenever the agent is indifferent between accepting $\C$ and rejecting, she accepts it. Whenever the agent is indifferent between distinct elements of $\C$ at the choice-out-of-menu stage she breaks ties deterministically.\footnote{Deterministic tie-breaking is in line with Assumption 2 above. We assume it here so that we can focus on the behavioral features of the model rather than technicalities. 
} \footnote{All settings considered in this paper correspond to menus whose elements are parametrized through a compact one-dimensional interval so that it is easy to write down deterministic tie-breaking rules.}
\vspace{2mm}\\
Our Assumptions together with deterministic tie-breaking ensure a full characterization of behavior. 
It is easy to see that the behavior in timeline D is the same as in timeline C. The following Proposition registers this property as well as another simple one which has important consequences in mechanism design settings.

\begin{proposition}\label{thm:oneagenthelp}
1) Expected news utility terms of the form $\E_{u\sim F}[\mathcal{N}(u|F)]$ are non-positive and vanish if and only if $F$ puts unit mass on a single element $u_0$.

2) The behavior of the agent is the same in timelines C and D. 
\end{proposition}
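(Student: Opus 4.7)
For the first part, the approach is to decompose the news utility along its two dimensions and analyze each using the quantile representation. For $j \in \{g,m\}$, let $X_j$ denote the one-dimensional projection of $u$ whose distribution is $F^j$ (so $X_g = v(a)\theta$ and $X_m = t$); since $c_{\delta_{X_j}}(p) = X_j$ for all $p$, Fubini gives
\[
\E_{u\sim F}[\mathcal{N}^j(\delta_{X_j}|F^j)] = \mu^j \int_0^1 \E_{X_j\sim F^j}\bigl[\xi^j(X_j - c_{F^j}(p))\bigr]\,dp.
\]
The key pointwise identity will be $\xi^j(y) = y + (\lambda^j-1)\min(y,0)$, yielding
\[
\E_{X_j}[\xi^j(X_j - c_{F^j}(p))] = \bigl(\E[X_j] - c_{F^j}(p)\bigr) + (\lambda^j-1)\,\E_{X_j}[\min(X_j - c_{F^j}(p),0)].
\]
Integrating in $p$ and invoking the standard identity $\int_0^1 c_{F^j}(p)\,dp = \E[X_j]$ annihilates the first summand and leaves a non-positive remainder, because $\lambda^j>1$. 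Summing over $j$ then delivers $\E_{u\sim F}[\mathcal{N}(u|F)]\le 0$.

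The main obstacle will be the equality case. Vanishing of the remainder for dimension $j$ requires $\E_{X_j}[\min(X_j - c_{F^j}(p),0)] = 0$ for Lebesgue-a.e.\ $p$, equivalently $X_j \ge c_{F^j}(p)$ $F^j$-a.s.\ for a.e.\ $p$. If $F^j$ were not degenerate, there would be some $x_0$ with $F^j(x_0) \in (0,1)$; then for every $p \in (F^j(x_0),1)$ the event $\{X_j \le x_0\}$ would have probability $F^j(x_0)>0$ and sit inside $\{X_j < c_{F^j}(p)\}$, making the integrand strictly negative on an interval of positive length — a contradiction. Hence equality in dimension $j$ forces $F^j$ to be a point mass, and combining the two dimensions shows that $\E_{u\sim F}[\mathcal{N}(u|F)] = 0$ iff both $F^g$ and $F^m$ concentrate, i.e.\ iff $F$ puts unit mass on a single $u_0$.

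For the second part the plan is simply to verify that in both timelines the participation-self and the play-self face identical optimization problems. In timeline D the acceptance and play moments are separated by a delay but play and realization are concurrent; by Assumption 3 the play-self then experiences a single news-utility term at that concurrent moment, comparing her pre-realization belief — which is the chosen $F$, by sophistication together with Assumption 2 ruling out self-production of news — to the post-realization belief $\delta_u$. Taken in expectation this is $\E_{u\sim F}[\mathcal{N}(u|F)]$, precisely the realization term the play-self faces in timeline C, so the play-self's maximization coincides across the two timelines. The participation-self in either timeline anticipates the play-self's choice $F$, adds the surprise term and the expected intrinsic utility, and compares to $O(F_0)$; since no-discounting makes the ordering of the surprise and realization terms irrelevant, the participation-self's acceptance criterion is also identical in C and D. Together these give equivalence of both the acceptance decision and the subsequent choice from $\C$.
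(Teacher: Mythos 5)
Your argument is correct and follows essentially the paper's own route: part 1) is the dimension-by-dimension loss-aversion computation that the paper packages as Lemma \ref{thm:expectedNU} in the appendix (deferring its proof to Eisenhuth's Lemma 1), with your identities $\xi^j(y)=y+(\lambda^j-1)\min(y,0)$ and $\int_0^1 c_{F^j}(p)\,dp=\E[X_j]$ playing the role of the paper's double-integral representation of $\omega(H)$, and your equality-case argument simply writing out the detail the paper cites away (your closing ``i.e.'' from degenerate marginals $F^g,F^m$ to a degenerate $F$ shares the paper's own harmless imprecision, since utility depends on $u$ only through $(v(a)\theta,t)$). Part 2) coincides with the paper's in-text proof: under Assumptions 2--4 and no discounting the participation-self and the play-self solve identical problems in timelines C and D, and your reading that the realization term is $\E_{u\sim F}[\mathcal{N}(u|F)]$ even when play and realization are concurrent is exactly what Assumption 4 stipulates.
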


The proof of the equivalence of C and D is contained in the text above. It relies crucially on the no-discounting assumption as well as on Assumption 2.\footnote{In the mechanism design settings below there is no uniform result as to which of timelines C,D is better once discounting is allowed. Intuitively, discounting applies in timeline C to both intrinsic utility and news utility at the choice-from-menu stage whereas it applies to neither in timeline D.} Part 1) is an implication of loss aversion. Because of loss aversion percentile comparisons between the realized $u$ and the percentiles of the `reference' distribution $F$ are weighted asymmetrically depending on whether they correspond to a loss or gain; losses in the dimension $j$ get a weight of $\mu^j\lambda^j$ whereas gains only of $\mu^j$ in units of intrinsic utility. Averaging out across realizations of $u$ results in a negative \emph{expected} news utility effect. 

Due to Proposition \ref{thm:oneagenthelp} we identify timelines C and D in the rest of the paper. 


\subsection{Related Literature}\label{sec:relatedlit}


This paper connects to different strands of the mechanism design literature as well as of the applied behavioral literature. The focus on optimal timeline choice for preferences which are sensitive to the timing of announcements and additionally feature loss aversion seems new in the literature.

\cite{suspense} analyzes the optimal way to disclose information to an agent whose preferences depend on the path of the belief change. In their model the agent has preference for late resolution of uncertainty as she likes to experience suspense and surprise. In our model the agent exhibits preference for early resolution of uncertainty due to loss aversion. Moreover, the goal of the designer in \cite{suspense} is to maximize welfare of the agent whereas we focus on profit maximization. Finally, their model features a fixed timeline whereas we also study the optimal choice of the timeline in our model.

There are by now several screening models where sophisticated agents exhibit loss-aversion. \cite{carbajal14} proposes a screening model of reference dependent and loss averse consumers where the reference point is non-stochastic and depends linearly on the private information of the consumer. 
\cite{hahn} consider price discrimination with loss-averse consumers. Similar to the model proposed in Section \ref{sec:screening} they assume the buyers don't know their intrinsic valuation at the moment they face the menu of bundles the monopolist offers. They assume that the reference point is the menu of bundles the monopolist offers and work with ex-post participation and incentive compatibility constraints. Loss aversion parameters are known by the monopolist in their model. Our model assumes the monopolist has imperfect information about loss aversion parameters. Moreover, our model introduces and studies the issue of designing the optimal screening timeline which is missing from both papers mentioned.

The applied behavioral literature offers several models where a designer, say a monopolist or a firm in a competitive market, screens on behavioral features of the agents. \cite{es1} and \cite{es2} consider a designer who faces agents who may hold potentially incorrect beliefs about their future utility and screens respectively on the level of sophistication or on the level of optimism of the agent. \cite{hk2} offer a related model and study welfare consequences of screening for the sophistication level of agents. \cite{hk1} and \cite{efs} offer models in various settings where a designer screens agents according to their present-bias level. In all of these models one of the main reasons the agents are dynamically inconsistent is that they may be naive about their future behavior whereas in our paper the agent is sophisticated and dynamic inconsistency arises directly from an agent's preferences and not through incorrect beliefs about future behavior. 

\cite{heidhues} considers a monopolistic setting where there is complete information about agent's preferences, but the monopolist can commit to draw the price from an ex-ante designed and announced distribution: in period 1 the agent observes the price distribution and forms expectations, while in period 2 it observes the drawn price and decides whether to buy or not. The optimal price distribution is non-degenerate. In contrast, in this paper the designer is assumed to reveal all details of the mechanism in a single step at the beginning of the game and we work with interim (as opposed to ex-post) participation constraints.\footnote{In fact, the result in \cite{heidhues} ceases to hold if the agent is given the chance to decide about the purchase before seeing the distribution of prices in period 1. Ex-post participation constraints are less appropriate in settings where the value of the product to the consumer is revealed only upon consumption and the agent can commit to the mechanism before consumption.} Nevertheless, since in our multi-agent model uncertainty persists even with deterministic mechanisms, we get a related result to theirs in our setting: it may be optimal for the designer to not insure the agent against future uncertainty. Finally, we stress that their paper focuses on PPE types of equilibria which can't arise in this paper.

To the best of our knowledge, \cite{eisenhuth} is the first paper considering a full-fledged mechanism design model of auctions with preferences which exhibit expectation-based loss aversion. He uses the equilibrium concept from \cite{kr2} (CPE) and solves for the optimal symmetric auction with symmetric bidders. His environment is most similar to timeline B in this paper.\footnote{Similarly to our results in the online appendix \cite{eisenhuth} establishes that optimal auctions for CPE preferences are all-pay with a reference price. He also considers a model of wide-bracketing, which is assumed away in our model due to the separability assumption across the two dimensions, consumption and money.} \cite{herwegetal} introduce the same CPE preferences in the classical principal-agent model with moral hazard and show how the optimal contract is much simpler than in the classical model. In a new paper, \cite{benkert} considers the optimal mechanism problem for the bilateral trade model where both buyer and seller behave according to CPE. He solves for the optimal mechanisms for a special distribution of types. Within the same timeline our paper additionally looks at other topics from classical Bayesian mechanism design: auctions, public goods. Moreover, we also address the issue of the optimal timing of the realization of the mechanism, which doesn't occur in either of the above-mentioned works.


This paper contributes to the emerging literature on strategic interaction of agents whose utility depends on their beliefs about present and future consumption and money transfers. 
Relatedly, \cite{dato} characterize existence properties of strategic equilibrium based on the preferences of \cite{kr1} (their PE and PPE concepts) and \cite{kr2} (their CPE concept) in finite normal-form games and focus mostly on existence and uniqueness as well as characterizing when equilibrium play is the same under classical and expectation-based loss averse preferences. The fact that ex-post efficiency in the public goods setting with timelines B,C,D may fail incentive compatibility (see subsection \ref{sec:multiagentexpostpublic}) is a reflection of the same non-existence phenomenon they identify in their model.

\section{A screening model of loss aversion}\label{sec:screening}

This section offers a tractable model of screening with news utility agents whose private information concerns their loss aversion parameters.
Naturally, assuming that private information of an agent includes multiple parameters of her behavioral preferences leads to a multidimensional screening problem. Here we avoid the technical difficulties of multidimensional mechanism design by assuming symmetry for loss aversion in the money and good dimension and that the designer knows all news utility parameters but one: the loss aversion parameter. We focus on a model of monopolistic screening where the buyers don't know the realization of their intrinsic utility at the moment of participation decision and choice of contract.\footnote{It is possible to construct a more complicated model with signals which partly reveal the incomplete information about $\theta$ after or before the contract is signed. If the signals are public knowledge however, the more general case is easily reduced to a model similar to the one in this section.}   

Real life settings approximated by this model would be buying tickets to a concert from an unknown band, buying a book from an unknown author, or vacationing in an unknown destination, etc.


\subsection{Set Up}\label{sec:screeningsetup}

We assume the designer is a monopolist producing non-negative quantities of a good denoted by $q$ at a fixed marginal cost $c>0$.\footnote{This can be relaxed to a weakly increasing marginal cost $c$ and none of the results would qualitatively change. We keep a constant marginal cost throughout for ease of exposition.} A buyer has intrinsic utility from a contract $(q,t)$ of the form
\[
v(q)\theta-t.
\]
Here $\theta$ is the intrinsic value of the good. 
$v$ fulfills standard assumptions: $v(0)=0,v\geq 0, v'>0,v''<0$ and $\lim_{x\ra 0}v'(x) = +\infty$. Moreover, for purely technical reasons we additionally require a weak growth condition on $v$: there exists some $p>1$ so that it holds $v'(x)\geq \frac{K}{v^{p-1}(x)}$ for some $K>0$.\footnote{This weak condition is crucial for our proof of existence of an optimal mechanism in timeline C. It says that the marginal utility $v'$ doesn't fall too fast with $x$.} 

Figure \ref{fig:screening} below depicts the relevant timelines with news utility. We assume the timeline $T\in \{A,B,C\}$ is either given through technological constraints or it is chosen by the designer.
\begin{figure}[h!]
  \centering
    \includegraphics[width=0.65\textwidth]{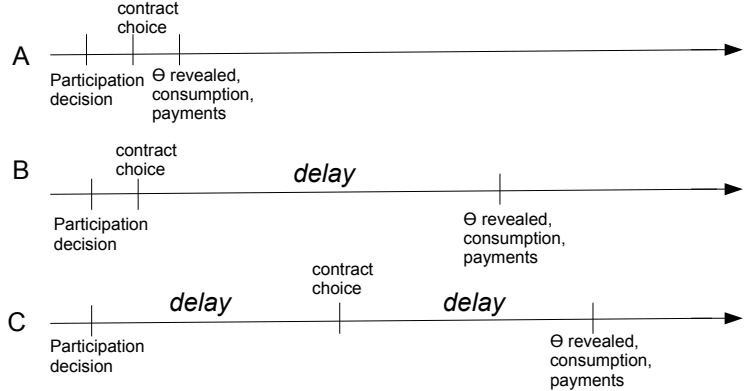}
      \caption{Relevant timelines for the monopolistic screening model.}\label{fig:screening}
\end{figure}
The informational assumptions of the screening model are as follows.

\paragraph{Assumption (S)}  Intrinsic utility $\theta$ is distributed according to a probability distribution $F$ with bounded support over the non-negative numbers. $F$ is common knowledge and it fulfills $m:=\E_{\theta\sim F}[\theta]>0$ (on average the good yields positive intrinsic utility). All buyers put weight $\mu^g = \mu^m = 1$ on news utility and have $\lambda^g=\lambda^m=\lambda$. The distribution of $\lambda$ has a continuously differentiable and strictly positive density $g$ on $[1,\bar\lambda]$ $(2\ge \bar\lambda>1)$ and its c.d.f. $G$ is common knowledge.\footnote{As \cite{mr} establish, $\lambda\le 2$ is a necessary requirement for the preferences of the agent to respect first-order stochastic monotonicity in the money dimension. They don't consider different timelines as here but their results about choice over monetary lotteries still hold qualitatively in our setting of timelines B,C (and therefore also D).} 
\vspace{4mm}\\
Denote for future reference 
$M:=\E_{\theta,s\sim F}[(\theta-s)\textbf{1}_{\{\theta\geq s\}}]$. 
It is easy to see that $M<m$.

The outside option which determines the pre-mechanism beliefs of each buyer type consists of a zero utility: zero amount of good and zero transfers to the designer are expected in the absence of any mechanism. 

We focus in this section on the case of deterministic mechanisms, i.e. we assume for simplicity in exposition that the monopolist doesn't possess a randomization device. We comment in the end of this section on how the results change with randomized mechanisms. 

Mechanisms consist of menus $\mathcal{C} = \{(q(\lambda),t(\lambda))\}_{\lambda\in [1,\bar\lambda]}$ the monopolist offers. This is a set of contracts indexed by loss aversion specifying the quantity of the good $q$ and the price of that good at that quantity. In the framework of subsection \ref{sec:prefs} the buyers are agents facing a menu of lotteries over $\Delta([0,\infty)\times\Theta\times\R)$ with the property that the marginals over the quantity $q$ and payment $t$ are degenerate. The lotteries in the menu are indexed by $\lambda\in [1,\bar\lambda]$.

In the remaining part of this section we characterize individual rationality and incentive compatibility for all timelines and finally look at the optimal timeline choice for the designer. 

\subsection{Incentive Compatibility and Individual Rationality}\label{sec:screeningicir}

\subsubsection{Incentive Compatibility}

\paragraph{Timeline A:} It is without loss of generality for optimal mechanisms to consider only payment schedules $t\ge 0$.


If the agent has decided to participate and choose bundle $(q(\hat\lambda), t(\hat\lambda))$ then the news utility from the comparison with the pre-mechanism expectations is given by 
\begin{align}\label{eq:surprisescreening}
v(q(\hat{\lambda}))m-\lambda t(\hat{\lambda}).
\end{align}
She also experiences consumption utility $v(q(\hat{\lambda}))\theta-t(\hat{\lambda})$, which at the decision moment is in expected utility terms 
\[
v(q(\hat{\lambda}))m-t(\hat{\lambda}).
\]
Gathering the terms together, we see that utility of a buyer of type $\lambda$ from declaring $\hat{\lambda}$ is 
\[
2mv(q(\hat{\lambda}))-(1+\lambda)t(\hat{\lambda}).
\]
Denote 
\begin{equation}
\Gamma^A(\lambda) = \frac{2m}{1+\lambda}.
\end{equation}
We call $\Gamma^A(\lambda)$ the \emph{A-virtual type} of the buyer. The term in the denominator reflects the negative surprise effect in the money dimension. The numerator reflects the surprise effect in the good dimension. The \emph{A-virtual type} is a decreasing function of the agent's loss aversion parameter.

Incentive compatibility is characterized by
\[
\argmax_{\hat{\lambda}\in [1,\bar\lambda]}\left\lbrace(1+\lambda)\left[\Gamma^A(\lambda)v(q(\hat{\lambda}))-t(\hat{\lambda})\right]\right\rbrace = \lambda.
\]

\paragraph{Timeline B:}
Again it is without loss of generality for optimal mechanisms to consider only payment schedules $t\ge 0$.

In this timeline if the buyer of type $\lambda$ has decided to choose bundle $(q(\hat{\lambda}),t(\hat{\lambda}))$, she will first experience news utility from the comparison with the pre-mechanism expectations just as in timeline A.

When the buyer learns her draw of the material valuation $\theta$ she experiences news utility in the good dimension from comparing the outcome to her previous belief $v(q(\hat{\lambda}))s$ where $s$ is distributed according to $F$:
\begin{align*}
v(q(\hat{\lambda}))\int (\theta-s)\textbf{1}_{\{\theta\geq s\}}+\lambda(\theta-s)\textbf{1}_{\{\theta< s\}}F(ds).
\end{align*}
Due to the delay, she takes the expectation of this expression w.r.t. $\theta$ at the moment she decides whether to participate or not, so that her expected news utility from realization is given by the following expression
\[
(1-\lambda)Mv(q(\hat{\lambda})).
\]
There is no news utility term in the money dimension when $\theta$ is revealed as there is no uncertainty in the money dimension once a bundle has been chosen by the agent. 

Taking into account the expectation of the intrinsic utility $v(q(\hat{\lambda}))\theta-t(\hat{\lambda})$ we see that decision utility of a buyer of type $\lambda$ from choosing the contract corresponding to $\hat{\lambda}$ is 
\[
\left[2m+(1-\lambda)M\right]v(q(\hat{\lambda}))-(1+\lambda)t(\hat{\lambda})
\]
Denote 
\begin{equation*}
\Gamma^B(\lambda) = \frac{2m+(1-\lambda)M}{1+\lambda}.
\end{equation*}
We call this the \emph{B-virtual type} of the agent. It is a decreasing function of her loss aversion parameter. The denominator reflects the surprise effect in the money dimension while the numerator reflects both the surprise as-well-as the realization effect in the consumption dimension.

\emph{Incentive Compatibility} is characterized by
\[
\argmax_{\hat{\lambda}\in [1,\bar\lambda]}\left\lbrace(1+\lambda)\left[\Gamma^B(\lambda)v(q(\hat{\lambda}))-t(\hat{\lambda})\right]\right\rbrace=\lambda.
\]

\paragraph{Timeline C.}

At the moment of contract choice the agent experiences no surprise effect as the contract doesn't constitute news anymore but she still  takes into account the realization effect in expectation. Given this, \emph{Incentive Compatibility} is characterized by

\[
\argmax_{\hat{\lambda}\in [1,\bar\lambda]}\left\lbrace[m+(1-\lambda)M]v(q(\hat \lambda))-t(\hat\lambda)\right\rbrace = \lambda.
\]

Denote $\Gamma^C(\lambda)=m+(1-\lambda)M$ the \emph{C-virtual type}. It includes the expected future news utility from the realization of the mechanism as well as the expected value of consumption. It is decreasing in the loss aversion parameter $\lambda$. 

Standard methods yield then the following characterization of incentive compatible mechanisms. Here we call an allocation rule $q:[1,\bar\lambda]\ra \R_{+}$ is implementable if there exists an incentive compatible mechanism $\C$ with allocation rule $q$. 

\begin{proposition}\label{thm:screening-ic}
1) An allocation rule $q:[1,\bar\lambda]\ra \R_{+}$ is implementable if and only if $q(\cdot)$ is non-increasing.

2) (Mirrlees Representation) For any implementable $q$ the corresponding payments $t:[1,\bar\lambda]\ra \R$ are given up to a type-independent constant by the following Mirrlees representations. 
\begin{align*}
&(A)\quad 
t^A(s) = \Gamma^A(s)v(q(s))-2m\int^{\bar\lambda}_s\frac{v(q(t))}{(1+t)^2}dt,
\\&(B)\quad t^B(s) = \Gamma^B(s)v(q(s))-2(m+M)\int^{\bar\lambda}_s\frac{v(q(t))}{(1+t)^2}dt,\\
&(C)\quad t^C(s) = \Gamma^C(s)v(q(s))+M\int_1^{s}v(q(\sigma))d\sigma.
\end{align*}
\end{proposition}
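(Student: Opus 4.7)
The plan is to reduce all three timelines to a single abstract screening problem and then apply standard Mirrlees--Myerson arguments. The first step is to observe that in each timeline $T\in\{A,B,C\}$ the IC condition displayed just before the statement has the unified form
\[\lambda\in\argmax_{\hat\lambda\in[1,\bar\lambda]}\Gamma^T(\lambda)v(q(\hat\lambda))-t(\hat\lambda),\]
since in timelines $A$ and $B$ the factor $(1+\lambda)$ appearing in the agent's utility is strictly positive and independent of the report $\hat\lambda$ and may therefore be divided out without affecting the argmax. Writing $U(\lambda,\hat\lambda):=\Gamma^T(\lambda)v(q(\hat\lambda))-t(\hat\lambda)$ and $U(\lambda):=U(\lambda,\lambda)$, the problem is a standard quasi-linear screening one with ``type'' $\Gamma^T(\lambda)$ and ``quantity'' $v(q(\hat\lambda))$, and all that is left is to adapt the textbook Myerson argument to this reparametrization.

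For the necessity direction of part (1), I would take two types $\lambda,\lambda'$, add the pair of IC inequalities $U(\lambda)\ge U(\lambda,\lambda')$ and $U(\lambda')\ge U(\lambda',\lambda)$, and rearrange to obtain
\[[\Gamma^T(\lambda)-\Gamma^T(\lambda')]\,[v(q(\lambda))-v(q(\lambda'))]\ge 0.\]
Direct inspection of the explicit formulas shows each $\Gamma^T$ is strictly decreasing, and since $v$ is strictly increasing this forces $q$ to be non-increasing. For the Mirrlees representation in part (2), the envelope theorem (in its absolutely continuous form, e.g. Milgrom--Segal) gives $U$ absolutely continuous with $U'(\lambda)=(\Gamma^T)'(\lambda)v(q(\lambda))$ almost everywhere. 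Computing $(\Gamma^A)'(s)=-\frac{2m}{(1+s)^2}$, $(\Gamma^B)'(s)=-\frac{2(m+M)}{(1+s)^2}$, and $(\Gamma^C)'(s)=-M$, and integrating---from $\lambda$ to $\bar\lambda$ in cases $A,B$ and from $1$ to $\lambda$ in case $C$---then substituting into $t(\lambda)=\Gamma^T(\lambda)v(q(\lambda))-U(\lambda)$, produces the three displayed formulas up to the type-independent constants $U(\bar\lambda)$ or $U(1)$.

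For the converse direction of part (1), I would assume $q$ is non-increasing and $t$ is given by the Mirrlees formula, and verify IC directly. Differentiating the explicit formula for $t$ confirms that $U$ derived from it satisfies the envelope identity $U'(s)=(\Gamma^T)'(s)v(q(s))$ almost everywhere. Combined with the tautology $\Gamma^T(\lambda)-\Gamma^T(\hat\lambda)=\int_{\hat\lambda}^\lambda(\Gamma^T)'(s)\,ds$, a brief rearrangement gives
\[U(\lambda)-U(\lambda,\hat\lambda)=\int_{\hat\lambda}^\lambda(\Gamma^T)'(s)\,[v(q(s))-v(q(\hat\lambda))]\,ds.\]
The integrand factors as a strictly negative $(\Gamma^T)'(s)$ times a quantity whose sign agrees with that of $s-\hat\lambda$ (because $v\circ q$ is non-increasing), so the integral is non-negative in both the cases $\hat\lambda<\lambda$ and $\hat\lambda>\lambda$, establishing IC.

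The main obstacle I expect is only mildly technical: since $q$ is not assumed continuous or differentiable, the envelope step and the IC verification must be phrased via absolute continuity rather than classical $v'(q(s))q'(s)$-style calculus, which is why I rely on the Milgrom--Segal form of the envelope theorem and on Lebesgue (rather than Riemann) integration of the bounded measurable integrands above. A secondary arithmetic pitfall lies in computing $(\Gamma^B)'(s)$: the cross-terms in $sM$ in the numerator of the quotient derivative must be expanded carefully so that they cancel and one recovers the clean coefficient $2(m+M)$ featuring in the $B$-representation.
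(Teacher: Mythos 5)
Your proposal is correct and follows essentially the same route as the paper: the paper likewise divides out the positive factor $(1+\lambda)$ (working with $W(s)=\Gamma^T(s)v(q(s))-t(s)$), notes that each $\Gamma^T$ is strictly decreasing, and invokes the standard envelope/Mirrlees argument \`a la B\"orgers and Myerson to get monotonicity of $q$ and the payment formulas up to a type-independent constant. Your write-up merely makes explicit the textbook steps (double-IC inequality, Milgrom--Segal envelope theorem, and the integral identity for sufficiency) that the paper cites as ``standard methods.''
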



The type-independent constant not depicted in part 2) of the Proposition is a fixed payment to the monopolist (or transfer to the agents) which is independent of the private information of the agents. In an optimal mechanism its value is determined by the individual rationality requirement. 

\subsubsection{Individual Rationality} 

The following Proposition gives the individual rationality characterization for incentive compatible mechanisms where the payment schedules $t$ are non-negative.\footnote{This is without loss of generality for optimal mechanisms as we show in the appendix.}

\begin{proposition}\label{thm:screening-ir}
Fix a timeline $T\in \{A,B,C\}$. An incentive compatible contract $\mathcal{C} = \{(q(\lambda),t^T(\lambda))\}_{\lambda\in [1,\bar\lambda]}$ is individually rational for timeline $T$ if the following respective sets of inequalities are satisfied.

\begin{itemize}
\item For timelines $T=A,B$
\[
\Gamma^T(\lambda)v(q(\lambda))-t^T(\lambda)\geq 0,\quad \lambda\in [1,\bar\lambda].
\]
\item For timeline $C$ 
\[
\Gamma^B(\lambda)v(q(\lambda))-t^C(\lambda)\geq 0,\quad \lambda\in [1,\bar\lambda].
\]
\end{itemize}

\end{proposition}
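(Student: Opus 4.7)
The plan is to reduce individual rationality, in each timeline, to a non-negativity condition on the participation-self's interim decision utility (since the outside option gives $O(F_0)=0$ because $F_0$ is degenerate at the zero outcome and hence produces no surprise or realization news), and then to identify that decision utility explicitly using the news-utility bookkeeping already laid out in subsection \ref{sec:screeningicir}.

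For timelines $A$ and $B$, the participation-self and the play-self coincide, so individual rationality is just the statement that the expression maximized in the incentive compatibility problem at the truthful report is non-negative. We have already derived that for a type-$\lambda$ agent reporting truthfully this decision utility equals $(1+\lambda)\bigl[\Gamma^{A}(\lambda)v(q(\lambda))-t^{A}(\lambda)\bigr]$ in timeline $A$ and $(1+\lambda)\bigl[\Gamma^{B}(\lambda)v(q(\lambda))-t^{B}(\lambda)\bigr]$ in timeline $B$. Since $1+\lambda>0$, dividing out gives the two inequalities stated in the proposition.

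The main content sits in timeline $C$, where by sophistication (Assumption 1) the participation-self anticipates the play-self's truthful choice $(q(\lambda),t^{C}(\lambda))$, and where three news-utility components must be accounted for simultaneously at the acceptance moment. Specifically, on top of the expected intrinsic utility $v(q(\lambda))m-t^{C}(\lambda)$, the participation-self experiences: (i) the surprise effect comparing the contract-induced distribution $F$ to the degenerate $F_0$, which contributes $v(q(\lambda))m$ in the good dimension (since $\theta\ge 0$, there are no percentile-level losses) and $-\lambda\,t^{C}(\lambda)$ in the money dimension (since $t^{C}(\lambda)\ge 0$, the entire shift is a loss); and (ii) the expected realization effect $(1-\lambda)Mv(q(\lambda))$, identical in form to the one computed for timeline $B$, since the uncertainty over $\theta$ is still to be resolved. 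Summing the three pieces gives
\[
[2m+(1-\lambda)M]v(q(\lambda))-(1+\lambda)t^{C}(\lambda)=(1+\lambda)\bigl[\Gamma^{B}(\lambda)v(q(\lambda))-t^{C}(\lambda)\bigr],
\]
and the inequality $\ge 0$, after dividing by $1+\lambda>0$, is exactly the claim.

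The only step requiring care is the bookkeeping in timeline $C$: one must recognize that although the play-self's objective in the IC problem is driven by $\Gamma^{C}$ (no surprise, only the anticipated realization effect), the participation-self faces the surprise effect as well, and this pushes the relevant coefficient for IR back up to $\Gamma^{B}$. In other words, the $B$ vs.\ $C$ distinction is visible in the play stage and hence in incentive compatibility, but the participation-self's computation is the same under both timelines, which is precisely why $\Gamma^{B}$ reappears in the $C$-IR condition. A minor point is the non-negativity assumption $t^{T}\ge 0$ that makes the money-side surprise a genuine loss of magnitude $\lambda t(\lambda)$; the footnote in the statement flags that restricting to $t\ge 0$ is without loss of optimality, and the proof of that fact can be invoked to close the argument.
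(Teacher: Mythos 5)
Your proposal is correct and follows essentially the same route as the paper: reduce IR to non-negativity of the participation-self's decision utility against the zero outside option, use the decision utilities already derived in the IC analysis for timelines A and B, and for timeline C sum the surprise effect, expected intrinsic utility and expected realization effect (under sophistication and $t\ge 0$) to recover the coefficient $\Gamma^{B}$, exactly as in the paper's text and the appendix computation of $V(\lambda)$.
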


The expressions for timelines A, B are similar to ones from classical models except for the fact that one has to use modified virtual types which take into account news utility effects. 

For timeline $C$, at the participation stage the agent anticipates that she'll be truthful later, if she accepts an incentive compatible menu $\C$. In all incentive compatible mechanisms in timeline C the payment schedule $t(\lambda)$ is weakly decreasing in $\lambda$.\footnote{This follows immediately from the Mirrlees representation coupled with the envelope theorem.} Given this, and the fact that the monopolist can always offer the bundle $(0,0)$ to any type it follows that the optimal contract will never feature a net subsidy $t(\lambda)<0$ for any type $\lambda\in[1,\bar\lambda]$.\footnote{If this were not true then it would hold $t(\lambda_0)<0$ for every $\lambda_0>\lambda$. Going over to $t(\lambda_0) = 0$ and $q(\lambda_0)=0$ for all $\lambda_0\ge \lambda$ preserves incentive compatibility but increases profits.}

It follows that in case of acceptance her utility if she is of type $\lambda$ is given by 
\[
V(\lambda) = \left[2m+(1-\lambda)M\right]v(q(\lambda))-(1+\lambda)t(\lambda).
\]
Namely, the agent experiences the following utility items: news utility from accepting the mechanism, intrinsic \emph{expected} utility from future consumption and finally \emph{expected} future news utility from the realization effect.

We show in the appendix that $V(\lambda)$ can be rewritten as
\[
V(\lambda)=(1+\lambda)\left[\Gamma^B(\lambda)-\Gamma^C(\lambda)\right]v(q(\lambda))-(1+\lambda)f-(1+\lambda)M\int_1^{\lambda}v(q(s))ds. 
\]
where $f$ is a type-independent payment. 
The individual rationality requirement can then be written as 

\[
\left[\Gamma^B(\lambda)-\Gamma^C(\lambda)\right]v(q(\lambda))-f-M\int_1^{\lambda}v(q(s))ds\geq 0,\quad \lambda\in [1,\bar\lambda]. 
\]

The fact that different selves decide on participation and bundle choice creates a `wedge' between individual rationality and incentive compatibility requirement. A measure for this discrepancy is precisely the multiplicative factor appearing in the individual rationality constraint:
\[
\Gamma^B(\lambda)-\Gamma^C(\lambda) = \frac{(1-\lambda)m+(\lambda^2-\lambda)M}{1+\lambda}. 
\]



\subsection{Optimal screening mechanisms and optimal timeline choice}\label{sec:screeningtimelineopt}

We consider the timelines one after the other.\footnote{Establishing the existence of an optimal mechanism in timelines A and B uses classical methods of pointwise maximization whereas the problem in timeline C can be rewritten into a calculus of variations problem with constraints for which we show existence of a solution and give a recipe in the online appendix on how to find it in many typical examples.} Finally, for the case that the monopolist can pick the timeline we establish general results about the optimality of the timeline.

\paragraph{Timeline A.}
We establish in the appendix that the profit function for an incentive compatible and individually rational menu of contracts for timeline A looks as follows.
\begin{equation}\label{eq:profituncertaintyA}
\Pi^{A} = \int_{1}^{\bar\lambda}\left[ \Psi^A(s)v(q(s))-cq(s)\right]G(ds),
\end{equation}
with 
\begin{equation*}
\label{eq:psia}
\Psi^A(s) = \Gamma^A(s)-2m\frac{G(s)}{(1+s)^2g(s)}.
\end{equation*}

$\Psi^A(s)$ is the \emph{virtual valuation} for timeline A. The virtual type $\Gamma^A(s)$ is corrected for the informational rent of the agent represented here by $2m\frac{G(s)}{(1+s)^2g(s)}$.

The monopolist problem is thus maximizing \eqref{eq:profituncertaintyA} under the constraint that $q$ be non-increasing and that individual rationality is fulfilled. 


We give a specific example of the solution for timeline A.

\paragraph{\textbf{Example 1.}} Assume 
\[
G = uniform([1,2]). 
\]

We take $F=uniform([0,1])$ i.e. the intrinsic valuations are uniformly distributed over the interval $[0,1]$. Finally, intrinsic utility is given by $v(q) = \sqrt{q}$. The virtual valuation is calculated to be

\[
\Psi^A(s) = \frac{2}{(1+s)^2},\quad s\in [1,2].\footnote{Note that this is always positive and decreasing. There is no need to exclude types of high loss aversion from the mechanism or to use ironing techniques.} 
\]

The optimal allocation rule is
\begin{equation*}
q^A(\lambda) = \frac{\Psi^A(\lambda)^2}{4c^2}.
\end{equation*}
In particular no type is excluded from the mechanism.
One calculates that optimal profit for timeline A is $R^A(c) =\frac{65}{64\cdot81\cdot c}$. 

\paragraph{Timeline B.}
The profit function for an incentive compatible and individually rational mechanism for timeline B looks as follows.
\begin{equation}\label{eq:profituncertaintyB}
\Pi^{B} = \int_{1}^{\lambda_0}\left[ \Psi^B(s)v(q(s))-cq(s)\right]G(ds),
\end{equation}
with 
\begin{equation*}
\label{eq:psib}
\Psi^B(s) = \Gamma^B(s)-2(m+M)\frac{G(s)}{(1+s)^2g(s)}.
\end{equation*}

$\Psi^B$ is the \emph{virtual valuation} for timeline B. Note that it is weakly lower than the virtual valuation for timeline A. This is because as noted in Proposition \ref{thm:oneagenthelp} the realization effect is negative in expectation due to loss aversion. In comparison to timeline A, this results in a lower virtual type due to lower informational rents. 

The problem of the monopolist is maximizing \eqref{eq:profituncertaintyB} under the constraint that $q$ is non-increasing and that individual rationality is fulfilled.

\paragraph{Timeline C.}

We show in the appendix that the objective function of the designer
can be written as 

\begin{equation*}
\Pi^{C} = f\cdot G(\hat\lambda) + \int_1^{\hat\lambda} \left\lbrace\left[\Gamma^C(\lambda)+ M\frac{G(\hat\lambda)-G(\lambda)}{g(\lambda)}\right]v(q(\lambda))-cq(\lambda)\right\rbrace dG(\lambda)  
\end{equation*}
for a threshold type $\hat\lambda$ so that $(q(\lambda),t(\lambda)) = (0,0)$ whenever $\lambda\ge \hat\lambda$ (exclusion). 

Thus the problem of the designer for timeline C is

\begin{align}
\begin{split}\label{eq:R-Cscreening2}
\max_{\hat\lambda\in[1,\bar\lambda],f\in\R,q(\cdot)}& \int_1^{\hat\lambda} \left\lbrace\left[\Gamma^C(\lambda)+ M\frac{G(\hat\lambda)-G(\lambda)}{g(\lambda)}+f\right]v(q(\lambda))-cq(\lambda)\right\rbrace dG(\lambda)\\&
\text{s.t. }\\& (1)\text{ } q(\cdot)\text{ is non-increasing,}\\& (2)\text{ }
\left[\Gamma^B(\lambda)-\Gamma^C(\lambda)\right]v(q(\lambda))-M\int_1^{\lambda}v(q(s))ds\geq f,\quad \lambda\in [1,\hat\lambda].
\end{split}
\end{align}

Here $\Psi^C(s,f) = \Gamma^C(s)+ M\frac{G(\hat\lambda)-G(s)}{g(s)}+f$ for $s\le \hat\lambda$ is the \emph{virtual valuation} for timeline B. The virtual type $\Gamma^C$ is again corrected for the information rent and the (net) lump-sum subsidy $f$ to the participation self. Condition (2) in the above program is just a rewriting of the individual rationality constraint. $f$ corresponds to a type-independent subsidy which may need to be paid to ensure individual rationality. This subsidy accounts for the externality which the self who chooses the bundle exerts on the participation self. 

We show that a solution for timeline C always exists. In the online appendix we also show how to characterize it completely under some regularity requirements which correspond to a \emph{no ironing} condition in our setting. The second part of the following general result has a straightforward proof though.

\begin{proposition}\label{thm:negativef}
1) There always exists an optimal mechanism for timeline C.

2) If $F$ has support in the non-negative numbers the optimal fixed payment $f$ in an optimal mechanism for timeline C is negative, whenever profits are positive. 
\end{proposition}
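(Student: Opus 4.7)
The plan handles the two parts separately. I present part 2 first because its argument produces the key bound $f\leq 0$ which is also used in part 1; then I sketch the compactness approach for part 1.

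For part 2, I would evaluate the individual rationality constraint (2) in program \eqref{eq:R-Cscreening2} at $\lambda=1$: both $\Gamma^B(1)-\Gamma^C(1)$ and $\int_1^1 v(q(s))\,ds$ vanish, so the constraint collapses to $0\geq f$. Hence $f\leq 0$ in every feasible timeline-C mechanism. To upgrade to strict inequality whenever profits are positive, I argue by contradiction: suppose $f=0$ in some optimum. A direct computation gives
\[
\Gamma^B(\lambda)-\Gamma^C(\lambda)=\frac{(1-\lambda)(m-\lambda M)}{1+\lambda},
\]
which is strictly negative on $(1,m/M)$; this interval is non-degenerate because $F$ having non-negative support with $m=\E[\theta]>0$ and non-degenerate $F$ forces $0<M<m$. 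On that interval the constraint $[\Gamma^B(\lambda)-\Gamma^C(\lambda)]\,v(q(\lambda))-M\int_1^\lambda v(q(s))\,ds\geq 0$ has both left-hand terms non-positive (the first because $\Gamma^B-\Gamma^C<0$ and $v(q)\geq 0$; the second because $-M\int v(q)\leq 0$), so their sum being non-negative forces both to vanish: $q(\lambda)=0$ and $q\equiv 0$ a.e.\ on $[1,\lambda]$. Hence $q\equiv 0$ a.e.\ on $[1,\min\{m/M,\bar\lambda\})$, and by monotonicity together with non-negativity this extends to $q\equiv 0$ a.e.\ on $[1,\bar\lambda]$, yielding zero profit and contradicting the hypothesis.

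For part 1, the main obstacle is the infinite-dimensional, non-compact choice space for $q$ coupled with a constraint that links $q$, $f$, and $\hat\lambda$. My plan is a direct compactness argument. First, the growth assumption $v'(x)\geq K/v^{p-1}(x)$ integrates to $v(x)\geq (pKx)^{1/p}$, and hence $cq\geq (c/pK)\,v(q)^p$. Since the coefficient $\Psi^C(\lambda,\hat\lambda,f)=\Gamma^C(\lambda)+M(G(\hat\lambda)-G(\lambda))/g(\lambda)+f$ is uniformly bounded in absolute value on $[1,\bar\lambda]\times[-C,0]$ (using that $g$ is continuous and strictly positive on the compact interval $[1,\bar\lambda]$, hence bounded below away from zero), the pointwise integrand $\Psi^C v(q)-cq$ is dominated by a function of $v(q)$ of the form $A\,v(q)-B\,v(q)^p$ with $p>1$, which has a finite maximum. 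This yields a uniform upper bound $Q^*$ on $q(\lambda)$ in any candidate optimum. Combined with $f\leq 0$ derived above and the observation that for $f$ below some threshold $-C$ the integrand is non-positive everywhere (so $q\equiv 0$ becomes optimal, yielding zero profit which is already achievable within a bounded $f$-interval), I may restrict WLOG to $(q,\hat\lambda,f)\in\mathcal{M}\times[1,\bar\lambda]\times[-C,0]$, where $\mathcal{M}$ is the set of non-increasing functions $[1,\bar\lambda]\to[0,Q^*]$.

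The most delicate step is the compactness and closedness argument. Given a maximizing sequence $(q_n,\hat\lambda_n,f_n)$, Helly's selection theorem extracts a subsequence along which $q_n$ converges pointwise to a non-increasing limit $q_\infty\in\mathcal{M}$ at every continuity point of $q_\infty$ (a dense set of $[1,\bar\lambda]$ since $q_\infty$ is monotone), and compactness of $[1,\bar\lambda]\times[-C,0]$ handles $(\hat\lambda_n,f_n)$. Continuity of $v$ together with the uniform bound $Q^*$ allows dominated convergence, so the objective—split to handle the varying upper limit $\hat\lambda_n$—converges to the objective at the limit. For feasibility of the limit, $\int_1^\lambda v(q_n)\,ds\to\int_1^\lambda v(q_\infty)\,ds$ by dominated convergence, while $v(q_n(\lambda))\to v(q_\infty(\lambda))$ at every continuity point of $q_\infty$. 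Taking the right-continuous representative of $q_\infty$, the constraint inequality extends from the dense set of continuity points to all $\lambda\in[1,\hat\lambda_\infty]$ by right-continuity of both sides. Hence $(q_\infty,\hat\lambda_\infty,f_\infty)$ is feasible and attains the supremum.
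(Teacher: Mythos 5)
Your part 2) is correct and is essentially the paper's own argument, just phrased contrapositively: evaluating constraint (2) of \eqref{eq:R-Cscreening2} at $\lambda=1$ gives $f\le 0$, and since $\Gamma^B(\lambda)-\Gamma^C(\lambda)=\frac{(1-\lambda)(m-\lambda M)}{1+\lambda}$ is strictly negative just above $\lambda=1$, positive profit (hence $q>0$ near $1$ by monotonicity) forces $f<0$. For part 1) you take a genuinely different route: the paper substitutes $u(\lambda)=f+M\int_1^\lambda v(q(s))ds$, turns the problem into a concave calculus-of-variations program over a Sobolev class, gets compactness from Banach--Alaoglu in the weak topology, upper semicontinuity from Dacorogna's theorem, and then a Berge-type continuity argument in the threshold $\hat\lambda$; you instead exploit monotonicity of $q$ directly via Helly's selection theorem, pointwise convergence and dominated convergence, which -- once a uniform bound on $q$ and on $f$ is in hand -- avoids weak topologies and semicontinuity machinery altogether and handles $\hat\lambda$ and $f$ by ordinary compactness of $[1,\bar\lambda]\times[-C,0]$. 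The limit-feasibility argument via a right-continuous representative is fine up to patching the right endpoint, and your scheme is arguably more elementary than the paper's.

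The genuine gap is in the a priori bound that makes the Helly argument run. First, the inequality you extract from the growth condition is reversed: $v'(x)\ge K/v^{p-1}(x)$ integrates to $v(x)^p\ge pKx$, i.e. $cq\le \frac{c}{pK}v(q)^p$, not $cq\ge \frac{c}{pK}v(q)^p$. The growth condition is a \emph{lower} bound on $v$ (the paper uses it only to get $L^p$-continuity of $u\mapsto\int v^{-1}(u'/M)$ in its Sobolev formulation), so it cannot deliver the coercive domination $\Psi^C v(q)-cq\le A\,v(q)-B\,v(q)^p$; under the paper's stated assumptions on $v$ (which do not include $v'(\infty)=0$) that coercivity is exactly the delicate point, and the paper itself handles boundedness through the relaxed program \eqref{eq:upperbd} and pointwise maximization rather than through the growth condition. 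Second, even granting an upper bound on the pointwise integrand, the inference ``the integrand has a finite maximum, hence $q(\lambda)\le Q^*$ in any candidate optimum'' is a non sequitur: boundedness above of the integrand does not bound the maximizing sequence. You would need a truncation/replacement argument along maximizing sequences, and checking that truncating $q$ at $Q^*$ preserves constraint (2) is not automatic, because $\Gamma^B-\Gamma^C$ need not be nonpositive on all of $(1,\bar\lambda]$ under Assumption (S): it is positive near $\bar\lambda$ whenever $M>m/2$ (i.e. the Gini coefficient of $F$ exceeds one half), in which case lowering $q$ there lowers the left-hand side of the constraint and may force a more negative $f$, so the objective comparison has to be made explicitly. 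Relatedly, your bounding of $f$ is stated for the main-text display of \eqref{eq:R-Cscreening2}, in which $f$ multiplies $v(q)$; the objective actually derived from the transfers (see \eqref{eq:R-Cscreening21} in the appendix) has $f$ entering additively as $f\cdot G(\hat\lambda)$, so the remark that ``for $f$ below $-C$ the integrand is non-positive'' should be restated for that form. These repairs are doable, but as written the uniform-bound step -- the load-bearing step of your compactness argument -- is not established.
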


2) implies that in many cases the optimal payment schedule $t(\cdot)$ in timeline C is discontinuous in $\lambda$: high loss aversion types are excluded from the mechanism whereas all types who are served may receive a fixed, type-independent transfer from the monopolist to the agents, whenever the monopolist sells some positive amount. Due to the discrepancy between the self choosing the contract and the self deciding whether to participate the former self exerts an externality on the latter by disregarding the surprise effect. This externality can be partially alleviated without adversely affecting incentive compatibility in the second period by optimally transferring a fixed amount $f<0$ to the self at time zero. 

\paragraph{Optimal timeline.}

Assuming that the monopolist can pick the timeline the following Theorem is the main result of this subsection.

\begin{theorem}
\label{thm:AbeatsBscreening}
For the screening model timeline A is weakly better than timeline B which is weakly better than timeline C. 
\end{theorem}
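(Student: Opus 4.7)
The claim decomposes into the two inequalities $\Pi^{A,\ast} \geq \Pi^{B,\ast}$ and $\Pi^{B,\ast} \geq \Pi^{C,\ast}$, which I would address separately.

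For $\Pi^{A,\ast} \geq \Pi^{B,\ast}$: the plan is a pointwise comparison of virtual valuations. By Proposition~\ref{thm:screening-ic} and the individual-rationality characterization of Proposition~\ref{thm:screening-ir}, for $T \in \{A,B\}$ the designer's profit, after substituting the Mirrlees-representation transfers and binding the type-independent constant so that IR holds with equality at $\bar\lambda$, reduces to $\Pi^T(q) = \int_1^{\bar\lambda}\left[\Psi^T(\lambda)v(q(\lambda)) - cq(\lambda)\right]\,dG(\lambda)$ over non-increasing $q$. A direct computation yields
\[
\Psi^A(\lambda) - \Psi^B(\lambda) = \left(\Gamma^A(\lambda) - \Gamma^B(\lambda)\right) + 2M\frac{G(\lambda)}{(1+\lambda)^2 g(\lambda)} = \frac{(\lambda-1)M}{1+\lambda} + 2M\frac{G(\lambda)}{(1+\lambda)^2 g(\lambda)} \geq 0,
\]
since $\lambda \geq 1$ and $M, G, g > 0$. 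Because the feasibility constraint (monotonicity of $q$) coincides across the two timelines, plugging the B-optimal allocation $q^{B,\ast}$ into $\Pi^A$ delivers the chain $\Pi^{A,\ast} \geq \Pi^A(q^{B,\ast}) \geq \Pi^B(q^{B,\ast}) = \Pi^{B,\ast}$.

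For $\Pi^{B,\ast} \geq \Pi^{C,\ast}$: this is the principal obstacle. The plan is to take an optimal C-mechanism $(q^{C,\ast}, \hat\lambda^{C,\ast}, f^{C,\ast})$ (with $f^{C,\ast} \leq 0$ by Proposition~\ref{thm:negativef}(2)) and construct a dominating B-mechanism. The natural candidate uses the same allocation $q^B \equiv q^{C,\ast}$ (extended by zero above $\hat\lambda^{C,\ast}$), with B-Mirrlees payments chosen so IR binds at $\bar\lambda$. Then
\[
\Pi^B(q^{C,\ast}) - \Pi^{C,\ast} = \int_1^{\hat\lambda^{C,\ast}}\left[\Psi^B - \Gamma^C - M\tfrac{G(\hat\lambda^{C,\ast})-G}{g}\right] v(q^{C,\ast})\,dG - f^{C,\ast}G(\hat\lambda^{C,\ast}).
\]
Expanding via Fubini, the bracketed integral breaks into: (i) an integrated ``C-IR slack'' $\int(\Gamma^B-\Gamma^C)v(q^{C,\ast})\,dG - M\int_1^{\hat\lambda^{C,\ast}}[\int_1^\lambda v(q^{C,\ast})\,d\sigma]\,dG(\lambda)$, which by the pointwise IR inequality $[\Gamma^B(\lambda)-\Gamma^C(\lambda)]v(q^{C,\ast}(\lambda)) - M\int_1^\lambda v(q^{C,\ast})\,d\sigma \geq f^{C,\ast}$ is at least $f^{C,\ast}G(\hat\lambda^{C,\ast})$; and (ii) the B informational-rent term $-2(m+M)\int v(q^{C,\ast})\,G/(1+\lambda)^2\,d\lambda$, which the symmetry assumption $\lambda^g = \lambda^m$ built into $\Psi^B$ keeps small enough to be absorbed by the non-negative subsidy savings $-f^{C,\ast}G(\hat\lambda^{C,\ast})$.

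The main obstacle is precisely this second step. Unlike the A-versus-B comparison, no pointwise inequality $\Psi^B \geq \phi^C$ is available: the Mirrlees representations in B and C take structurally different forms (a surprise-effect informational rent on the one hand, a realization-effect integral on the other). The proof must therefore integrate the C-side IR constraint against $dG$, then combine the resulting aggregate inequality with the pointwise sign of $f^{C,\ast}$ and with a careful bound on the B informational-rent term that uses the symmetry of loss aversion across dimensions. The delicate sign accounting here is what makes the theorem non-trivial, and where I expect the bulk of the work to lie.
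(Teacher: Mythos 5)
The A-versus-B half of your proposal is correct and is essentially the paper's own argument: your algebra giving $\Psi^A(\lambda)-\Psi^B(\lambda)=\frac{(\lambda-1)M}{1+\lambda}+2M\frac{G(\lambda)}{(1+\lambda)^2 g(\lambda)}\ge 0$ is right, and since the feasible set (non-increasing $q$, IR bound at $\bar\lambda$) is identical in the two timelines, the chain $\Pi^{A,\ast}\ge \Pi^A(q^{B,\ast})\ge \Pi^B(q^{B,\ast})=\Pi^{B,\ast}$ is exactly how the paper proves this part.

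The B-versus-C half has a genuine gap, which you yourself flag. In your own accounting, once term (i) is bounded below by $f^{C,\ast}G(\hat\lambda^{C,\ast})$ via the integrated C-IR constraint, that bound exactly cancels the subsidy term $-f^{C,\ast}G(\hat\lambda^{C,\ast})$, and what is left is precisely term (ii), the non-positive B information rent $-2(m+M)\int_1^{\hat\lambda^{C,\ast}}\frac{G(\lambda)}{(1+\lambda)^2}v(q^{C,\ast}(\lambda))\,d\lambda$; so the chain as written only delivers $\Pi^B(q^{C,\ast})-\Pi^{C,\ast}\ge -(\text{B rent})$, which proves nothing. The assertion that symmetry of loss aversion keeps the rent ``small enough to be absorbed by $-f^{C,\ast}G(\hat\lambda^{C,\ast})$'' double-counts the subsidy savings (they were already spent against (i)) and is unsupported: nothing in the model ties the magnitude of $f^{C,\ast}$ — which is close to zero precisely when the C-IR slack is small — to the size of the B rent, and the transplant inequality $\Pi^B(q^{C,\ast})\ge\Pi^{C,\ast}$ at a fixed allocation is not established (and need not hold mechanism-by-mechanism). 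The paper does not transplant the C allocation into timeline B at all. In the proof of Proposition \ref{thm:negativef}(1) it substitutes the C-IR constraint pointwise into the C objective, so that the realization-effect integrals $M\int_1^\lambda v(q(s))ds$ cancel exactly and \emph{every} feasible C mechanism is bounded by $\int_1^{\hat\lambda}\left[\Gamma^B(\lambda)v(q(\lambda))-cq(\lambda)\right]dG(\lambda)$ (display \eqref{eq:upperbd}); the theorem then combines this bound with Proposition \ref{thm:negativef}(2) (the strictly negative participation subsidy $f$ whenever profits are positive) and the observation that the optimal IC mechanism in timeline B is automatically IR. Your step (i) reproduces that cancellation, but because you compare against the second-best B profit built from $\Psi^B$ (rent included) rather than against the $\Gamma^B$ bound, you are left with exactly the uncancelled discrepancy between $\Gamma^B$ and $\Psi^B$ — and that is the step your proposal leaves open, so the second inequality is not proved.
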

That timeline A is better than B is a direct consequence of the fact that the virtual valuation for timeline B given by $\Psi^B$ in \eqref{eq:psib} is strictly lower than the virtual valuation for timeline A given by $\Psi^A$. Timeline A and B share the surprise effects at the participation decision moment but the realization effect in the good dimension is absent from timeline B. This results in an increased willingness to pay for every type when compared to timeline B.

Intuitively, timeline $C$ may have a more favorable incentive compatibility situation overall than timeline $A$ since the payments from the buyers are not scaled down by $1+\lambda$. The latter happens in timeline A because of the negative surprise effect in the money dimension. On the other hand, as Proposition \ref{thm:negativef} shows, whenever employing timeline C the monopolist has to subsidize participation with a lump-sum payment independent of types. As it turns out for our specification of the problem this subsidy is too costly even for small marginal costs $c$.



The optimality of timeline A relies on two assumptions. First, our Assumption (S) corresponds to assuming a relatively `small' $\lambda^m\mu^m$ in subsection \ref{sec:prefs}. This makes for a small negative surprise effect in the money dimension in timeline A. Second, we have assumed that the monopolist can not offer stochastic payment schedules $t(\lambda)\in \Delta(\R)$. If the latter was possible then the incentive compatibility situation in timeline C is on one hand better than with deterministic contracts since the monopolist can use type-dependent lotteries as an additional screening device and on the other hand worse as now for every type \emph{ceteris paribus} the perceived payments are higher. We conjecture that relaxing these two assumptions may result in timeline C optimality in some cases whereas timeline A still dominates timeline B.\footnote{Details that timeline A still dominates B when relaxing the two assumptions are available upon request.} 


\section{Private information about intrinsic type}\label{sec:multiagent}


In this section we look at multi-agent mechanism design under the assumption that private information concerns the intrinsic type of other players. Behavioral parameters of news utility are common knowledge. For simplicity of exposition we focus on agents whose intrinsic type spaces are identical. The results about incentive compatibility and individual rationality can be generalized to asymmetric agents without difficulty.\footnote{See online appendix for applications to optimal mechanisms with asymmetric agents and timeline A in two cases: optimal auctions with asymmetric agents and bilateral trade.}



\subsection{Set Up}\label{sec:multiagentsetup}


We consider a group of agents $i=1,\dots,N$ who can potentially take part in a mechanism. We assume throughout the designer has standard Expected Utility risk neutral preferences, is interested in revenue maximization and that she has full commitment. 

An agent $i$ derives intrinsic utility from consumption of a profile $a$ of consumption goods coming from a set of allocations $\mathcal{A}$ as well as from a (net) monetary transfer to the principal which is denoted by $t_i$. Formally here $\mathcal{A}$ is assumed to be a compact, connected subset with non-empty interior of $\R^N$. We assume that intrinsic utility for each agent $i$ with type $\theta_i$ is quasilinear and of the form \eqref{eq:intrinsic}. Types are one-dimensional and are given by the interval $\Theta=[\underline{\theta},\bar{\theta}]$ consisting of non-negative numbers. $\theta_i$ is agent $i$'s (intrinsic) type and we denote $\Theta=\Theta_1\times\dots\times\Theta_n$ the product of the type spaces. 

If the agent doesn't participate in the mechanism the value of her intrinsic utility in the allocation dimension is given by a number denoted $v_i(\emptyset)$. We require throughout that $v_i(\emptyset)$ is either the \emph{maximal} or the \emph{minimal} value that $v_i$ can take, i.e. ($v_i(\emptyset)\in \{\sup v_i,\inf v_i \}$).\footnote{All of our applications both in the main paper as well as in the online appendix fulfill this assumption.} Moreover, in the money dimension we assume the agent doesn't expect any transfers in the absence of the mechanism.\footnote{The online appendix comments on the case of non-trivial outside options in the money dimension. Incentive compatibility and individual rationality characterizations are similar to the ones in this section. We focus here on the trivial case for ease of exposition.}

For the model presented in this section we make the following informational assumption.

  \paragraph{\textbf{Assumption (A):}} The private information of the agent $i$ consists of $\theta_i$ (her \emph{intrinsic} type). The agents know their type at the outset of any interaction with the designer (interim stage). All agents and the designer have a common knowledge prior for the type profile $(\theta_1,\dots,\theta_N)$. The types across agents are i.i.d. and the common marginal distribution of $\theta_i$, denoted $F$, has a continuously differentiable, strictly positive density $f:[\underline{\theta},\bar\theta]\ra\R_+$.
\vspace{3mm}\\

\begin{figure}[H]
  \centering
    \includegraphics[width=0.65\textwidth]{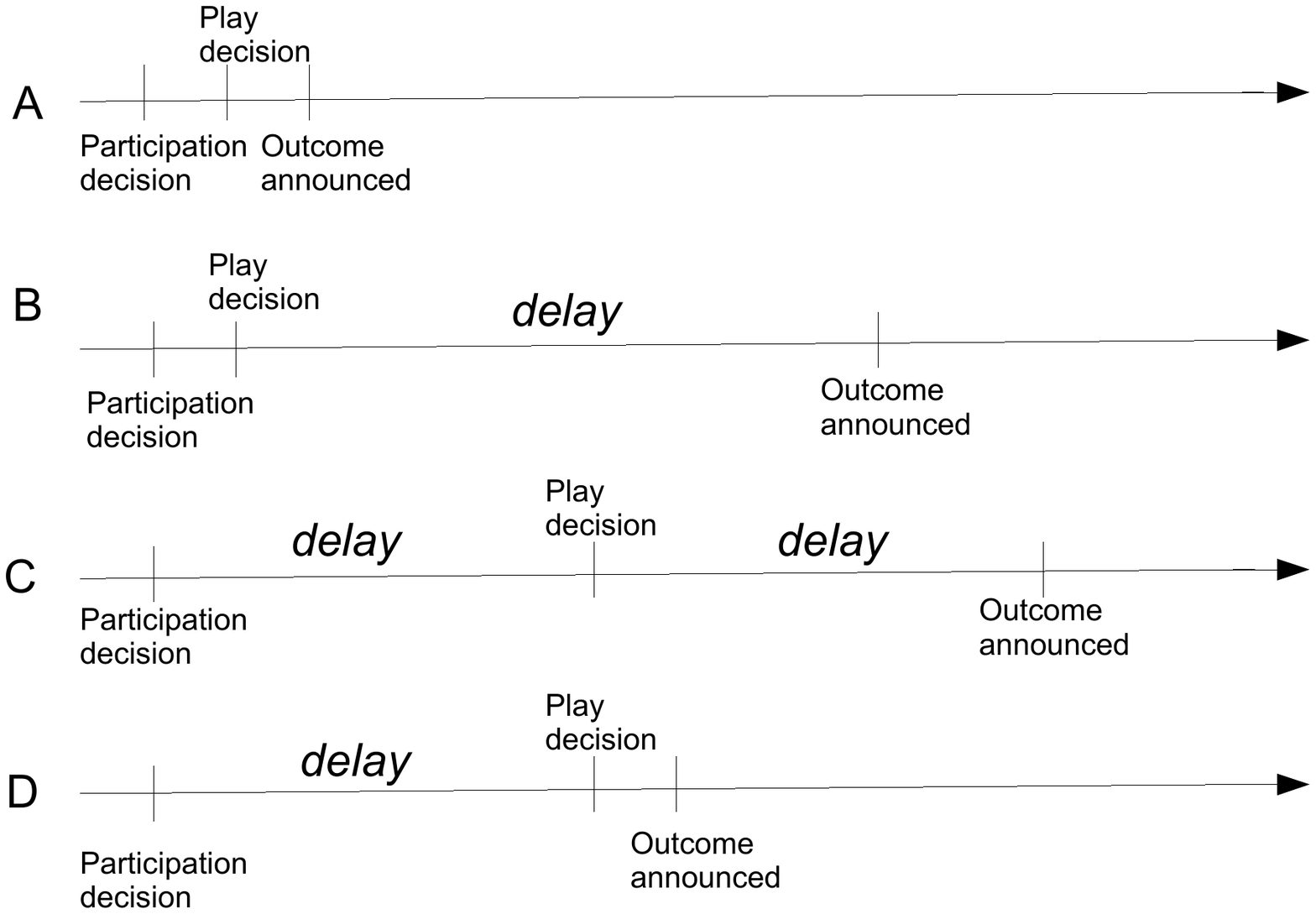}
      \caption{Timelines for the multiple agent model with uncertainty about intrinsic types.}\label{fig:tl}
\end{figure}

The timeline of the mechanism is either fixed due to technological constraints or a choice variable of the designer. In the latter case he declares at the beginning the timeline he commits to. As established in Proposition \ref{thm:oneagenthelp} the relevant timelines for the analysis are in Figure \ref{fig:tl}.

For any fixed timeline the designer can in principle consider arbitrarily complicated mechanisms. We restrict the analysis without loss of generality to \emph{direct} mechanisms. For a given timeline, a direct mechanism asks the agents to report their private information and assigns as a function of their reports an allocation $a$ from $\mathcal{A}$ and (net) transfers $t_i$ to the designer. The restriction to direct mechanisms is justified by the \emph{revelation principle}.\footnote{The revelation principle holds true for all models we consider in this section. We establish this fact in the Online Appendix. 
} 

Formally, a direct mechanism is a mapping which gives for each type report profile an allocation in $\A$ as well as payments from the agents to the designer \emph{together} with a timeline $T\in \{A,B,C\}$. Formally, for any $T\in \{A,B,C\}$ a direct mechanism is a map as follows.

\begin{equation}\label{eq:directM}
\mathcal{M}^T=(q,t_1,t_2,\dots,t_n):\Theta_1\times\dots\times\Theta_n\ra \mathcal{A}\times\R^n. 
\end{equation}

The uncertainty each agent faces in a given mechanism $\mathcal{M}^T$ derives only from not knowing the other agents' types. In terms of subsection \ref{sec:prefs} we are considering agents who are offered menus $M_i$ of lotteries over $\Delta(\A\times\Theta_{-i}\times\R)$ parametrized by $\theta_{i}\in [\underline{\theta},\bar{\theta}]$ and who have to time their decisions according to the timeline $T$. Given the general form of mechanisms allowed and the full support assumption on $F$ it is without loss of generality to assume that the mechanisms are not randomized, i.e. that the designer doesn't have a randomization device at her disposal.\footnote{She can use the type draws $\theta$ to induce desired distributions on payments.}



\subsection{Incentive Compatibility and Individual Rationality}\label{sec:multiagenticir}

In the following we take as given a direct mechanism $\mathcal{M}^T$ as in \eqref{eq:directM}. 

Fix an agent $i\in\{1,\dots,N\}$. For any distribution $G\in \Delta(\A\times\R)$ giving the distribution of pairs $(a,t_i)$ induced from the play under the mechanism over $\A\times\R$ the marginal of $G$ over $\A$ is denoted by $G^a$ and over $t_i$ is denoted by $G^t$.
We call the following term the \emph{news utility of agent $i$ from changing beliefs from $H$ to $G$ when the type of the agent is $\theta_i$.} 
\begin{equation}\label{eq:newsutilitygeneral}
\mathcal{N}_i(G|H|\theta_i) = \mu^g_i\int_{0}^1 \xi^g_i\left(v_i(c_{G^a}(p))\theta_i-v_i(c_{H^a}(p))\theta_i\right)dp+\mu^m_i\int_{0}^1\xi^m_i\left(c_{H^t}(p)-c_{G^t}(p)\right)dp.
\end{equation}
Here $\mu^g_i, \mu^m_i, \lambda_i^g,\lambda_i^m$ are the agent-specific behavioral parameters for news utility as in subsection \ref{sec:prefs}.  For future reference we also define the aggregate news utility parameters $\Lambda_i^g = \mu_i^g(\lambda_i^g-1)$ and $\Lambda_i^m = \mu_i^m(\lambda_i^m-1)$.


Assume that in the mechanism play other agents $-i$ decide to participate and report their types truthfully to the designer, whereas agent $i$ of type $\theta_i$ decides to report $\hat \theta_i$ upon a positive participation decision. Define $V_i(\hat\theta)$ as the expected value of $v_i$ and $T_i(\hat\theta_i)$ the expected value of $t_i$ under these reporting strategies from the perspective of agent $i$.\footnote{Formally, $V_i(\hat\theta_i) = \E_{\theta_{-i}}[q(\hat\theta_i,\theta_{-i})]$ and $T_i(\hat\theta_i) = \E_{\theta_{-i}}[t(\hat\theta_i,\theta_{-i})]$.} Finally, define $T^+_i(\hat\theta_i) = \E_{\theta_{-i}}[\max\{t_i(\hat\theta_i,\theta_{-i}),0\}]$, the the expected transfer of type $\hat\theta_i$ from agent $i$ to the designer. 

The news utility from the realization effect if the agent decides to participate for the case $v_i(\emptyset) =\inf_{a\in\mathcal{A}}v_i(a)$ is 

\[
\mathcal{N}_{i}(\hat{\theta}_i|\emptyset|\theta_i)=\mu^g_iV_i(\hat{\theta}_i)\theta_i-\mu^g_iv(\emptyset)\theta_i - \mu_i^mT_i(\hat{\theta}) - \Lambda_i^mT_i^+(\hat{\theta}),
\]
whereas for the case $v_i(\emptyset) =\sup_{a\in\mathcal{A}}v_i(a)$ it is 

\[
\mathcal{N}_{i}(\hat{\theta}_i|\emptyset|\theta_i)=\lambda_i^g\mu^g_iV_i(\hat{\theta}_i)\theta_i-\lambda_i^g\mu^g_iv(\emptyset)\theta_i - \mu_i^mT_i(\hat{\theta}) - \Lambda_i^mT_i^+(\hat{\theta}).
\]
In contrast to the first case of $v_i(\emptyset) =\inf_{a\in\mathcal{A}}v_i(a)$, in the second case of $v_i(\emptyset) =\sup_{a\in\mathcal{A}}v_i(a)$ the utility difference between pre-mechanism belief and post-participation decision is weighted additionally by $\lambda_i^g$. This is because of loss aversion.

Consider now the \emph{news utility from the realization of the outcome} of the mechanism for timelines B and C. If agent $i$ of type $\theta_i$ has reported $\hat\theta_i$ and the realized part of the outcome of the mechanism relevant to agent $i$ is 
\[
\left(q(\hat\theta_i,\hat\theta_{-i}),t_i(\hat\theta_i,\hat\theta_{-i})\right),
\]
in addition to intrinsic utility she experiences news utility in the good dimension of
\begin{align}\label{eq:realizationnugood}
\begin{split}
&\mu_i^g\int_{\theta_{-i}:v_i(q(\hat\theta_i,\hat\theta_{-i}))>v_i(q(\hat\theta_i,\theta_{-i}))}(v_i(q(\hat\theta_i,\hat\theta_{-i}))-v_i(q(\hat\theta_i,\theta_{-i})))dF_{-i}(\theta_{-i}) \\& + \mu_i^g\lambda_i^g \int_{\theta_{-i}:v_i(q(\hat\theta_i,\hat\theta_{-i}))<v_i(q(\hat\theta_i,\theta_{-i}))}(v_i(q(\hat\theta_i,\hat\theta_{-i}))-v_i(q(\hat\theta_i,\theta_{-i})))dF_{-i}(\theta_{-i})
\end{split}
\end{align}
Again, the second summand which is non-positive, is weighted by $\lambda_i^g$ due to the loss aversion in the good dimension. 

The agent experiences news utility in the money dimension given by 

\begin{align}\label{eq:realizationnumoney}
\begin{split}
&\mu_i^g\int_{\theta_{-i}:t_i(\hat\theta_i,\hat\theta_{-i})<t_i(\hat\theta_i,\theta_{-i})}(t_i(\hat\theta_i,\hat\theta_{-i})-t_i(\hat\theta_i,\theta_{-i}))dF_{-i}(\theta_{-i}) \\& - \mu_i^g\lambda_i^g \int_{\theta_{-i}:t_i(\hat\theta_i,\hat\theta_{-i})>t_i(\hat\theta_i,\theta_{-i})}(t_i(\hat\theta_i,\hat\theta_{-i}))-t_i(\hat\theta_i,\theta_{-i}))dF_{-i}(\theta_{-i}).
\end{split}
\end{align}
The news utility experienced at the realization moment of the mechanism, denoted $\mathcal{N}_i((\hat\theta_{i},\hat\theta_{-i})|\hat\theta_{i}|\theta_i)$, is thus the sum of \eqref{eq:realizationnugood} and 
\eqref{eq:realizationnumoney}. 

For timelines B and C at the report decision moment the agent takes into account the news utility from the realization effect in expectation. We show that agent $i$ of type $\theta_i$ has an overall term of \emph{expected realization news utility} of

\begin{equation}\label{eq:helpdegen}
-\Lambda_i^g \Gamma_i^g(\hat\theta_i)\theta_i-\Lambda_i^m\omega_i(\hat\theta_i),
\end{equation}
where $\Gamma_i^g(\hat\theta_i)$ and $\omega_i(\hat\theta_i)$ are non-negative. These `frictions' are zero if and only if respectively the allocation $q$ and the transfer $t_i$ don't depend on the realization of the types of the other players, that is don't depend on $\theta_{-i}$. The realization effect thus lowers the decision utility of an agent at the reporting stage. 
Note that in the screening model of section \ref{sec:screening} we only had one such expected news utility term $M$ which was exogenously given. Here the expected news utility terms $\Gamma_i^g(\hat\theta_i)$ and $\omega_i(\hat\theta_i)$ are influenced by the designer through the mechanism choice as well as the timeline choice, if the latter is a choice variable. This additional flexibility has important implications for revenue maximization as we will see. 


The following Proposition registers the decision utilities for agents in each of the participation and reporting stages.
\begin{proposition}
\label{thm:IC}
For the timelines in Figure \ref{fig:tl}, the decision utilities at the type reporting stage have a quasilinear, product form. Namely, the utility of agent $i$ of type $\theta_i$ from reporting type $\hat\theta_i$ is of the form
\begin{equation}\label{eq:product}
\mathcal{V}^t_i(\hat{\theta}_i|\theta_i) = \mathcal{W}_i^t(\hat{\theta}_i)\theta_i- \Upsilon^t_i(\hat{\theta_i}), \quad t=A,B,C.
\end{equation}
Depending on the timeline $t$ the functions $\mathcal{W}_i^t, \Upsilon^t_i:[\underline{\theta},\bar\theta]\ra\R$ have the following form in the case $v_i(\emptyset) = \inf v_i$
\begin{itemize}
\item Timeline A: $$\mathcal{W}_i^A(\theta_i)=(1+\mu^g_i)V_i(\theta_i)-\mu^g_iv(\emptyset),\quad \Upsilon^A_i(\theta_i) = (1+\mu_i^m)T_i(\theta_i) + \Lambda_i^mT_i^+(\theta_i).$$
\item Timeline B: 
$$\mathcal{W}_i^B(\theta_i)=(1+\mu^g_i)V_i(\theta_i)-\mu^g_iv(\emptyset)-\Lambda_i^g\Gamma_i^g(\theta_i),\quad \Upsilon^B_i(\theta_i) = (1+\mu_i^m)T_i(\theta_i) + \Lambda_i^m(T_i^+(\theta_i)+\omega_i(\theta_i)).$$
\item Timeline C: 
$$\mathcal{W}_i^{C}(\theta_i)=V_i(\theta_i)-\Lambda_i^g\Gamma_i^g(\theta_i),\quad \Upsilon^{C}_i(\theta_i) = T_i(\theta_i) + \Lambda_i^m\omega_i(\theta_i).$$
\end{itemize}
For the case $v_i(\emptyset) = \sup v_i$ the only change is in timelines $t=A,B$ where the terms $\mathcal{W}_i^t$ change into 

\begin{itemize}
\item Timeline A: $$\mathcal{W}_i^A(\theta_i)=(1+\lambda_i^g\mu^g_i)V_i(\theta_i)-\lambda_i^g\mu^g_iv(\emptyset),$$
\item Timeline B: 
$$\mathcal{W}_i^B(\theta_i)=(1+\lambda_i^g\mu^g_i)V_i(\theta_i)-\lambda_i^g\mu^g_iv(\emptyset)-\Lambda_i^g\Gamma_i^g(\theta_i).$$
\end{itemize}
\end{proposition}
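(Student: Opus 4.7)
For each timeline $t \in \{A,B,C\}$ the plan is to write $\mathcal{V}^t_i(\hat\theta_i|\theta_i)$ as the sum of the expected intrinsic utility of the allocation-transfer pair induced by reporting $\hat\theta_i$, plus the expected surprise news utility (present only in timelines A and B), plus the expected realization news utility (present only in timelines B and C); the quasilinear product form then follows from two structural observations: (i) intrinsic utility reads $v_i(a)\theta_i - t_i$, which is linear in $\theta_i$ in the allocation dimension and free of $\theta_i$ in the money dimension, and (ii) the good-dimension news utility in \eqref{eq:newsutilitygeneral} compares $v_i(a)\theta_i$ across two distributions, so $\theta_i$ factors out of every percentile comparison, while the money-dimension news utility does not involve $\theta_i$ at all. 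Collecting the coefficient of $\theta_i$ then yields $\mathcal{W}_i^t(\hat\theta_i)$ and the $\theta_i$-free remainder yields $-\Upsilon_i^t(\hat\theta_i)$.

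The computational engine is the identity $\xi^j(x) = x + (\lambda^j - 1)\min\{x,0\}$, which splits every gain-loss integral into a linear part and a loss-only part weighted by $\lambda_i^j - 1$. Applied to the expected realization news utility in timelines B and C, the linear part vanishes in expectation by the i.i.d.\ symmetry of Assumption~(A): the realized opponents' profile $\hat\theta_{-i}$ and the integration variable $\theta_{-i}$ used to build the belief distribution have the same marginal, so the expectation of $v_i(q(\hat\theta_i,\hat\theta_{-i})) - v_i(q(\hat\theta_i,\theta_{-i}))$ is zero, and similarly for transfers. The surviving loss-only piece, scaled by $\mu_i^j(\lambda_i^j - 1) = \Lambda_i^j$, defines the functions $\Gamma_i^g(\hat\theta_i)$ and $\omega_i(\hat\theta_i)$ as expected downside deviations of $v_i(q(\hat\theta_i,\cdot))$ and $t_i(\hat\theta_i,\cdot)$; both are non-negative by construction and vanish precisely when the allocation, respectively the transfer, is constant in the opponents' types.

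For the surprise effect in timelines A and B the outside option is a point mass at $(v_i(\emptyset), 0)$, so the percentile-by-percentile comparison in \eqref{eq:newsutilitygeneral} reduces to plain expectations once one fixes the sign of the percentile-wise difference. In the money dimension, $\xi^m_i(0 - t_i) = -t_i - (\lambda_i^m - 1)\max\{t_i,0\}$; averaging over $\theta_{-i}$ and adding the intrinsic contribution $-T_i(\hat\theta_i)$ produces exactly $-(1 + \mu_i^m)T_i(\hat\theta_i) - \Lambda_i^m T_i^+(\hat\theta_i)$. In the good dimension the sign of $v_i(a) - v_i(\emptyset)$ is constant in $a$ under the maintained hypothesis that $v_i(\emptyset) \in \{\inf v_i, \sup v_i\}$: in the $\inf$ case every realization is a gain and the surprise contributes $\mu_i^g(V_i(\hat\theta_i) - v_i(\emptyset))\theta_i$, while in the $\sup$ case every realization is a loss and the surprise contributes $\lambda_i^g\mu_i^g(V_i(\hat\theta_i) - v_i(\emptyset))\theta_i$; adding the intrinsic $V_i(\hat\theta_i)\theta_i$ gives the two stated forms of $\mathcal{W}_i^t$. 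Timeline C follows from the same calculus with the surprise effect dropped, which removes the ``$1+$'' and $v_i(\emptyset)$ terms from $\mathcal{W}_i^{C}$ and the $\mu_i^m T_i$ and $\Lambda_i^m T_i^+$ terms from $\Upsilon_i^{C}$. The main obstacle, and what makes this a nontrivial accounting rather than a purely formal manipulation, is keeping the gain/loss sign conventions straight in the money dimension (where higher transfers are losses, reversing the comparison direction relative to the good dimension) simultaneously with the two $v_i(\emptyset)$ cases in the good dimension.
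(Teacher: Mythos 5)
Your proposal is correct and follows essentially the same route as the paper: decompose decision utility into intrinsic, surprise (degenerate outside option, so the percentile comparison collapses to an expectation with the loss/gain weighting determined by whether $v_i(\emptyset)$ is the infimum or supremum) and expected realization terms, and split each gain-loss integral into positive and negative ranges — your identity $\xi^j(x)=x+(\lambda^j-1)\min\{x,0\}$ plus the i.i.d.-symmetry cancellation is exactly the content of the paper's Lemma \ref{thm:expectedNU}, which the paper cites where you re-derive it. The only difference is that you spell out the "straightforward algebra" the paper omits.
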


Note that \eqref{eq:product} is reminiscent of the classical utility assumption \eqref{eq:intrinsic}. We name the terms $\mathcal{W}_i^t$ \emph{perceived valuations}. In the equilibrium of the mechanism they give the marginal expected valuation of the allocation for the agent after taking into account news utility effects, be it the surprise effect (timelines A,B), the realization effect (timelines B,C) or both (timeline B). When news utility is absent, all perceived valuations $\mathcal{W}_i^t$ are equal to the expected intrinsic valuation $V_i$. 

We name the terms $\Upsilon_i^t$ the \emph{perceived transfers}. In case news utility is absent, these terms become equal to the expected interim transfers $T_i$ for all timelines. In the equilibrium of the mechanism they give the effect of money transfers on the decision utility in the type reporting stage after taking into account news utility effects. 

With these definitions, classical results help give a full characterization of incentive compatibility for all timelines.

\begin{proposition}[Incentive Compatibility]\label{thm:icgeneral}
A direct mechanism $\mathcal{M}$ is incentive compatible for the timeline $T\in \{A,B,C\}$ if and only if the perceived valuations $\mathcal{W}^T_i$ are non-decreasing.
\\\indent If this is the case, we have the following Mirrlees representation for the interim utility at the reporting stage $\mathcal{V}^T_i(\theta_i)= \mathcal{V}^T_i(\theta_i|\theta_i)$
\begin{equation*}\label{eq:mirrleesgeneral}
\mathcal{V}^T_i(\theta_i) = \mathcal{V}^T_i(\underline{\theta}_i|\underline{\theta}_i) +  \int_{\underline{\theta}_i}^{\theta_i} \mathcal{W}^T_i(s)ds.
\end{equation*}
It also holds 
\begin{equation*}\label{eq:upsilongeneral}
\Upsilon_i^T(\theta_i) = \mathcal{W}^T_i(\theta_i)\theta_i - \mathcal{V}^T_i(\underline{\theta}_i) - \int_{\underline{\theta}_i}^{\theta_i} \mathcal{W}^T_i(s)ds.
\end{equation*}
\end{proposition}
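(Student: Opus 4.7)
The plan is to leverage the quasilinear product form established in Proposition \ref{thm:IC}, which reduces the question to a single-agent, one-dimensional screening argument of the classical Mirrlees/Myerson type. Once we fix the timeline $T$ and the agent $i$, the incentive compatibility condition
\[
\theta_i \in \argmax_{\hat\theta_i \in [\underline{\theta},\bar\theta]}\ \mathcal{W}^T_i(\hat\theta_i)\,\theta_i - \Upsilon^T_i(\hat\theta_i)
\]
is formally identical to the classical quasilinear screening problem, with $\mathcal{W}^T_i$ playing the role of the allocation-valuation pair and $\Upsilon^T_i$ playing the role of the expected transfer. This is precisely the setting in which monotonicity of the allocation plus the envelope formula characterize incentive compatibility.

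First, I would prove the \emph{only if} direction. Assume the mechanism is IC and define $\mathcal{V}^T_i(\theta_i) := \mathcal{V}^T_i(\theta_i|\theta_i)$, the on-path interim reporting utility. For each fixed $\hat\theta_i$, the map $\theta_i \mapsto \mathcal{W}^T_i(\hat\theta_i)\theta_i - \Upsilon^T_i(\hat\theta_i)$ is affine with slope $\mathcal{W}^T_i(\hat\theta_i)$, and $\mathcal{V}^T_i(\theta_i)$ is the pointwise supremum of these affine functions by IC. Hence $\mathcal{V}^T_i$ is convex in $\theta_i$, and an application of the Milgrom--Segal envelope theorem yields that $\mathcal{V}^T_i$ is absolutely continuous with a.e.\ derivative $\mathcal{W}^T_i(\theta_i)$. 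Integrating from $\underline{\theta}_i$ gives the Mirrlees representation, and solving for $\Upsilon^T_i$ gives the second displayed formula. Convexity of $\mathcal{V}^T_i$ together with the envelope identity forces $\mathcal{W}^T_i$ to be non-decreasing (since it equals the derivative of a convex function a.e., but after choosing the non-decreasing version; alternatively, subtract the IC inequalities $\mathcal{V}^T_i(\theta_i|\theta_i)\ge \mathcal{V}^T_i(\hat\theta_i|\theta_i)$ and $\mathcal{V}^T_i(\hat\theta_i|\hat\theta_i)\ge \mathcal{V}^T_i(\theta_i|\hat\theta_i)$ and collect to get $(\mathcal{W}^T_i(\theta_i)-\mathcal{W}^T_i(\hat\theta_i))(\theta_i-\hat\theta_i)\ge 0$).

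For the \emph{if} direction, assume $\mathcal{W}^T_i$ is non-decreasing and that $\Upsilon^T_i$ is defined by the integral formula (which, up to the constant $\mathcal{V}^T_i(\underline{\theta}_i)$, is forced by the candidate Mirrlees representation). A direct computation gives
\[
\mathcal{V}^T_i(\theta_i|\theta_i) - \mathcal{V}^T_i(\hat\theta_i|\theta_i) = \int_{\hat\theta_i}^{\theta_i}\bigl(\mathcal{W}^T_i(s)-\mathcal{W}^T_i(\hat\theta_i)\bigr)\,ds \ge 0,
\]
where non-negativity follows from the monotonicity of $\mathcal{W}^T_i$ (the integrand has the same sign as $\theta_i-\hat\theta_i$ on the interval of integration, so the integral is non-negative regardless of the direction). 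This establishes IC.

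The main potential obstacle is the verification that the product form in Proposition \ref{thm:IC} really is the full decision utility for each timeline, including that the additional frictions $\Gamma_i^g, \omega_i$ and the kink term $T_i^+$ are fully absorbed into $\mathcal{W}^T_i$ and $\Upsilon^T_i$ as functions of the reported type alone (and not of the true type $\theta_i$). Once Proposition \ref{thm:IC} is in hand, however, the rest is a purely mechanical application of the envelope/monotonicity toolkit, and no extra work is needed beyond what I sketched above.
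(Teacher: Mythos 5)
Your reduced-form argument (treat $\mathcal{V}^T_i(\hat\theta_i|\theta_i)=\mathcal{W}^T_i(\hat\theta_i)\theta_i-\Upsilon^T_i(\hat\theta_i)$ as a classical quasilinear screening problem, then run the envelope/monotonicity toolkit) is exactly the paper's first step: the paper packages it as an abstract characterization (Proposition \ref{thm:icabstract}) and explicitly dismisses it as a trivial adaptation of the classical B\"orgers/Myerson proof. Your ``only if'' direction and your verification of IC given monotone $\mathcal{W}^T_i$ and an $\Upsilon^T_i$ satisfying the integral formula are both fine.

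The gap is in the ``if'' direction, and it is precisely the part the paper's proof spends its effort on. In this model $\Upsilon^T_i$ is not a free instrument the designer can set at will: it is a specific \emph{nonlinear} functional of the actual payment rule $t_i(\cdot,\cdot)$, built from $T_i$, the kinked term $T_i^+$, and the endogenous friction $\omega_i$ (with weights $\mu_i^m$, $\Lambda_i^m$ that differ across timelines). So after the envelope formula pins down what $\Upsilon^T_i$ must be (up to the constant $\mathcal{V}^T_i(\underline{\theta}_i)$), you still must show that some payment schedule $t_i:\Theta\to\R$ actually induces that perceived transfer schedule; writing ``assume $\Upsilon^T_i$ is defined by the integral formula'' assumes away exactly this inversion step. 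The paper resolves it constructively: choose $t_i$ constant in $\theta_{-i}$, so that $\omega_i\equiv 0$ and $T_i^+=\max\{T_i,0\}$, and then set $T_i(\theta)=\Upsilon_i(\theta)/(1+\lambda_i^m\mu_i^m)$ where $\Upsilon_i(\theta)>0$, $T_i(\theta)=\Upsilon_i(\theta)/(1+\mu_i^m)$ where $\Upsilon_i(\theta)<0$, and $T_i(\theta)=0$ where $\Upsilon_i(\theta)=0$ (with Lemma \ref{thm:lmhelpmixed} available if one wants nondegenerate $\omega_i$). You flag the dependence on Proposition \ref{thm:IC} for the forward direction (decision utility $\Rightarrow$ product form), but the converse direction (target perceived transfers $\Rightarrow$ existing payments) is the genuinely non-classical piece of this proposition, and your proposal does not address it.
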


Analogously to the classical case, incentive compatibility is equivalent to a monotonicity condition. Due to news utility effects this monotonicity condition is applied to \emph{perceived} valuations instead of expected intrinsic valuations. 

We now turn to the individual rationality requirement for incentive compatible mechanisms. For all timelines each agent experiences the surprise effect of the mechanism and, given sophistication, takes into account her own future behavior when deciding to participate. In particular, when facing an incentive compatible mechanism she knows she is going to reveal her true type to the designer at the reporting stage. Under the assumption that the other agents play truthfully reporting the correct type induces a distribution over good consumption and money transfers. 



We assume the outside option of the mechanism is degenerate: for agent $i$ of type $\theta_i$ the outside option is $v_i(\emptyset)\theta_i$. The mechanism is individually rational for an agent $i$ when the equilibrium utility it offers in period one is at least as high as $v_i(\emptyset)\theta_i$ for all types $\theta_i\in [\underline{\theta},\bar\theta]$. This equilibrium utility incorporates not only the expected consumption and transfers from the realization of the mechanism, but also news utility in the form of the surprise effect for all timelines and in the form of the expected realization effect for timelines B,C. The following Proposition summarizes this in participation utility formulas.

\begin{proposition}\label{thm:IRgeneral}(Individual Rationality)
An incentive compatible mechanism $\mathcal{M}$ is individually rational for the respective timelines if the following is fulfilled.
\begin{itemize}
\item Timeline A: $\quad \mathcal{W}_i^A(\theta_i)\theta_i- \Upsilon^A_i(\theta_i)\geq v_i(\emptyset)\theta_i$, for all $\theta_i\in[\underline{\theta},\bar\theta]$. 
\item Timeline B: $\quad \mathcal{W}_i^B(\theta_i)\theta_i- \Upsilon^B_i(\theta_i)\geq v_i(\emptyset)\theta_i$, for all $\theta_i\in[\underline{\theta},\bar\theta]$.
\item Timeline C: $\quad \mathcal{W}_i^B(\theta_i)\theta_i- \Upsilon^B_i(\theta_i)\geq v_i(\emptyset)\theta_i$, for all $\theta_i\in[\underline{\theta},\bar\theta]$.
\end{itemize}
\end{proposition}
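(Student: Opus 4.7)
The plan is to compute, for each timeline, the decision utility of a type-$\theta_i$ agent at the participation stage assuming she anticipates truthful reporting (justified by incentive compatibility and sophistication from Assumption 1), and to require that this quantity weakly exceed the outside-option value $v_i(\emptyset)\theta_i$.

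For timelines A and B, no delay separates participation from reporting, so both decisions are taken by the same self. Her participation utility therefore equals her reporting-stage decision utility evaluated at the truthful report, which by Proposition \ref{thm:IC} is $\mathcal{W}_i^T(\theta_i)\theta_i - \Upsilon_i^T(\theta_i)$ for $T \in \{A,B\}$. The stated IR inequalities follow immediately.

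For timeline C, the participation self and the reporting self are distinct. The key claim is that the participation self's utility, under anticipated truthful reporting, coincides with the timeline-B expression $\mathcal{W}_i^B(\theta_i)\theta_i - \Upsilon_i^B(\theta_i)$. To see this, decompose it into three pieces: (i) the surprise effect between the degenerate pre-mechanism belief and the distribution of outcomes induced by truthful reporting; (ii) the expected intrinsic utility $V_i(\theta_i)\theta_i - T_i(\theta_i)$; and (iii) the expected realization news utility $-\Lambda_i^g\Gamma_i^g(\theta_i)\theta_i - \Lambda_i^m\omega_i(\theta_i)$, already computed in \eqref{eq:helpdegen}. Item (i) is obtained by the same quantile-by-quantile integration underlying Proposition \ref{thm:IC} for timeline B: because the outside option puts all money mass at $0$, the money contribution reduces to $-\mu_i^m T_i(\theta_i) - \Lambda_i^m T_i^+(\theta_i)$; the good contribution is $\mu_i^g(V_i(\theta_i)-v(\emptyset))\theta_i$ when $v_i(\emptyset)=\inf v_i$, and $\lambda_i^g$ times that when $v_i(\emptyset)=\sup v_i$. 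Summing (i)--(iii) reproduces $\mathcal{W}_i^B(\theta_i)\theta_i - \Upsilon_i^B(\theta_i)$ exactly, which is why the timeline-C IR condition invokes the B-perceived valuations rather than the C-ones.

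The main conceptual point, rather than a computational obstacle, is that in timeline C incentive compatibility is governed by $\mathcal{W}_i^C,\Upsilon_i^C$ while individual rationality is governed by $\mathcal{W}_i^B,\Upsilon_i^B$, because the reporting self treats the surprise effect as sunk while the participation self does not. This wedge between the two objectives is precisely the source of the type-independent subsidy that drives the timeline-C analyses in Section \ref{sec:screening} and Section \ref{sec:multiagent}. The only minor bookkeeping nuance is the dichotomy $v_i(\emptyset) \in \{\inf v_i, \sup v_i\}$, which only swaps whether the good-dimension surprise gap is weighted by $\mu_i^g$ or by $\lambda_i^g\mu_i^g$, matching the corresponding change in $\mathcal{W}_i^B$ in Proposition \ref{thm:IC}.
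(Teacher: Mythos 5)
Your proposal is correct and follows essentially the same route as the paper: for timelines A and B the participation utility is just the truthful reporting-stage decision utility from Proposition \ref{thm:IC}, while for timeline C the participation self's utility is decomposed into surprise effect, expected intrinsic utility and expected realization news utility, which sums to the timeline-B expression $\mathcal{W}_i^B(\theta_i)\theta_i-\Upsilon_i^B(\theta_i)$ evaluated at the anticipated truthful report. Your explicit remark on the wedge between $(\mathcal{W}_i^C,\Upsilon_i^C)$ for incentive compatibility and $(\mathcal{W}_i^B,\Upsilon_i^B)$ for participation matches the paper's own discussion.
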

Individual rationality for timelines A and B is equivalent to the requirement that the equilibrium \emph{decision} utility in the \emph{reporting} stage exceeds the value of the outside option $v_i(\emptyset)\theta_i$. This is because in both of these timelines there is no delay between the participation decision and the reporting decision. In contrast, in timeline C the reporting decision doesn't take into account the bygone \emph{surprise effect}. The participation self has to take both news utility effects into account. Therefore the participation decision utilities are the same as for timeline B, except that the terms $V_i,T_i,T_i^+$ are now determined in the reporting stage. 


\subsubsection{Ex-post Efficiency in public good provision}\label{sec:multiagentexpostpublic}

In this subsection we illustrate how loss aversion affects incentive compatibility results from classical Bayesian mechanism design. Namely, we show how ex-post efficiency in a symmetric public good provision setting may fail to be incentive compatible, even though it never does so in the absence of news utility effects. We also illustrate how this issue may be overcome whenever there are enough players in the game. 

Assume a society of $N\geq 2$ identical agents whose intrinsic utility of a public good is private information and is independently and identically distributed according to $F$ with support $[\underline{\theta},\overline{\theta}]$ with $\underline{\theta}\geq 0$.\footnote{The last requirement means the public good would always be (weakly) desirable if it is implemented without transfers.} Assume the agents have quasilinear intrinsic utility and denote by $q$ the probability of provision of the public good: the intrinsic valuation has the form $v(q)=q$. Let $c(N)$ be the average costs so that providing the public good will cost to the social planner $Nc(N)$. 

Maximizing welfare under \emph{complete information} and under the assumption that the social planner has the funds for provision gives the ex-post efficiency rule:  

\begin{equation}
\label{eq:expostefficiencypublicdelay}
q(\theta_1,\dots,\theta_n) = \begin{cases} 1 &\mbox{if } (1+\mu^g)\sum_{i=1}^n\theta_i\geq Nc(N)  \\ 
0 & \mbox{otherwise. } \end{cases}
\end{equation}

This is the first-best rule under news utility. Denote by $\tilde{c}(N)=\frac{c(N)}{1+\mu^g}$, the normalized per-person provision costs. 
If an agent is of type $\theta$ her interim probability of public good provision under \eqref{eq:expostefficiencypublicdelay} is $Q(\theta)=1-F^{*(N-1)}(N\tilde{c}(N)-\theta)$.\footnote{Here with $F^{*(N)}$ we denote the $N$-times convolution of a distribution $F$, i.e. the distribution of the sum of $N$ i.i.d. draws of $F$. }

In the classical setting without news utility and loss aversion ex-post efficiency is always incentive compatible. This is not always the case in the presence of news utility.\footnote{Ex-post efficiency is always incentive compatible with news utility if $c(N)\leq (1+\mu^g)\underline{\theta}$ or if $c(N)\geq (1+\mu^g)\bar{\theta}$. We omit these uninteresting cases and focus on the interesting case $(1+\mu^g)\underline{\theta}< c(N)<(1+\mu^g)\bar{\theta}$ in the following.}

\begin{proposition}\label{thm:prop1public} Assume $(1+\mu^g)\underline{\theta}< c(N)<(1+\mu^g)\bar{\theta}$.

1) Ex-post efficiency is always incentive compatible for timeline A. 

If $\Lambda^g\leq 1+\mu^g$ then ex-post efficiency is incentive compatible for timeline B. 

If $\Lambda^g\leq 1$ then ex-post efficiency is incentive compatible for timeline C.

2) If $\Lambda^g> 1+\mu^g$, it can happen that $q$ is not incentive compatible for timeline B, for example if $F$ is concentrated in the vicinity of $\underline{\theta}$ and $Nc(N)$ is big enough. This impossibility becomes more common with higher loss aversion in the consumption dimension. 

The same kind of result holds true for timeline C, if $\Lambda^g>1$.

3) Let $\mathbb{E}[F]$ be the mathematical expectation of the distribution $F$. \newline If $\underline{\theta}<\limsup_{n\rightarrow \infty}\tilde{c}(N)< \mathbb{E}[F]$, then $q$ is always incentive compatible whenever $N$ is high enough with \emph{every} timeline. 
\end{proposition}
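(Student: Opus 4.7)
The plan is to apply Proposition~\ref{thm:icgeneral}, which reduces incentive compatibility to the non-decreasingness of each perceived valuation $\mathcal{W}_i^T(\theta_i)$ in $\theta_i$, and Proposition~\ref{thm:IC} for the explicit form of $\mathcal{W}_i^T$. In this symmetric public-good setup $v(q)=q$, the allocation is binary, $v(\emptyset)=0=\inf v$, and the interim valuation is $V_i(\theta_i)=Q(\theta_i)$, non-decreasing in $\theta_i$ by the positive density of $F$. The one non-trivial input is the expected-realization friction $\Gamma_i^g(\theta_i)$. Because $q\in\{0,1\}$, the percentile-by-percentile comparison in \eqref{eq:realizationnugood}--\eqref{eq:helpdegen} collapses: conditional on provision (probability $Q(\theta_i)$) the agent gains $\theta_i$ on a $(1-Q(\theta_i))$-fraction of prior percentiles; conditional on non-provision (probability $1-Q(\theta_i)$) she loses $\theta_i$, weighted by $\lambda^g$, on a $Q(\theta_i)$-fraction of prior percentiles. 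Averaging yields $\Gamma_i^g(\theta_i)=Q(\theta_i)(1-Q(\theta_i))$.

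Substituting into Proposition~\ref{thm:IC} gives $\mathcal{W}^A(\theta_i)=(1+\mu^g)Q(\theta_i)$, $\mathcal{W}^B(\theta_i)=(1+\mu^g)Q(\theta_i)-\Lambda^g Q(\theta_i)(1-Q(\theta_i))$ and $\mathcal{W}^C(\theta_i)=Q(\theta_i)-\Lambda^g Q(\theta_i)(1-Q(\theta_i))$. Differentiating and factoring out $Q'(\theta_i)>0$, the sign of $(\mathcal{W}^B)'(\theta_i)$ is that of $(1+\mu^g)-\Lambda^g+2\Lambda^g Q(\theta_i)$, which is increasing in $Q$ and thus minimized at the smallest attainable $Q$. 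Requiring this factor to be non-negative at $Q=0$ gives the sufficient condition $\Lambda^g\le 1+\mu^g$; the analogous computation with $(1+\mu^g)$ replaced by $1$ yields $\Lambda^g\le 1$ for Timeline~C; Timeline~A requires no condition since $\mathcal{W}^A$ is a positive multiple of $Q$.

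For Part~2, I would invert this analysis: when $\Lambda^g>1+\mu^g$ (respectively $>1$), $(\mathcal{W}^B)'(\theta_i)<0$ (resp.\ $(\mathcal{W}^C)'<0$) on any range of types where $Q(\theta_i)<\tfrac{\Lambda^g-(1+\mu^g)}{2\Lambda^g}$ (resp.\ $<\tfrac{\Lambda^g-1}{2\Lambda^g}$). Concentrating $F$ in a small neighborhood of $\underline\theta$ and choosing $Nc(N)$ large enough that $\sum_j\theta_j<Nc(N)$ with overwhelming probability forces $Q(\theta_i)$ to be uniformly small across all types, so $\mathcal{W}^B$ (resp.\ $\mathcal{W}^C$) decreases over a non-degenerate range of $\theta_i$ and IC fails. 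Increasing $\lambda^g$ enlarges the threshold $\tfrac{\Lambda^g-(1+\mu^g)}{2\Lambda^g}$, so the set of distributions $F$ for which monotonicity fails grows, giving the monotonicity-in-loss-aversion comment.

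Part~3 is a law-of-large-numbers argument. If $\limsup_N\tilde c(N)<\E[F]$, pick $\varepsilon>0$ with $\limsup_N\tilde c(N)+\varepsilon<\E[F]$; then eventually $(N\tilde c(N)-\theta_i)/(N-1)\le\limsup_N\tilde c(N)+\varepsilon$ uniformly in $\theta_i\in[\underline\theta,\bar\theta]$, and the weak LLN applied to $\tfrac{1}{N-1}\sum_{j\ne i}\theta_j$ yields $Q(\theta_i)\to 1$ uniformly in $\theta_i$. Hence for $N$ large enough, $Q(\theta_i)$ exceeds both Part~1 thresholds uniformly in $\theta_i$, so every $\mathcal{W}^T$ is non-decreasing and IC holds for all three timelines. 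The assumption $\underline\theta<\limsup_N\tilde c(N)$ only excludes the degenerate case in which ex-post efficiency always provides the good (so $Q\equiv 1$ trivially). The main technical step is the explicit computation of $\Gamma_i^g$ in the binary-allocation case; the remainder reduces to one-variable monotonicity analysis of the polynomial in $Q$ together with a uniform LLN estimate.
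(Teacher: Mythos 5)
Your proposal is correct and follows essentially the same route as the paper: reduce incentive compatibility to monotonicity of the perceived valuations, compute the realization friction $\Gamma^g(\theta)=Q(\theta)(1-Q(\theta))$ so that $\mathcal{W}(Q)=(1+\mu^g)Q-\Lambda^g Q(1-Q)$ (with $1+\mu^g$ replaced by $1$ for timeline C), analyze the sign of $1+\mu^g-\Lambda^g+2\Lambda^g Q$, and conclude part 3 from a uniform law-of-large-numbers bound showing $Q\to 1$ uniformly. The only cosmetic differences are that you derive $\Gamma^g$ explicitly for the binary allocation and invoke the weak rather than the strong law of large numbers, neither of which changes the argument.
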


Note that in timeline A the model is isomorphic to a classical quasi-linear model where the type spaces are $[(1+\mu^g)\underline{\theta},(1+\mu^g)\bar\theta]$ and the transfers of each agent are scaled down by $\frac{1}{1+\lambda^m\mu^m}$. The same argumenst as in the classical setting deliver all of 1). 

The result of part 2) of Proposition \ref{thm:prop1public} for timelines B, C is an instance of equilibrium non-existence. It is similar to the result in Theorem 1 of \cite{dato} who show lack of existence of equilibrium in the related CPE model.\footnote{CPE is defined in \cite{kr2} and has become popular in the applied behavioral literature. \cite{mr} show that when CPE is interpreted as a static risk preference it leads to choice over lotteries which may violate monotonicity with respect to first-order-stochastic-dominance (FOSD), whenever the loss aversion parameter $\lambda^m$ is high enough. Part 2) of the above Proposition is a result in the same spirit.}  

When $N$ is very large the comparison between average intrinsic valuation $\E[F]$ and average costs scaled by $1+\mu^g$ decides on provision. Due to the law of large numbers then, if the average costs of provision fulfill the inequality in part 3) of the Proposition, the good is provided with high ex-ante probability so that the expected news utility terms from the realization effect weigh less on the decision of the agent. This effect helps preserve incentive compatibility even under relatively high loss aversion.

\subsection{Optimal Symmetric Auctions}\label{sec:multiagenttimelineopt}


In this subsection we focus on symmetric unit auctions.\footnote{The online appendix contains further results: optimal unit auctions for asymmetric agents in timeline A as well as optimal symmetric auctions for timeline B. In both cases we show how news utility changes results and intuitions from the classical setting.} We first define the environment.

Let $\Delta = \{(q_1,q_2,\dots,q_N)\in \R_+^N: \sum_{i=1}^n q_i\leq 1\}$ be the feasible allocations of a single unit of a good to be auctioned off between $N$ bidders.  

A direct mechanism $\mathcal{M}$ for the auction with a fixed timeline $T\in \{A,B,C\}$ is a mapping $$(q_1,\dots,q_N,t_1,\dots,t_N):\Theta\ra \Delta\times\R^n$$ giving as a function of reports for each agent the probability that she gets the good and the payment to the auctioneer.

We make the classical assumption $v_i(q)=q_i$, i.e. the intrinsic value of the good of an agent $i$ is equal to the probability that the good ends up with agent $i$.

To ensure that there are no incentive compatibility issues as in subsection \ref{sec:multiagentexpostpublic} we add a parametric restriction for the news utility parameters to the classical regularity assumption for the virtual valuation of the agents. 
\vspace{2mm}

\textbf{Assumption (A2)} 
\begin{itemize}
\item No Dominance of News Utility in the good dimension:
$$\Lambda^g = \mu^g(\lambda^g-1)\le 1.$$
\item Regularity: The virtual valuation of $F$ given by the function \\$\gamma(t):[\underline{\theta},\bar{\theta}]\ra \R,\gamma(t) = t-\frac{1-F(t)}{f(t)}$ is strictly increasing.
\end{itemize}
\vspace{2mm}
The restriction for the aggregate news utility parameter ensures that no bidder shows preference for a stochastically dominated allocation in the good dimension in timelines A and B.\footnote{Recall the related discussion in \cite{mr}. This parameter restriction also appears in other settings in the applied behavioral literature, such as \cite{herwegetal}.} 

Regularity corresponds to the classical assumption first introduced in \cite{myerson}. It is fulfilled for many natural examples and allows for simple characterizations of the optimal auction.\footnote{For timelines A and B one could use the methods from \cite{toikka} whenever $\Lambda^g>1$. The model here is separable according to the terminology in \cite{toikka} (see section 3 of his paper).} 

\begin{proposition}\label{thm:ic-auctions}
1) Under Assumption (A2) the optimal allocation rule for timelines $A,B$ is the same as the Myersonian rule, that is, given $\theta^*$ such that $\gamma(\theta^*) = 0$ sell to any of agents with the highest type $\theta^m=\max_{i\le n}\theta_i$ as long as $\theta^m\ge \theta^*$.

2) Optimal auctions in timeline B are all-pay. That is, it is optimal to fully insure the bidder against the uncertainty she is facing in the transfers.

3) Optimal auctions in timeline C are usually not all-pay. That is, it may be optimal to not fully insure a positive measure of types against the uncertainty they are facing in the money dimension. Moreover, the optimal threshold type is usually different from the one in timelines A and B. 
\end{proposition}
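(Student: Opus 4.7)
My plan is to reduce each timeline to a one-dimensional allocation problem via the Mirrlees representation in Proposition~\ref{thm:icgeneral}, handle the transfer-side optimization to obtain parts (1) and (2), and then exploit the wedge between the IC and IR perspectives to obtain part (3). In the symmetric unit-auction setup $v_i(q) = q_i$ and $v_i(\emptyset) = 0 = \inf v_i$, so Proposition~\ref{thm:IC} gives $\mathcal{W}_i^A(\theta_i) = (1+\mu^g) V_i(\theta_i)$, $\mathcal{W}_i^B(\theta_i) = (1+\mu^g) V_i(\theta_i) - \Lambda^g \Gamma_i^g(\theta_i)$, and $\mathcal{W}_i^C(\theta_i) = V_i(\theta_i) - \Lambda^g \Gamma_i^g(\theta_i)$. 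Because the allocation is Bernoulli with mean $V_i(\theta_i)$ as $\theta_{-i}$ varies, a direct computation from the realization-effect definition yields $\Gamma_i^g(\theta_i) = V_i(\theta_i)\bigl(1 - V_i(\theta_i)\bigr)$.

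For part (2) (and the transfer-side reduction needed for part (1)), I would observe that once the allocation is fixed, Mirrlees pins down $\Upsilon_i^B$, and then maximizing $\E[T_i]$ under the identity $\Upsilon_i^B = (1+\mu^m)T_i + \Lambda^m(T_i^+ + \omega_i)$ is equivalent to minimizing $T_i^+ + \omega_i$. Since $T_i^+ \ge T_i$ (with equality iff $t_i \ge 0$) and $\omega_i \ge 0$ (with equality iff $t_i$ depends only on own type), the minimum is attained by non-negative all-pay payments, giving $T_i = \Upsilon_i^B/(1+\lambda^m\mu^m)$ and proving (2). The analogous reduction for timeline A (with $\omega_i$ absent) gives $T_i = \Upsilon_i^A/(1+\lambda^m\mu^m)$. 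After integration by parts with IR binding at $\underline{\theta}$, the designer's revenue per bidder in timeline $T\in\{A,B\}$ reads $\frac{1}{1+\lambda^m\mu^m}\,\E\bigl[\gamma(\theta_i)\,\mathcal{W}_i^T(\theta_i)\bigr]$. For timeline A the integrand is linear in $V_i$, so the classical pointwise Myersonian rule is optimal under the regularity in (A2). For timeline B the integrand is $\gamma(\theta_i) V_i(\theta_i)\bigl[(1+\mu^g-\Lambda^g) + \Lambda^g V_i(\theta_i)\bigr]$; under (A2) the bracket is non-negative and increasing in $V_i$, so the integrand is increasing and convex in $V_i$. A rearrangement-style exchange argument---moving allocation mass from a bidder with lower $\gamma$ to one with higher $\gamma$ at any type profile---then shows the Myersonian rule remains optimal, and the monotonicity of $\mathcal{W}_i^B$ under Myersonian follows from monotonicity of $V_i$.

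For part (3), the essential point is the one isolated in Proposition~\ref{thm:IRgeneral}: in timeline C the IC constraint involves $(\mathcal{W}^C, \Upsilon^C)$ while the IR constraint involves $(\mathcal{W}^B, \Upsilon^B)$. Since $\Upsilon_i^C = T_i + \Lambda^m \omega_i$, the IC channel alone would favor $\omega_i = 0$ (all-pay) to maximize $T_i$ for a fixed allocation. However, the IR inequality $(1+\mu^g) V_i(\theta_i)\theta_i - \Lambda^g V_i(\theta_i)(1-V_i(\theta_i))\theta_i \ge (1+\mu^m) T_i(\theta_i) + \Lambda^m(T_i^+(\theta_i) + \omega_i(\theta_i))$ can bind at interior types---where $V_i(1-V_i)$ is large---because under all-pay the Mirrlees-implied $T_i$ can exceed the participation surplus there. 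Introducing variability in transfers ($\omega_i > 0$) substitutes for lowering $T_i$ without affecting $\mathcal{W}_i^C$, relaxing these interior IR constraints and permitting a larger $\mathcal{V}_i^C(\underline{\theta})$ adjustment that raises total revenue. I would make this concrete with a two-bidder uniform example in which, for $\Lambda^g$ near the upper bound in (A2) and $\mu^m$ large, the all-pay schedule has a strictly binding interior IR, and then construct a non-all-pay alternative preserving IC and slackening IR to obtain strictly higher revenue; the shift in the exclusion threshold follows because the zero of the relevant virtual valuation now depends on $\Lambda^g$ and on the $\omega_i$-related slack rather than on $\gamma$ alone. The main obstacle is the exchange step in part (1) for timeline B, since the quadratic functional $V_i^2$ prevents pointwise separation of the designer's program; (A2)'s requirement $\Lambda^g \le 1$ is precisely what makes the rearrangement-style argument go through.
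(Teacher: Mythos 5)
Your overall architecture is the same as the paper's: reduce everything to the perceived valuations and perceived transfers of Proposition \ref{thm:IC}/\ref{thm:icgeneral}, with $\Gamma_i^g(\theta_i)=V_i(\theta_i)(1-V_i(\theta_i))$ for the unit auction (correct), prove all-pay in timeline B by minimizing $T_i^+ + \omega_i$ for a fixed Mirrlees-pinned $\Upsilon_i^B$ (your part (2) argument is essentially the paper's, which it delegates to Theorem 1 of its online appendix, and it is correct: $T_i \le \Upsilon_i^B/(1+\lambda^m\mu^m)$ with equality exactly for deterministic non-negative payments), and for timeline C exploit that IC is stated in the $(\mathcal{W}^C,\Upsilon^C)$ variables while IR is stated in the $(\mathcal{W}^B,\Upsilon^B)$ variables, so that raising $\omega_i$ at IR-binding types substitutes for a lump-sum subsidy paid to everyone. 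The timeline-A case of part (1) via the linear-in-$V_i$ integrand is also fine.

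There are, however, two genuine gaps. First, the exchange step for timeline B in part (1) does not work as stated. The objective per bidder is $\int \gamma(\theta)\,V(\theta)\bigl[(1+\mu^g-\Lambda^g)+\Lambda^g V(\theta)\bigr]dF(\theta)$, which is quadratic in the \emph{interim} allocation; shifting allocation mass at a type profile from a lower-$\gamma$ to a higher-$\gamma$ bidder changes the objective by (roughly) $\gamma(\theta_i)\bigl[a+2\Lambda^g V_i(\theta_i)\bigr]-\gamma(\theta_j)\bigl[a+2\Lambda^g V_j(\theta_j)\bigr]$, and for an arbitrary candidate rule the second term can dominate when $V_j(\theta_j)$ is large and $V_i(\theta_i)$ small, so a pointwise rearrangement is not automatically an improvement. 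Making this rigorous requires a global argument over feasible reduced forms (Border-type feasibility plus a majorization or extreme-point argument, or verification of sufficient conditions at the Myerson rule); the paper obtains this from a more general result (Proposition 6 of its online appendix). Relatedly, $\Lambda^g\le 1$ is not "precisely what makes the rearrangement go through": its role in Assumption (A2) is to guarantee monotonicity of the perceived valuations (for timeline C one needs $\Lambda^g\le 1$; for B, $\Lambda^g\le 1+\mu^g$ already suffices), i.e., incentive compatibility of the Myersonian allocation. Second, part (3) as you present it is a plan, not a proof. Since the claim is existential ("may be optimal not to insure a positive measure of types", threshold "usually" different), it must be closed either by an explicit example or by the characterization the paper derives in its appendix: with $c$ the lowest type's reporting-stage utility, IC forces $c+\Lambda^m\omega(s)+T(s)=h(s)$ and IR becomes $c\ge s_{m,\hat\theta}(s)-\tfrac{\lambda^m\mu^m}{1+\lambda^m\mu^m}\Lambda^m\omega(s)$, whence the optimal friction is $\Lambda^m\omega(s)=\tfrac{1+\lambda^m\mu^m}{\lambda^m\mu^m}\max\{s_{m,\hat\theta}(s)-c,0\}$ and the mechanism is all-pay iff $c\ge\max_s s_{m,\hat\theta}(s)$; the paper then shows numerically (its Example 3, $F=uniform([1,2])$, two bidders) that the optimal $c$ falls short of this maximum and that the threshold solves a different program than $\gamma(\theta^*)=0$. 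Your qualitative trade-off is the right one, but you still need to carry out that final optimization (or an explicit example) to establish the claim.
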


An indirect optimal auction is an auction with a reservation price and which follows the respective timeline. In the case of timeline B it is additionally all-pay: each bidder has to pay her bid. Bidders who would never bid above the reservation price bid zero and pay zero.

Part 1) in the case of timeline A follows closely the standard classical proof for symmetric unit auctions.\footnote{See e.g. chapter 3 of \cite{boergers}.} The case of timeline $B$ follows immediately from (the more general) Proposition 6 in the online appendix.

Part 2) follows from the general all-pay result for timeline B which is proven in the online appendix (see Theorem 1 there). A related result is known in the classical literature on Expected Utility agents: optimal revenue maximizing auctions feature degenerate transfers whenever an agent's Bernoulli utility is separable in the consumption and money dimension and the agent is risk averse w.r.t. money (see \cite{maskinriley}).

In timeline C the auctioneer has an additional variable he can use to give incentives  for truth-telling in the reporting stage: the expected news utility terms in the money dimension given by $\omega(\theta)$. Its usage comes at a cost as \emph{ceteris paribus} an $\omega(\theta)>0$ lowers the interim equilibrium utility in the reporting stage for type $\theta$. Thus, when compared to timeline A, timeline C besides the advantage that any expected payments are not shaded down as in timeline A because of the missing surprise effect in the money dimension, it has the additional advantage of having one more choice variable for the auctioneer. Timeline A has the advantage of lacking the negative expected news utility term from the realization effect, which implies that there is no need to subsidize individual rationality as it is usually necessary in timeline C. We show numerically that in many parameter constellations for timeline C the auctioneer decides to make use of $\omega(\theta)$, i.e. leaves some of the types with risk in the money dimension so as to help incentive compatibility in the reporting stage. The next example illustrates this optimal distortion in the money dimension.


\paragraph{Example 3.} Consider an auction with two symmetric agents who satisfy the following assumptions: 
$\lambda^g = 1.2, \mu^g=1, \mu^m\lambda^m = 1$ and distribution of intrinsic type $F= uniform([1,2])$. Note that the lowest intrinsic type has strictly positive utility from getting the good.

Figure \ref{fig:fricC} depicts the optimal distortion in the money dimension for the case of timeline C.

\begin{figure}[ht!]\label{fig:fricC}
  \centering
    \includegraphics[width=0.35\textwidth]{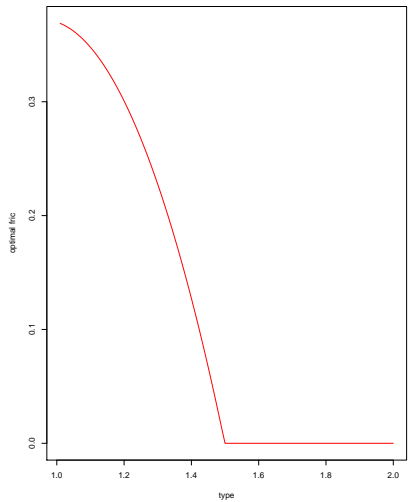}
      \caption{Friction in the money dimension as a function of type for Example 4.}
\end{figure}

Here, optimal fric is the following map as a function of types,

\[
[1,2]\ni\theta\ra c_m\omega(\theta),
\]
where $c_m= \frac{\lambda^m\mu^m}{1+\lambda^m\mu^m}\Lambda^m$ is a normalizing constant depending only on $\lambda^m,\mu^m$. Note that the distortion in the money dimension is decreasing and that there is no distortion at the top.\footnote{The monotonicity in distortion is not a general feature as other numerical exercises show (available upon request). For example, an inverse-U shaped distortion is possible if $F = uniform([0,1])$: auctioneer distorts only intermediate types in the money dimension. The no-distortion at the top property seems to be fulfilled in many numerical examples.}

Another feature of optimal auctions in timeline C is that now the threshold type the auctioneer uses to decide whether to sell the unit at all is different from the classical Myerson one $\gamma(\theta^*)$. The auctioneer has to weigh different effects in timeline C: types have to be subsidized for participating because of the wedge between incentive compatibility and individual rationality and she has to decide which subset of types to fully insure in the money dimension so as to maximize incentives of truthtelling after a positive participation decision. As news utility parameters are varied but so as to keep $\Lambda^g\le 1$ satisfied, the optimal balance of these effects may result for a fixed $F$ fulfilling regularity in both a higher or lower threshold than the classical threshold $\theta^*$ (which satisfies $\gamma(\theta^*) = 0$).

\subsubsection{Timeline Optimality}

In this subsection we assume the timeline is a choice variable of the auctioneer and consider its optimality.

\begin{theorem}
\label{thm:optimaltimelineresults}
1) Timeline A always dominates timeline B in terms of revenue maximization. 

2) There is no uniform ranking of timelines A and C in terms of revenue maximization. 
\end{theorem}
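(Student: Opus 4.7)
\emph{Part 1 (A weakly dominates B).} The plan is to leverage Proposition \ref{thm:ic-auctions}(1), which states that the revenue-maximizing allocation for both timelines A and B is the Myersonian rule $q^*$. Since the allocation is the same, revenues differ only through the payments they support. By Proposition \ref{thm:ic-auctions}(2) the optimal mechanism in B is all-pay, and the same standard argument (loss aversion in the money dimension implies aversion to random transfers at the reporting stage) yields an all-pay optimum in A as well, so the money friction $\omega_i$ vanishes and $T_i^+ = T_i$ in both timelines. Because $v(\emptyset)=0$ and the Myersonian rule does not allocate to type $\underline{\theta}$, the individual rationality constraint binds at the bottom with $\mathcal{V}_i^A(\underline{\theta})=\mathcal{V}_i^B(\underline{\theta})=0$. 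Using the Mirrlees representation of Proposition \ref{thm:icgeneral}, expected revenue reduces in each timeline to
\[
R^T=\frac{1}{1+\mu^m\lambda^m}\sum_i \E_{\theta_i}\!\left[\mathcal{W}_i^T(\theta_i)\theta_i-\int_{\underline{\theta}}^{\theta_i}\mathcal{W}_i^T(s)\,ds\right].
\]
Since $\mathcal{W}_i^A-\mathcal{W}_i^B=\Lambda^g\Gamma_i^g\ge 0$, swapping the order of integration in the rent term gives
\[
R^A-R^B=\frac{\Lambda^g}{1+\mu^m\lambda^m}\sum_i\int_{\underline{\theta}}^{\bar\theta}\Gamma_i^g(s)\,f(s)\,\gamma(s)\,ds,
\]
where $\gamma(s)=s-(1-F(s))/f(s)$ is the Myersonian virtual valuation. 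By construction of $q^*$, agent $i$ is allocated only when $\gamma(\theta_i)\ge 0$; and whenever agent $i$ is outside the sell region, $V_i$ is constant in $\theta_i$ so $\Gamma_i^g$ vanishes there. Hence the integrand is the product of three non-negative factors on its effective support, and $R^A\ge R^B$.

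\emph{Part 2 (no uniform ranking of A and C).} Here I would argue by explicit example, building on the structure suggested in the text preceding the theorem: take two symmetric bidders with $F=\operatorname{uniform}([1,2])$ so that $\underline{\theta}=1>0=v(\emptyset)$, fix $\lambda^g=1.2,\mu^g=1$, and treat $\lambda^m\mu^m$ as a free parameter. The condition $\underline{\theta}>v(\emptyset)$ ensures that even the lowest type strictly wants to participate, which keeps the lump-sum subsidy $f$ required in timeline C bounded. Using the program in Proposition \ref{thm:ic-auctions}(3) (optimal C auctions generally employ a strictly positive money friction $\omega$ for a range of types) together with Myerson's formula for A, I would solve both programs in closed form in this specification.

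For small $\lambda^m\mu^m$, the shading factor $1/(1+\mu^m\lambda^m)$ in timeline A is close to one, so revenue in A is close to the classical Myerson benchmark, while timeline C still incurs a strictly positive subsidy and friction cost; the explicit comparison yields $R^A>R^C$. For large $\lambda^m\mu^m$, every unit of payment in A is shaded by the large factor $1/(1+\mu^m\lambda^m)$, whereas in timeline C the surprise effect in the money dimension is absent and the auctioneer can additionally condition payments on $\theta_{-i}$ to tighten incentive compatibility; closed-form evaluation shows $R^C>R^A$ once $\lambda^m\mu^m$ exceeds a threshold. Since revenues are continuous in the news-utility parameters, the sign of $R^A-R^C$ changes, so no uniform ranking exists.

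\emph{Anticipated difficulties.} The delicate step in Part 1 is justifying that the IR-binding constant $\mathcal{V}^T(\underline{\theta})$ is the same across A and B (so that the comparison reduces to the virtual-valuation identity), and verifying that $\Gamma_i^g$ vanishes on the non-sell region of the Myersonian rule. The main obstacle in Part 2 is solving the timeline-C program explicitly: it is a calculus-of-variations problem with an endogenous lump-sum subsidy $f$ and a choice of the friction profile $\omega(\cdot)$, and one must verify that the no-ironing regularity implicit in Assumption (A2) holds for the candidate solution, using the recipe developed for timeline C in the screening section.
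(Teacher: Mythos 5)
Part 1 is correct and takes essentially the paper's route: reduce optimal revenue in both timelines to a common positive multiple of $\sum_i\int_{\underline{\theta}}^{\bar\theta}\mathcal{W}^T_i(\theta_i)\gamma(\theta_i)\,dF(\theta_i)$ under the Myersonian allocation with IR binding at $\underline{\theta}$, use $\mathcal{W}^A_i\ge\mathcal{W}^B_i$ pointwise, and note that both perceived valuations vanish below the cutoff $\theta^*$ so the negative-$\gamma$ region contributes nothing. Two small repairs: the reason $\Gamma^g_i$ vanishes below $\theta^*$ is not that $V_i$ is constant in own type, but that $q_i(\theta_i,\theta_{-i})\equiv 0$ for \emph{all} $\theta_{-i}$ there, so the agent faces no residual uncertainty in the good dimension; and in timeline A no "all-pay" argument is needed (the term $\omega_i$ never enters $\Upsilon^A_i$) — what you actually need is $t_i\ge 0$ so that $T^+_i=T_i$ and the common shading factor $1/(1+\lambda^m\mu^m)$ appears in both timelines.

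Part 2 is where the gap is. You correctly choose the paper's own specification ($N=2$, $F=\mathrm{uniform}([1,2])$, $\lambda^g=1.2$, $\mu^g=1$, varying $\lambda^m\mu^m$), and the low-friction half of the argument is fine ($R^A$ approaches an unshaded, $(1+\mu^g)$-boosted Myerson benchmark while C forgoes the surprise boost and pays subsidy and friction costs). But the decisive claim — that $R^C>R^A$ once $\lambda^m\mu^m$ is large — is only asserted via a "closed-form evaluation" that is never carried out and is not routine: the timeline-C problem is a minimization over the lump-sum $c$ and the friction profile $\omega(\cdot)$ of an integral of a pointwise maximum, nested inside an outer choice of the cutoff type, and the paper itself does not solve it in closed form but evaluates it numerically. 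Your heuristic "no shading in C" ignores that timeline C also loses the $(1+\mu^g)$ surprise term in the good dimension and must pay the IR subsidy plus the expected-realization cost $\Lambda^m\omega$; whether the net effect beats A's $1/(1+\lambda^m\mu^m)$-shaded revenue is precisely what must be computed, and the answer is parameter-sensitive — the paper notes that with $F=\mathrm{uniform}([0,1])$ timeline A remains optimal for the whole range. So to complete the proof you must actually solve, or numerically evaluate, the C program for this specification and exhibit the crossing; the limiting intuition alone does not establish it.
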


When compared to timeline A, timeline B features the same news utility effects except for an additional negative realization effect coming from the negative expected news utility in the good dimension. This lowers \emph{ceteris paribus} for timeline B the willingness to pay of a bidder so that timeline A is always preferred for revenue maximization.

The second part is proven by example.

\paragraph{Example 4.} We take the same data as Example 4 with the only difference that now we don't fix $\mu^m\lambda^m$ but instead vary it in the interval $[1,2]$. We look at timelines $A, C$. 

\begin{figure}[H]\label{fig:auctionAvsC}
  \centering
    \includegraphics[width=0.28\textwidth]{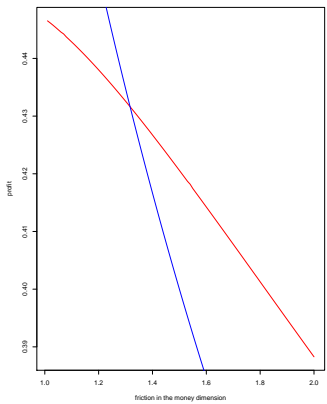}
      \caption{Auction revenues for timelines A (blue) and C (red) for friction in the money dimension in the range $[1,2]$.}
\end{figure}

When $\mu^m\lambda^m$ is low the payments of the agent in timeline A are not shaded as much due to the negative surprise effect in the money dimension. This, and the fact that timeline C features a negative realization effect as well as a possible lump-sum subsidy for participation yield optimality of timeline A whenever $\mu^m\lambda^m$ is low. When $\mu^m\lambda^m$ is high the advantages of timeline C come to bear: there is no shading of payments due to the negative surprise effect in the money dimension and there is an additional (albeit costly) variable which can be used to give incentives in the reporting stage. While it is true that individual rationality may need a lump-sum subsidy, the subsidy is on average small because the lowest possible intrinsic type $(\theta = 1)$ loses a lot in terms of intrinsic utility in case of non-participation, her outside option being zero overall utility. In fact, numerical results available upon request show that the optimal timeline in this example is indeed A if we change the distribution of intrinsic types in Example 4 to $F= uniform([0,1])$.

\section{Conclusions}

In this paper we have considered agents whose preferences are sensitive to changes of beliefs and additionally feature loss aversion. Under the assumption that agents are sophisticated about future behavior we have characterized features of optimal mechanism design in two different models: one where a monopolist screens a single agent according to their loss aversion parameter and one where multiple agents face uncertainty regarding the intrinsic types of the other agents. 
This work can be extended along different directions. 

For simplicity we have assumed that there is no discounting of time. This makes two of the possible timelines equivalent for all purposes. Relaxing that assumption is a fruitful didactic exercise as it would give a more complete picture for the characterization of optimal timeline design.


A major assumption to relax in our model is sophistication. Allowing for naive or partially naive agents in our setting will eventually lead to changes regarding timeline optimality as well as changes in the features of optimal mechanisms. To see how, consider timeline C and assume that the self who decides about participation assumes erroneously that the self who decides about play will stick to her optimal plans. This will imply that the participation-decision self will behave the same in timelines B and C. Knowing this, she won't ask for a lump-sum subsidy in order to participate as was the case under the sophistication assumption. \emph{Ceteris paribus} this lowers the implementation costs for timeline C for the designer. We conjecture that in the case of naive agents timeline C becomes optimal in many more cases than it does in the case of sophisticated agents. 

We haven't considered the case where different agents may be in different timelines or the case where the timeline is not common knowledge for all agents at the start of the game. Moreover, we haven't solved for the optimal timeline in the case of auctions with asymmetric agents. These non-trivial extensions are left for future research.

Finally, a new strand of literature started by papers like \cite{bester} and \cite{maskinmoore} considers designers who don't have full commitment. Relaxing the full commitment assumption in our setting is a very interesting topic left for future research.

\begin{appendices}

\section{Auxiliary Results}\label{sec:a-auxiliary}

\paragraph{A result on expected \emph{future} news utility terms.}
Consider first a one dimensional model. Here we skip the indices $j=g,m$ for simplicity. 

Whenever $G$ is a degenerate distribution corresponding to getting $r$ with probability one, we write for the news utility term comparing the degenerate distribution to  $H$ $\mathcal{N}(r|H)$. It holds

\[
\mathcal{N}(r|H) =\E_{z\sim H}[\xi(r-z)],
\]
where $\xi$ is a piecewise linear gain-loss valuation function as in \eqref{eq:valkt}.
The \emph{expected news utility from the realization of} $H$ is denoted by $-\omega(H)$ and it holds

\[
\omega(H) = \E_{r\sim H}[\mathcal{N}(r|H)].
\]

The following technical Lemma is easy to prove.\footnote{Note that a similar result has been proven in the CPE setting in \cite{eisenhuth}. See Lemma 1 there. Since the proof of our Lemma follows word-for-word his argument we skip it here.}

\begin{lemma}\label{thm:expectedNU} It holds
$$\omega(H) = \Lambda \int \int_{\{z>w\}}(z-w) dH(z)dH(w).$$ 
$\omega(H)$ is nonnegative and equal to zero if and only if $H$ is a degenerate distribution. Moreover, whenever $H$ is supported in the non-negative numbers it holds 

$$\omega(H) \leq \Lambda \E[H].$$
\end{lemma}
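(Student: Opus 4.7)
The plan is to expand the definition of $\omega(H)$ via Fubini and exploit the symmetry of the product measure $H\otimes H$ to collapse the gain and loss contributions from $\xi$ into the single integral displayed in the lemma. Since the degenerate distribution at $r$ has quantile function constantly equal to $r$, the percentile-per-percentile formula for $\mathcal{N}$ reduces to $\mathcal{N}(r|H) = \mu\int \xi(r-z)\, dH(z)$. Taking expectation over $r\sim H$ and adjusting the sign (so that $\omega(H)$ denotes the magnitude of the expected realization news utility) gives
\[
\omega(H) \;=\; -\mu\int\!\!\int \xi(r-z)\, dH(r)\, dH(z).
\]

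Next, I would split this double integral along $\{r>z\}$ and $\{r<z\}$ (the diagonal is null), using $\xi(y)=y$ on the former and $\xi(y)=\lambda y$ on the latter. Write $A:=\int\!\!\int_{\{z>w\}}(z-w)\, dH(z)\, dH(w)$. Relabeling dummy variables and invoking Fubini, $\int\!\!\int_{r>z}(r-z)\, dH(r)\, dH(z) = A$, while $\int\!\!\int_{r<z}(r-z)\, dH(r)\, dH(z) = -A$. Summing yields $\int\!\!\int \xi(r-z)\, dH(r)\, dH(z) = A - \lambda A = -(\lambda-1)A$, and multiplying by $-\mu$ produces $\omega(H) = \mu(\lambda-1)A = \Lambda A$, which is the asserted identity.

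The remaining claims are short. Nonnegativity is immediate because $\Lambda>0$ and the integrand $(z-w)\mathbf{1}_{\{z>w\}}$ is pointwise nonnegative. For the equality case, if $H$ is degenerate the product measure is concentrated on the diagonal and $A=0$; conversely, if $H$ is nondegenerate it has two distinct support points $a<b$, and small disjoint neighborhoods $U_a,U_b$ of positive $H$-mass contribute at least $H(U_a)\,H(U_b)\cdot\inf_{(z,w)\in U_b\times U_a}(z-w)>0$ to $A$, so $\omega(H)>0$. Finally, when $\mathrm{supp}(H)\subset[0,\infty)$, dropping the nonnegative $w$ and bounding $H([0,z))\le 1$ gives
\[
A\;\le\;\int z\,H([0,z))\,dH(z)\;\le\;\int z\, dH(z)\;=\;\E[H],
\]
so $\omega(H)\le \Lambda\,\E[H]$.

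The only mildly subtle step is the relabeling that produces the $(1-\lambda)$ collapse; beyond that, everything is routine bookkeeping, and I anticipate no real obstacle. The structure of this argument mirrors the analogous Lemma~1 in Eisenhuth's CPE auction paper cited in the footnote.
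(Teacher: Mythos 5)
Your proof is correct, and it is exactly the argument the paper has in mind: the paper omits the proof entirely, deferring to Lemma~1 of \cite{eisenhuth}, whose computation is the same quantile-reduction, symmetric split of $H\otimes H$ along $\{r>z\}$ and $\{r<z\}$, and relabeling that you carry out. You also correctly resolve the paper's sign slip (treating $\omega(H)$ as the magnitude of the nonpositive expected realization news utility) and keep the $\mu$ factor consistent with $\Lambda=\mu(\lambda-1)$, so nothing is missing.
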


We have considered a model, where the uncertainty is one-dimensional. Given the separability assumptions we make in Subsection \ref{sec:prefs} this result implies immediately the proof of part 1) in Proposition \ref{thm:oneagenthelp}. 

\paragraph{Image of news utility terms as a function of distributions.} The following is a characterization for the image of the pair $(\omega(H),\E[H])$ as a function of the one-dimensional distribution $H$. 

\begin{lemma}\label{thm:lmhelpmixed}
For every element $(x,y)\in \R_{+}\times\R$ there exists a binary distribution $H=L(p,b,d) = p\delta_{b}+(1-p)\delta_d$ with $b\geq d$ and $p\in (0,1)$ such that $(\omega(H),\E[H]) = (x,y)$. 
\end{lemma}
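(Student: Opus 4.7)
The plan is to exploit the fact that a binary distribution has three free parameters $(p,b,d)$ while the target $(x,y)$ imposes only two constraints, so the existence should reduce to an explicit construction after fixing one degree of freedom.

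First I would use Lemma \ref{thm:expectedNU} to compute $\omega(H)$ for $H = L(p,b,d)$ with $b \ge d$. When $b > d$, the only pair $(z,w)$ in the support with $z > w$ is $(b,d)$, occurring with product probability $p(1-p)$, so
\[
\omega(L(p,b,d)) = \Lambda\, p(1-p)(b-d).
\]
And of course $\E[L(p,b,d)] = pb + (1-p)d$. The task then becomes: given $(x,y) \in \R_+ \times \R$, find $p \in (0,1)$ and $b \ge d$ solving $\Lambda p(1-p)(b-d) = x$ and $pb + (1-p)d = y$.

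Next I would fix the free parameter by setting $p = 1/2$, which is the natural choice since it maximizes $p(1-p)$ and decouples the two equations cleanly. The system then collapses to $b - d = 4x/\Lambda$ and $b + d = 2y$, yielding the explicit solution
\[
b = y + \frac{2x}{\Lambda}, \qquad d = y - \frac{2x}{\Lambda}.
\]
I would verify that $b \ge d$ holds because $x \ge 0$, and that $p = 1/2 \in (0,1)$ trivially. The boundary case $x = 0$ produces $b = d = y$, i.e.\ a degenerate distribution, which the lemma allows since the statement permits $b = d$ in $L(p,b,d)$.

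I do not anticipate any real obstacle here: the result is essentially a parameter-counting observation combined with the closed form of $\omega$ on binary distributions coming from Lemma \ref{thm:expectedNU}. The only mild subtlety is to note that the construction degenerates to a point mass exactly when $x = 0$, consistent with the equivalence statement in Lemma \ref{thm:expectedNU} that $\omega(H) = 0$ if and only if $H$ is a point mass.
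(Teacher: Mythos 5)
Your proof is correct and follows essentially the same route as the paper: compute $\omega$ on a binary lottery via Lemma \ref{thm:expectedNU}, fix $p=\tfrac12$, and solve the resulting linear system for $b$ and $d$, with the $x=0$ case handled by a point mass at $y$. If anything, you are slightly more careful than the paper's own proof, which writes the system without the factor $\Lambda$ (implicitly normalizing it to one), whereas your explicit $b = y + 2x/\Lambda$, $d = y - 2x/\Lambda$ keeps it.
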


\begin{proof}
If $x=0$ then just pick $H= \delta_{y}$. So let's focus on the case $x>0$ which corresponds to $b>d$. Then for a binary lottery, the system of equations we have to solve is 

\[
\begin{cases}
d+p(b-d) = y,\\
p(1-p)(b-d) = x.
\end{cases}
\]

But this is clearly solvable in $p,b,d$. Pick for example $p=\frac{1}{2}$, which leads to $b=y+2x, d=y-2x$. 
\end{proof}

This Lemma shows that the designer in the multi-agent model can use loss aversion in the money dimension to give additional incentives for truth-telling once the agents are locked-in after a positive participation decision. Indeed, for a fixed agent $i$ and type $\theta_i$ in the auction model the distributions $H$ correspond to the distributions on $\R$ generated as an image distribution of $t_i(\theta_i,\theta_{-i})$ under the product measure $F^{N-1}(\theta_{-i})$. Note that because of our assumptions this measure has full-support on $[\underline{\theta},\bar\theta]^{N-1}$. It follows that one can induce any Borel measure $H$ which is absolutely continuous w.r.t. the Lebesgue measure by appropriate choice of the (deterministic) payment functions $t_i$. Note that the expected news utility term $\omega$ here corresponds to $M$ in the screening model which was exogenously given there. In the multi-agent model $\omega$ becomes endogenous. 

\paragraph{An abstract Incentive Compatibility characterization}
\begin{proposition}\label{thm:icabstract}
A direct mechanism $\mathcal{M}$ as in \eqref{eq:directM} is incentive compatible if and only if 
the corresponding perceived valuations $\mathcal{W}_i$ are non-decreasing.
\\\indent If this is the case, we have the following Mirrlees representation for the interim utility in equilibrium $\mathcal{V}_i(\theta_i)= \mathcal{V}_i(\theta_i|\theta_i)$
\begin{equation}\label{eq:mirrleesnewsdelay}
\mathcal{V}_i(\theta_i) = \mathcal{V}_i(\underline{\theta}_i|\underline{\theta}_i) +  \int_{\underline{\theta}_i}^{\theta_i} \mathcal{W}_i(s)ds.
\end{equation}
It also holds 
\begin{equation}\label{eq:upsilondelay}
\Upsilon_i(\theta_i) = \mathcal{W}_i(\theta_i)\theta_i - \mathcal{V}_i(\underline{\theta}) - \int_{\underline{\theta}_i}^{\theta_i} \mathcal{W}_i(s)ds.
\end{equation}

\end{proposition}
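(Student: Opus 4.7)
The plan is to leverage the fact that Proposition \ref{thm:IC} has already reduced the type-reporting problem to a one-dimensional quasilinear screening problem: the decision utility $\mathcal{V}_i(\hat\theta_i|\theta_i) = \mathcal{W}_i(\hat\theta_i)\theta_i - \Upsilon_i(\hat\theta_i)$ is affine in the true type $\theta_i$ with slope determined by the perceived valuation $\mathcal{W}_i(\hat\theta_i)$. Consequently, the classical Myerson/Mirrlees argument transposes verbatim, with $\mathcal{W}_i$ playing the role of the expected allocation and $\Upsilon_i$ the role of the expected payment.

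For the necessity of monotonicity, I would fix $\theta_i > \theta_i'$ in $[\underline{\theta},\bar\theta]$ and add the two incentive compatibility inequalities obtained by having $\theta_i$ deviate to $\theta_i'$ and vice versa. The cross-terms in $\Upsilon_i$ cancel and one is left with $(\mathcal{W}_i(\theta_i) - \mathcal{W}_i(\theta_i'))(\theta_i - \theta_i') \geq 0$, which forces $\mathcal{W}_i$ to be non-decreasing. For the Mirrlees representation, I would note that $\mathcal{V}_i(\theta_i) := \mathcal{V}_i(\theta_i|\theta_i) = \sup_{\hat\theta_i} \bigl(\mathcal{W}_i(\hat\theta_i)\theta_i - \Upsilon_i(\hat\theta_i)\bigr)$ is a supremum of affine functions of $\theta_i$, hence convex; in particular it is absolutely continuous and differentiable almost everywhere. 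The envelope theorem (or equivalently, direct bounding via IC) gives $\mathcal{V}_i'(\theta_i) = \mathcal{W}_i(\theta_i)$ at every point of differentiability, and integrating from $\underline{\theta}_i$ yields \eqref{eq:mirrleesnewsdelay}. The formula \eqref{eq:upsilondelay} for $\Upsilon_i$ is then immediate by solving $\mathcal{V}_i(\theta_i) = \mathcal{W}_i(\theta_i)\theta_i - \Upsilon_i(\theta_i)$ for $\Upsilon_i(\theta_i)$.

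For sufficiency, I would start with a non-decreasing $\mathcal{W}_i$ and define $\Upsilon_i$ by \eqref{eq:upsilondelay} (for an arbitrary choice of the constant $\mathcal{V}_i(\underline{\theta}_i)$). Then for any pair $\theta_i, \hat\theta_i$ a short computation rewrites the IC slack as
\[
\mathcal{V}_i(\theta_i|\theta_i) - \mathcal{V}_i(\hat\theta_i|\theta_i) = \int_{\hat\theta_i}^{\theta_i}\bigl(\mathcal{W}_i(s) - \mathcal{W}_i(\hat\theta_i)\bigr)\,ds,
\]
which is non-negative whenever $\mathcal{W}_i$ is non-decreasing (both for $\theta_i > \hat\theta_i$ and for $\theta_i < \hat\theta_i$, since the sign of the interval and of the integrand flip together). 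This closes the equivalence.

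The main obstacle is simply handling the generality of a non-decreasing $\mathcal{W}_i$ that need not be continuous, but since monotone functions are Riemann-integrable and a.e. differentiable, and the envelope characterization only needs a.e. differentiability, nothing substantive beyond the classical argument is required. The delicate conceptual point, already done by Proposition \ref{thm:IC}, is that the news-utility pieces have been absorbed into $\mathcal{W}_i$ and $\Upsilon_i$ so that the reporting-stage preferences genuinely take the quasilinear product form needed for the Myerson/Mirrlees machinery to apply unchanged.
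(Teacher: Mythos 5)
Your proposal is correct and is exactly the route the paper intends: the paper's own proof consists of the remark that the result is a ``trivial adaptation'' of the classical quasilinear Mirrlees/Myerson argument (citing B\"orgers), and you have simply written that adaptation out -- necessity of monotonicity by adding the two IC inequalities, the envelope/convexity argument for the representation of $\mathcal{V}_i$, and sufficiency via the integral identity with $\Upsilon_i$ pinned down by \eqref{eq:upsilondelay} up to the constant $\mathcal{V}_i(\underline{\theta}_i)$. The only caveat, which mirrors an imprecision already present in the paper's own statement, is that the ``if'' direction is really an implementability claim (transfers must be chosen according to \eqref{eq:upsilondelay}), and your treatment handles this in the same way the paper does when it later invokes the result.
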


The proof is a trivial adaptation of the proof of the Mirrlees representation in the classical quasilinear utility model. See \cite{boergers} for the classical proof. 

\section{Proofs for Section \ref{sec:screening}}

We start with the analysis for timeline B as that for timeline A is very similar, yet simpler. Analysis for timeline C is more involved due to the wedge between individual rationality and incentive compatibility.

\subsubsection{Analysis for timeline B}
Let $U(s) = (1+s)[\Gamma^B(s)v(q(s))-t(s)]=(1+s)W(s)$ be the utility of buyer of type $s$ in an incentive compatible mechanism. Then individual rationality is tantamount to 
\begin{equation*}
U(s)\geq 0 \text{ equivalent to } W(s)\geq 0,\quad \bar\lambda\geq s\geq 1.
\end{equation*}
Just as in the classical setting (see chapter 2 of \cite{boergers}) one can show that $W$ is differentiable a.e. with
\begin{align*}
W'(s) &= \frac{U'(s)}{1+s}-\frac{U(s)}{(1+s)^2}\\
& = \frac{m-m-M}{1+s}v(q(s))-\frac{t(s)}{1+s}-\frac{-t(s)+\Gamma^B(s)v(q(s))}{1+s}\\&= -\frac{2(m+M)v(q(s))}{(1+s)^2}.
\end{align*}
Here we have used the envelope theorem for the optimization problem of the agent. 
For any incentive compatible mechanism it follows with the Mirrlees representation that
\[
W(s)-W(\bar\lambda) = \int_{\bar\lambda}^sW'(t)dt = 2(m+M)\int^{\bar\lambda}_s\frac{v(q(t))}{(1+t)^2}dt.
\]
We can therefore write 
\begin{equation*}
t(s) = \Gamma^B(s)v(q(s))-2(m+M)\int^{\bar\lambda}_s\frac{v(q(t))}{(1+t)^2}dt-W(\bar\lambda),\quad \lambda_0\geq s\geq 1.
\end{equation*}
The fact that $W$ is decreasing for any incentive compatible mechanism implies that individual rationality is fulfilled if and only if $W(\bar\lambda)\geq 0$. An optimal mechanism will have $W(\bar\lambda) = 0$. 

In all, the profit function is 
\begin{equation*}
\Pi = \int_{1}^{\bar\lambda}\left[ \Gamma^B(s)v(q(s))-2(m+M)\int^{\bar\lambda}_s\frac{v(q(t))}{(1+t)^2}dt-cq(s)\right]G(ds).
\end{equation*}
The problem of the monopolist consists of maximizing this expression w.r.t.\footnote{`w.r.t.' means `with respect to'.} $q$ non-increasing. Doing the usual Fubini transformation for the double integral one gets
\begin{equation*}
\Pi = \int_{1}^{\bar\lambda}\left[ \Psi^B(s)v(q(s))-cq(s)\right]G(ds),
\end{equation*}
with 
\[
\Psi^B(s) = \Gamma^B(s)-2(m+M)\frac{G(s)}{(1+s)^2g(s)}.
\]
$\Psi^B(s)$ is the virtual valuation in this model. If it is non-increasing, then one can maximize the integrand point-wise to get the optimal solution. In that case one can calculate the FOC
\[
\mu\left[\Psi(s)v'(q(s)) - c\right] = 0,
\]
where $\mu$ is a Kuhn Tucker parameter, which is zero, as long as $\Psi^B(s)\leq 0$ and otherwise it is positive. 

The corresponding assumption of regularity from the classical setting (see \cite{myerson} or \cite{boergers}) is the following. 

\paragraph{Regularity for timeline B}\footnote{This just corresponds to the derivative of $\psi^B$ being non-positive.}

\begin{equation}\label{eq:reffi}
\frac{4m+7M}{2(m+M)} \ge  \frac{G(\lambda)}{g(\lambda)}\left[\frac{2}{1+\lambda}+\frac{g'(\lambda)}{g(\lambda)}\right],\quad\lambda \in [1,\bar\lambda]. 
\end{equation}

If $\Psi$ is not non-increasing, then one can use ironing techniques from \cite{toikka} to solve for bunching. Conditions in his paper are fulfilled since the model here is \emph{separable} according to his terminology (see section 3 of his paper).

\subsubsection{Analysis for timeline A}

The calculations for timeline A are virtually the same as for timeline B, except that the terms involving $M$ are missing because there is no realization effect in timeline A. The formal analysis is word-for-word the same as in timeline B except that now we have to set $M=0$ everywhere in the calculations.  

In particular, the regularity assumption for timeline A doesn't depend on the features of the distribution of intrinsic values $F$.
\paragraph{Regularity for timeline A}

\begin{equation*}
2 \ge  \frac{G(\lambda)}{g(\lambda)}\left[\frac{2}{1+\lambda}+\frac{g'(\lambda)}{g(\lambda)}\right],\quad\lambda \in [1,\bar\lambda]. 
\end{equation*}
Finding the optimal mechanism proceeds the same way as for timeline B. 

\subsubsection{Analysis for timeline C}



\begin{proof}
[Proof of Proposition \ref{thm:negativef}, Part 1.]

\emph{Step 0.} Assume in this preliminary step that the growth condition on $v$ implies that $$|x-y|\le const|v(x)^p-v(y)^p|,$$ where $const$ in the following will be a non-specified positive constant number which may change from line to line and always so that the respective inequality holds. It follows then that 
\[
|v^{-1}(x)-v^{-1}(y)|\le c|x^p-y^p|.
\]
This in turn implies that the functional 
\begin{equation}\label{eq:helpi}
L^p([1,\bar\lambda])\ni f\ra \int_{1}^{\bar\lambda} v^{-1}(f(s))ds 
\end{equation}
is continuous in the $L^p$-norm. The simple argument for this uses the fact that the function $[0,\infty)\ni x\ra x^{\frac{1}{p}}$ is H\"older continuous with exponent $\frac{1}{p}$. 

\paragraph{Proof of $|x-y|\le const|v(x)^p-v(y)^p|$} This follows simply by Taylor's expansion of order one applied to $x\ra v(x)^p$ and the growth condition on $v$. 
\vspace{2mm}\\
\emph{Step 1.} Since the virtual type $\Gamma^C(\lambda)=m+(1-\lambda)M$ is decreasing in $\lambda$, just as for the timeline B, it follows

\[
\text{Incentive compatibility is equivalent to } q(\cdot) \text{ non-decreasing}.
\]
By Mirrlees representation of the optimal payments for this situation, payments $t(\lambda)$ which ensure incentive compatibility are given up to a constant by the incentive compatibility  requirement of the decision-play self who has already decided to participate in the mechanism. We denote by $U(\lambda)$ the play-decision utility in equilibrium of an incentive compatible mechanism (this is determined in the second period). It is given by 

\[
U(\lambda) = \Gamma^C(\lambda)v(q(\lambda))-t(\lambda).
\]
The envelope theorem from the maximization problem of the agent gives 
\[
U'(\lambda) = -Mv(q(\lambda)). 
\]
We can use this to write 

\[
U(\lambda) = \Gamma^C(\lambda)v(q(\lambda))-t(\lambda)=U(1) - M\int_{1}^{\lambda}v(q(s))ds. 
\]
We solve for $t(\lambda)$ to write for a constant $f=-U(1)\in\R$. 
 
\begin{equation}\label{eq:transferC-screening}
t(\lambda) = f+ \Gamma^C(\lambda)v(q(\lambda))+M\int_1^{\lambda}v(q(s))ds. 
\end{equation}

\paragraph{Fact:} $t(\cdot)$ is weakly decreasing in any incentive compatible mechanism. 

This can be easily established through taking (one-sided) derivatives. The calculation uses Assumption (S) extensively.

The \textbf{Fact} has an important implication for the optimal mechanism whenever it exists: \emph{$t(\lambda)\ge 0$ always in any optimal mechanism for timeline C.} Proof of this implication is through contradiction. Suppose, for the sake of contradiction that there is an incentive compatible and individually rational mechanism which is profit-optimal for timeline C and so that $t(\lambda_0)<0$ for some $\lambda_0$. The Fact implies that $t(\lambda)<0$ for every $\lambda\ge \lambda_0$. But then the monopolist can switch to $(q(\lambda),t(\lambda)) = (0,0)$ for all $\lambda\ge\lambda_0$ (i.e. exclude all types with $\lambda\ge\lambda_0$). This preserves incentive compatibility and also individual rationality as the former is equivalent to $q$ non-increasing (which remains intact) and the latter depends only on the own type.\footnote{Thus, types outside $[\lambda_0,\bar\lambda]$ retain individual rationality under the changed mechanism and types in $[\lambda_0,\bar\lambda]$ are excluded from the mechanism and thus get their outside option of zero utility.} Thus there is a threshold type $\hat\lambda\in[1,\bar\lambda]$ so that the monopolist sets $(q(\lambda),t(\lambda)) = (0,0)$ whenever $\lambda>\hat\lambda$. 
\vspace{2mm}\\
\emph{Step 2.} The proof of this step doesn't need the growth condition on $v$. Assume in this step that the threshold type $\hat\lambda$ is already given. As one can see easily, the arguments of this step hold the same for every particular value 
$\hat\lambda\in (1,\bar\lambda]$.

Given the \emph{$t(\lambda)\ge 0$} - property for any optimal mechanism, at the participation stage (before the first delay), if the agent of type $\lambda$ accepts the mechanism the agent experiences the following utility (see also discussion in main text):

\[
V(\lambda) = \left[(2m+(1-\lambda)M\right]v(q(\lambda))-(1+\lambda)t(\lambda),
\]
or using the definitions from timeline B
\begin{equation}\label{eq:transfff}
V(\lambda) =  (1+\lambda)\Gamma^B(\lambda)v(q(\lambda))-(1+\lambda)t(\lambda).
\end{equation}
Replacing \eqref{eq:transferC-screening} into \eqref{eq:transfff} we get

\begin{equation*}
V(\lambda)=(1+\lambda)\left[\Gamma^B(\lambda)-\Gamma^C(\lambda)\right]v(q(\lambda))-(1+\lambda)f-(1+\lambda)M\int_1^{\lambda}v(q(s))ds. 
\end{equation*}
It follows that the individual rationality requirement can be written as

\[
\left[\Gamma^B(\lambda)-\Gamma^C(\lambda)\right]v(q(\lambda))-f-M\int_1^{\lambda}v(q(s))ds\geq 0,\quad \lambda\in [1,\hat\lambda]. 
\]



The objective function of the designer becomes

\begin{equation}\label{eq:R-Cscreening}
\int_1^{\hat\lambda}t(s)-cq(s)dG(s) = f\cdot G(\hat\lambda)+\int_1^{\hat\lambda} \left[\Gamma^C(\lambda)v(q(\lambda))+M\int_1^{\lambda}v(q(s))ds\right]-cq(s)dG(\lambda).
\end{equation}
After the usual application of Fubini's Theorem this expression turns into 

\begin{equation*}
f\cdot G(\hat\lambda) + \int_1^{\hat\lambda} \left\lbrace\left[\Gamma^C(\lambda)+ M\frac{G(\hat\lambda)-G(\lambda)}{g(\lambda)}\right]v(q(\lambda))-cq(\lambda)\right\rbrace dG(\lambda).  
\end{equation*}

Thus the problem of the designer for timeline C can be rewritten as follows

\begin{align}
\begin{split}\label{eq:R-Cscreening21}
\max_{f\in\R,s\mapsto q(s)}& \quad f\cdot G(\hat\lambda) + \int_1^{\hat\lambda} \left\lbrace\left[\Gamma^C(\lambda)+ M\frac{G(\hat\lambda)-G(\lambda)}{g(\lambda)}\right]v(q(\lambda))-cq(\lambda)\right\rbrace dG(\lambda)\\&
\text{s.t. }\\& (1)\text{ } q(s)\text{ is non-increasing,}\\& (2)\text{ }
\left[\Gamma^B(\lambda)-\Gamma^C(\lambda)\right]v(q(\lambda))-f-M\int_1^{\lambda}v(q(s))ds\geq 0,\quad \lambda\in [1,\hat\lambda].
\end{split}
\end{align}

But note that by taking $f$ in \eqref{eq:R-Cscreening} into the integral\footnote{$f\cdot G(\hat\lambda) = \int_1^{\hat\lambda}fdG(\lambda)$.} and replacing the constraint (2) we get an upper bound for the profit. Namely, the maximum of the following integral

\begin{equation}\label{eq:upperbd}
\int_1^{\hat\lambda} \Gamma^B(\lambda)v(q(\lambda))-cq(\lambda) dG(\lambda),
\end{equation}
with respect to all non-increasing $q$. This program is easily solvable through point-wise maximization given our constraints.\footnote{Recall that $g$ is bounded away from zero on $[1,\bar\lambda]$.} In in all we have an upper bound for the program in \eqref{eq:R-Cscreening2}. The point-wise maximum of \eqref{eq:upperbd} has the following property:

\begin{itemize}
\item[] (!) $\infty>q(\lambda)>0$ whenever $\Gamma^B(
\lambda)>0$ and $q(\lambda) = 0$ otherwise. This means the problem in \eqref{eq:upperbd} has a finite value. 

In particular, it is not optimal in \eqref{eq:upperbd} to set $q$ so that $\limsup_{\lambda\ra 1^+}q(\lambda) = +\infty$, i.e. $q$ is not locally unbounded near $1$. That would only be profitable if the optimized value in \eqref{eq:R-Cscreening2} were infinity, which contradicts (!). This implies that $v(q(\lambda)),\lambda\in [1,\hat\lambda]$ remains bounded for any potential solution of \eqref{eq:R-Cscreening2}. 
\end{itemize}

Moreover, it is easy to see from \eqref{eq:R-Cscreening2} and (!) that 
\begin{itemize}
\item[] (!!) $f$ is bounded and finite in any potential optimum. 
\end{itemize}
 
Introduce now for a given nondecreasing $q(\cdot)$ 

\begin{equation}\label{eq:helpu}
u(\lambda) = f+M\int_1^{\lambda}v(q(s))ds.
\end{equation}

This results in $u$ being continuous, non-decreasing and concave, but with free endpoint $u(1) = f$. Moreover, for every $u$ non-decreasing and concave there exists a non-increasing $q(\cdot)$ and a $f\in\R$ such that $u$ can be expressed as in \eqref{eq:helpu}. We make use of this equivalence considerably in what follows. 

Inserting \eqref{eq:helpu} in \eqref{eq:R-Cscreening} we get that each feasible allocation $q(\cdot)$ for the program \eqref{eq:R-Cscreening2} results in a feasible `allocation' $u(\cdot)$ for the following maximization program. 

\begin{align}
\begin{split}\label{eq:R-Cscreening3}
\max_{s\mapsto u(s)}& \quad \int_1^{\hat\lambda} \left\lbrace\Gamma^C(\lambda)\frac{u'(\lambda)}{M}+u(\lambda)-cv^{-1}(u'(\lambda))\right\rbrace dG(\lambda)\\&
\text{s.t. }\\& (1)\text{ } u(s)\text{ is continuous, non-decreasing, concave}\\& (2)\text{ }
\left[\Gamma^B(\lambda)-\Gamma^C(\lambda)\right]\frac{u'(\lambda)}{M}-u(\lambda)\geq 0,\quad a.e.\quad\lambda\in [1,\hat\lambda].
\end{split}
\end{align}

The other direction is also trivially true: whenever we have a solution $u$ to \eqref{eq:R-Cscreening3} we get a solution to \eqref{eq:R-Cscreening21} by the relations $q(s) = v^{-1}\left(\frac{u'(s)}{M}\right)$, $t(s) = u(s) + \Gamma^C(s)\frac{u'(s)}{M}$ for $s\le \hat\lambda$ and $q(s) = 0, t(s) = 0$ otherwise. Note that it is w.l.o.g. to ask for constraint $(2)$ to hold a.e. This is because $G$ is absolutely continuous w.r.t. the Lebesgue measure on $[1,\bar\lambda]$. 

Because of (!) and (!!) we can assume that $u'(1)$ and $u(1)$ are bounded for all feasible candidates of \eqref{eq:R-Cscreening3}. In combination with constraint $(1)$ in \eqref{eq:R-Cscreening3} (continuity and concavity) this implies that we can restrict the class of $u$ in the maximization in \eqref{eq:R-Cscreening3} by requiring additionally the following condition.\footnote{$u'$ is bounded from below by monotonicity. If it were optimal to have $\limsup_{s\ra 1}u'(s) = +\infty$ we would get a contradiction to (!) since $u'(s) = Mv(q(s))$.}

(3) functions $u$ are bounded in the Sobolev p-norm: $u\mapsto \left(\int_1^{\hat\lambda}|u(\lambda)|^pd\lambda+\int_1^{\hat\lambda}|u'(\lambda)|^pd\lambda\right)^{\frac{1}{p}}$ by a constant $c>0$. Here recall that $p> 1$. The constant $c$ is chosen to be independent of $\hat\lambda$ given its boundedness and the upper and lower bounds for the optimal profit of timeline C. Moreover they are also bounded in the essential supremum-norm by a uniform bound holding for all $u$ under consideration. 
\vspace{2mm}\\
Denote by $\mathcal{C}(\hat\lambda)$ the class of functions satisfying $(1), (2), (3)$ where $(1)$ and $(2)$ are only required to hold almost everywhere.

Note then the easy-to-show but important facts: $\mathcal{C}(\hat\lambda)$ is convex and bounded in the Sobolev-p norm. This follows from the linearity of the two constraints and of the Sobolev norm, as well as the requirement (3). 

$\mathcal{C}(\hat\lambda)$ is closed in the Sobolev-p norm.\footnote{\label{fn:ftnote}Note that there is a natural continuous and onto embedding $\iota(\hat\lambda, \hat\lambda'):\mathcal{C}(\hat\lambda')\ra \mathcal{C}(\hat\lambda)$ whenever $\hat\lambda\le \hat\lambda'$. It is given by restricting the definition domain of the functions.
} To show this assume that we have a sequence $u_n, n\ge 1$ which satisfy $(1), (2), (3)$ and which converge in Sobolev-p norm to $u$. Then that $(2), (3)$ hold true for $u$ is easy to check: convergence in $L^p$ implies convergence of a suitable subsequence a.e.; this implies that $u$ is non-decreasing and concave $[1,\hat\lambda]$ which means that it is also continuous in $(1,\hat\lambda)$; $(1),(2)$ are then automatically fulfilled; $(3)$ is fulfilled due to the definition of convergence in the Sobolev-p norm. 


Since $\mathcal{C}(\hat\lambda)$ is closed and bounded it is sequentially compact in the weak topology induced by Sobolev p-norm. This is because the space of Sobolev functions equipped with the Sobolev p-norm is reflexive and one can apply the Banach-Alaoglu Theorem which implies sequential compactness w.r.t. weak topology for all bounded, closed sets.\footnote{See e.g. section 3.15 of \cite{rudin}.}


The functional being maximized has the form 
\[
J(u) = \int_1^{\hat\lambda}L(\lambda,u(\lambda),u'(\lambda))d\lambda
\]
with $L(\lambda, u,w) = g(\lambda)\left(\Gamma^C(\lambda)\frac{w}{M}+u-cv^{-1}(w)\right)$ for $(\lambda,u,w)$ coming from a bounded set of $\R^3$ (determined by the above discussion) and otherwise flattens to zero continuously for all $(\lambda,u,w)$ outside this bounded set. 
$L$ is then a continuous function and satisfies 
\[
L(\lambda,u,w)\leq a(\lambda)w +b(\lambda) + c|u|,
\]
with some bounded functions $a,b>0$. 
Finally, $w\mapsto L(\lambda,u,w)$ is strictly concave since $v^{-1}$ is strictly convex. By Theorem 3.23, pg. 96 in \cite{dacorogna} it follows that $J$ is sequential upper semi-continuous w.r.t. the weak topology. 


Because of upper-semicontinuity of $J$ and compactness of the set $\mathcal{C}(\hat\lambda)$ (all in the same topology), we know that maximizing $J$ over $\mathcal{C}(\hat\lambda)$ has a solution. The solution is a Sobolev function, i.e. it can be chosen to be continuous everywhere and differentiable almost everywhere.

\emph{Step 2.} We now show how to calculate the optimal $\hat\lambda\in [1,\bar\lambda].$ For this, it suffices to show that the value function of the maximization problem \eqref{eq:R-Cscreening3} is continuous w.r.t. $\hat\lambda\in [1,\bar\lambda]$. 

Note that $[1,\bar\lambda]\ni\hat\lambda\ra\mathcal{C}(\hat\lambda)$ is a continuous, compact-valued correspondence in $\hat\lambda$. Here we are considering all $\mathcal{C}(\hat\lambda)$ embedded in the `largest' space $\mathcal{C}(\bar\lambda)$, all equipped with the Sobolev-p norm. Upper hemicontinuity follows by checking directly its sequential characterization. Lower hemicontinuity follows just as easily because of the continuous and onto embedding $\iota(\hat\lambda, \hat\lambda'):\mathcal{C}(\hat\lambda')\ra \mathcal{C}(\hat\lambda)$ introduced in footnote \ref{fn:ftnote}. Finally, note that the objective functional in \eqref{eq:R-Cscreening3} is jointly continuous in $(\hat\lambda,u)$ where $[1,\bar\lambda]\times (Sobolev-p)([1,\bar\lambda])$ is equipped with the product topology of the euclidean one and the one of the Sobolev-p norm. This follows because of the H\"older inequality combined with the continuity of the map in \eqref{eq:helpi}. 

\end{proof}

\begin{proof}[Proof of Proposition \ref{thm:negativef}, part 2)]
This follows from the fact that $\Gamma^B(\lambda)-\Gamma^C(\lambda) = \frac{1-\lambda}{1+\lambda}(m-\lambda M)$. $\Gamma^B(\lambda)-\Gamma^C(\lambda)$ is zero for $\lambda = 1$ and negative for all small $\lambda>1$. It follows from the second constraint in \eqref{eq:R-Cscreening2} that $f\leq 0$ in any optimal menu of contracts. If it were true that $q(1) = 0$ then it would follow that $q\equiv 0$, i.e. the profit would be zero. Thus, whenever the profit is not zero it means that for some range near $\lambda=1$ the monopolist offers positive amounts of the good. It follows that $f<0$ must hold. 
\end{proof}

\begin{proof}[Proof of Theorem \ref{thm:AbeatsBscreening}] 
That $A$ is weakly better than $B$ follows directly from the fact that the virtual valuation $\Psi^A$ in timeline A is pointwise higher than the virtual valuation $\Psi^B$ in timeline B and that otherwise the set of incentive compatible and individually rational mechanisms is the same for both timelines. 

That B is weakly better than C follows from Proposition \ref{thm:negativef}. Namely, we show there during the existence proof for timeline C that:

- the profit in timeline C is always bounded by the profit in timeline B, leaving out the individual rationality requirement in timeline B (proof of part 1 of Proposition \ref{thm:negativef}). The \emph{optimal} IC mechanism in timeline B is automatically IR. 

- Individual rationality in timeline C weakly costs something positive to the monopolist whenever profits in C are strictly positive (part 2) of Proposition \ref{thm:negativef}).

\end{proof}

\section{Proofs for Section \ref{sec:multiagent}}

\begin{proof}[Proof of Proposition \ref{thm:IC}.] Straightforward algebra calculations show the formulas for \eqref{eq:mirrleesgeneral} hold for both the case when $v_i(\emptyset)$ is the minimum and the case when it is the maximum of $v_i$ with the definitions made in the main text. We have just split payments $t_i(\theta_i,\cdot)$ into the range where it is strictly positive and the range where it is negative. The positive range is multiplied by $\lambda_i^m$. A similar split is done for the good dimension, but now it is the negative range which is multiplied by $\lambda_i^g$. One uses Lemma \ref{thm:expectedNU} and algebra.
The same steps can be made for the terms $\omega_i$. 

\end{proof}

\begin{proof}[Proof of Proposition \ref{thm:icgeneral}]
Here we use Proposition \ref{thm:icabstract} in a first step which ensures the existence of the perceived payments. 

The proof is finished if we show the existence of the payment schedules $t_i$ for fixed $\Upsilon_i$-s delivered by the proof in the classical setting. Some $t_i$-s that do the trick are the following: set first $t_i(\theta_i,\theta_{-i}) = t_i(\theta_i,\theta_{-i}'),\forall \theta_i,\theta_{-i},\theta_{-i}'$. This implies $T_i(\theta) = t_i(\theta,\theta_{-i}), \forall \theta,\theta_{-i}$ and $T_i^+(\theta) = t_i(\theta,\theta_-i)\textbf{1}_{\{t_i(\theta,\theta_-i)\geq 0\}} =T_i(\theta)\textbf{1}_{\{T_i(\theta)\geq 0\}}$. Then, $\omega_i(\theta)= 0$. If $\Upsilon_i(\theta)=0$ then set $T_i(\theta)=0$, otherwise if $\Upsilon_i(\theta)>0$ set $T_i(\theta) = \frac{\Upsilon_i(\theta)}{1+\lambda_i^m\mu_i^m}$, while if $\Upsilon_i(\theta)<0$ then set $T_i(\theta) = \frac{\Upsilon_i(\theta)}{1+\mu_i^m}$. Note that $T_i(\theta)$ are increasing in $\Upsilon(\theta_i)$, albeit with a discontinuity at zero. This finishes the proof, since the Mirrlees representation also follows by the same arguments as in the classical proof with quasilinear utilities. In general, if one wants a mechanism with non-trivial $\omega_i$-s one uses Lemma \ref{thm:lmhelpmixed}. In timeline B, where the $T_i^+$ terms matter, we show in the online appendix that for optimal mechanisms it holds $\omega_i\equiv 0$. It follows that optimal mechanisms in all of our applications below will have $t_i\ge 0$, so that $T_i^+ =T_i$.\footnote{For timelines A and C this is clear from Proposition \ref{thm:IC}.}  
\end{proof}

\begin{proof}[Proof of Proposition \ref{thm:prop1public}] In the following we suppress the timeline superscript whenever the argument is valid for all timelines or it is clear from the proof context to which timeline the statements correspond. 

1)-2) The case of timeline A is clear from the discussion in text. We show the result for timeline $B$. The proof for timeline C is similar. 

First note that $q$ being symmetric we can write $Q_i(\theta)=Q(\theta)$ and thus $\mathcal{W}_i =\mathcal{W}$ for each $i$. Moreover, $q$ being monotone increasing in each $\theta_i$, the same follows for $Q$. Indeed, we have $Q(\theta) = 1- F^{*(N-1)}(N\tilde{c}(N)-\theta)$. Taking $\mathcal{W}$ as a function of Q, i.e. $\mathcal{W}(Q) = (1+\mu^g)Q-\Lambda^gQ(1-Q)$ we have 
\begin{equation*}
\frac{d\mathcal{W}}{dQ} = 1+\mu^g -\Lambda^g+ 2\Lambda^gQ.
\end{equation*}
This derivative is always nonnegative for $\Lambda^g\leq 1+\mu^g$, which establishes 1).

For $\Lambda^g>1+\mu^g$, substituting the formula for $Q$ into the derivative above, we see that $\mathcal{W}$ is nondecreasing if and only if
\begin{equation}\label{eq:pubhelp1}
\frac{1}{2}(\frac{1+\mu^g+\Lambda^g}{\Lambda^g})\geq F^{*(N-1)}(N\tilde{c}(N)-\theta) \quad \text{for all }\theta \in [\underline{\theta},\overline{\theta}].
\end{equation}
The left hand side of \eqref{eq:pubhelp1} is smaller than $1$, if the premise of 2) is fulfilled. Also the condition is more likely to be violated as $\tilde{c}(N)$ approaches $\overline{\theta}$ and as $F$ becomes more and more concentrated in lower $\theta$-s. Note also that the upper bound on the left of \eqref{eq:pubhelp1} is smaller the higher $\Lambda^g$ is. This establishes 2).

3) Note first the bound $F^{*(N-1)}(N\tilde{c}(N)-\theta) \leq F^{*(N-1)}(N\tilde{c}(N)-\underline{\theta})$, which is uniform in $\theta$. We can write 
\begin{equation*}
F^{*(N-1)}(N\tilde{c}(N)-\underline{\theta}) = \mathbb{P}\large(\frac{\sum_{i=1}^{N-1}\theta_i}{N-1}\leq \frac{N\tilde{c}(N)-\underline{\theta}}{N-1}\large)
\end{equation*}
Now we only need to note that due to the \emph{Strong Law of Large Numbers} \footnote{See for example chapter 2 of \cite{durrett}.} we know that $\frac{\sum_{i=1}^{N-1}\theta_i}{N-1}$ converges almost surely to $\mathbb{E}[F]$. Meanwhile, $\frac{N\tilde{c}(N)}{N-1}$ has possible limit points all strictly smaller than $\mathbb{E}[F]$ due to assumption. Thus we have that as $N\rightarrow \infty$ the left hand side of \eqref{eq:pubhelp1} converges to zero uniformly in $\theta$. 
\end{proof}

\begin{proof}
[Proof of Theorem \ref{thm:optimaltimelineresults}, 1)]
We know the rule for the optimal direct mechanism reads \emph{`give the good to the bidder $i$ with the highest $\theta_i$, as long as $\theta_i>\theta^{*}$'}. Individual rationality is realized in both timelines by setting the utility of the lowest types in equilibrium equal to zero. Recalling that in both cases the revenue in the optimal mechanism is given by 

\begin{equation*}
\sum_{i=1}^N \int_{\underline{\theta}}^{\bar{\theta}}\mathcal{W}^{A/B}_i(\theta_i)\gamma(\theta_i)dF(\theta_i), 
\end{equation*}

one sees that the fact $\mathcal{W}^{A}_i(\theta_i)\geq \mathcal{W}^{B}_i(\theta_i),$ together with the fact that in both cases the optimal allocation rules prescribe $\mathcal{W}^{A/B}_i(\theta_i) = 0,$ whenever $\theta_i<\theta^*$, the result follows.
\end{proof}

\paragraph{Analysis for timeline C in the case of symmetric auctions.}

We assume throughout i.i.d. intrinsic types and that $\Lambda^g\leq 1$. 

Given this, the symmetric equilibrium of any incentive compatible auction has a threshold type: sell to the highest type above a threshold. This follows from Proposition \ref{thm:icgeneral} and the fact that it is without loss of generality to look at symmetric allocation rules. The reason for the latter is the same as in classical setting (see footnote 11 in \cite{maskinriley}). It follows for incentive compatible allocation rules that the expected probability of getting the good when own type is $s$ has the following form.

\[
Q(s) = F(s)^{n-1}\textbf{1}_{\{s\geq\hat\theta\}}. 
\]

Given this, we can then build the perceived valuation 

\[
\mathcal{W}(s) = Q(s)(1-\Lambda^g(1-Q(s))),
\]
as well as define the auxiliary functions 

\[
h(s) = \mathcal{W}(s)s - \int_{\underline{\theta}}^s\mathcal{W}(t)dt.
\]

Define furthermore 

\[
g(s) = \mathcal{W}(s)s + \mu^gQ(s)s. 
\]
Note that $g$ and $h$ depend on the threshold $\hat\theta$. They are both zero below $\hat\theta$. Moreover, both functions are weakly increasing due to IC and the assumption that $\Lambda^g\le 1$. Denote in the following by $c$ the decision utility of the agent with the lowest type $\underline{\theta}$.

\textbf{Claim.} An incentive compatible and individually rational mechanism in the timeline C is a tuple $(\hat\theta,\omega,c)$ s.t. 

\[
(IC)\quad c+\Lambda^m\omega(s) +T(s) = h(s),\quad s\ge \hat\theta
\]
\[
(IR)\quad c\geq s_{m,\hat\theta}(s)-\frac{\lambda^m\mu^m}{1+\lambda^m\mu^m}\Lambda^m\omega(s), \quad s\ge \hat\theta
\]
where $s_{m,\hat\theta}(s) = h(s) - \frac{1}{1+\lambda^m\mu^m}g(s)$. 

\begin{proof}[Proof of Claim]
These follow directly by using Propositions \ref{thm:icgeneral} and \ref{thm:IRgeneral}. (IC) is immediate whereas the individual rationality requirement can be written first as 

\begin{equation}\label{eq:ircontrolp}
g(\theta)\geq (1+\lambda^m\mu^m)T(\theta) + \Lambda^m\omega(\theta),\quad s\ge \hat\theta
\end{equation}
which is then easily manipulated into (IR). 
\end{proof}
Note that (IC) together with Assumption (A2) implies that $\Lambda^m\omega(s) +T(s)$ is increasing in $s$. Thus, types with $s<\hat\theta$ don't get served at all and pay nothing whereas types $s\ge\hat\theta$ may receive a (net) subsidy of $c$. 

$s_{m,\hat\theta}$ is a piece-wise smooth function with at most one discontinuity of uniformly bounded size across all incentive compatible mechanisms.\footnote{It is increasing in the money friction $\lambda^m\mu^m$. Its difference to $h$ disappears uniformly as $\lambda^m\mu^m\ra\infty$.}

Due to symmetry and the above we can rewrite the maximization problem of the designer as 
\[
\max_{c\in\R,(T,\omega):\Theta_i\ra\R\times\R_{+}} \int_{\underline{\theta}}^{\bar\theta} T(\theta)dF(\theta),\quad s.t.\quad  (IC)\text{ and }(IR)\quad \forall \theta.
\]



The objective function of the problem for timeline C is 

\[
\int_{\hat\theta}^{\bar\theta}h(t)dF(t) - \int_{\underline{\theta}}^{\bar\theta} (c+\Lambda^m\omega(s))dF(s). 
\]

We can split the maximization problem in two parts. Once a threshold $\hat\theta$ has been chosen, the rest of the mechanism is found by solving 

\[
\min_{c,\omega(\cdot)\geq 0} \int_{\hat\theta}^{\bar\theta} (c+\Lambda^m\omega(s))dF(s),
\]

under the constraint

\[
(IR)\quad c\geq s_{m,\hat\theta}(s)-\frac{\lambda^m\mu^m}{1+\lambda^m\mu^m}\Lambda^m\omega(s),\quad s\ge \hat\theta.
\]
 

Given any $c$, one can see that the optimal $\omega$ has to satisfy

\[
\Lambda^m\omega(s) = \frac{1+\lambda^m\mu^m}{\lambda^m\mu^m}\max\{s_{m,\hat\theta}(s)-c,0\},\quad s\ge \hat\theta
\]

Note that payments are degenerate for all types (all-pay) if and only if $c\geq \max_{s}s_{m,\hat\theta}(s)$. Otherwise, the optimal mechanism is not all-pay and some of the types will not be fully insured in the money dimension.  

Overall, given a threshold type $\hat\theta$ the rest of the mechanism is determined by solving 
\[
\min_{c\in \R}\frac{1+\lambda^m\mu^m}{\lambda^m\mu^m}\int_{\hat\theta}^{\bar\theta}\max\{s_{m,\hat\theta}(s)-\frac{1}{1+\lambda^m\mu^m}c,\frac{\lambda^m\mu^m}{1+\lambda^m\mu^m}c\}dF(s)- F(\hat\theta)\max\{-\frac{1}{\lambda^m\mu^m}c,c\}
\]
Note that the presence of the last term in the minimization problem will usually make for a non-smooth solution as one varies the parameter $\lambda^m\mu^m$. Nevertheless, the value function of this minimization problem is a continuous function of $\hat\theta$.\footnote{The conditions for Berge's maximum theorem are given because $c$ can be taken to be bounded without loss of generality, whenever $\lambda^m\mu^m$ comes from a bounded interval.} Denote the value function of this problem by $H(\hat\theta)$. The optimal threshold then solves 

\[
\max_{\hat\theta\in[\underline{\theta},\bar\theta]}\int_{\hat\theta}^{\bar\theta}h(t)dF(t) - H(\hat\theta).  
\]
The maximand is a continuous function being maximized over a compact interval. Therefore there always exists a solution so that the discussion above implies that there always exists an optimal auction for timeline C.

\paragraph{Optimal Timelines}

Given that timeline B is never optimal the comparison is between timelines A and C. 

Recall that the revenue from one auction participant in timeline A is 

\[
\frac{1+\mu^g}{1+\lambda^m\mu^m}\int_{\underline{\theta}}^{\bar\theta} \left(Q(s)s-\int_0^sQ(t)dt\right)dF(t)=\frac{1+\mu^g}{1+\lambda^m\mu^m}\int_{\theta^*}^1Q(s)\gamma(s)ds 
\]
Here, $\theta^*$ satisfies $\gamma(\theta^*) = 0$. The recipe for the optimal revenue in the case of timeline C is given in the preceding paragraph. 

The rest of the work for the examples is numerical analysis using the software package R.

\end{appendices}

\end{document}